\theoremstyle{plain}
\newtheorem{theorem}{Theorem}
\newtheorem{lemma}[theorem]{Lemma}
\newtheorem{proposition}[theorem]{Proposition}
\newtheorem{claim}[theorem]{Claim}
\theoremstyle{remark}
\newtheorem*{remark}{Remark}
\theoremstyle{definition}
\newtheorem{definition}[theorem]{Definition}
\title{Set membership with non-adaptive bit probes}
\author[1]{Mohit Garg\footnote{A part of this work was done when the first author was at the Tata Institute of Fundamental Research.}}
\author[2]{Jaikumar Radhakrishnan}
\affil[1]{Tokyo Institute of Technology, Tokyo,
	\texttt{garg.m.aa@m.titech.ac.jp}}
\affil[2]{Tata Institute of Fundamental Research, Mumbai,
	\texttt{jaikumar@tifr.res.in}}
\date{}
\newcommand{\lab}{\mathrm{lab}}
\newcommand{\pseudoedge}[1]{\stackrel{#1}{\longrightarrow}}
\newcommand{\upseudoedge}[1]{\stackrel{#1}{\longleftrightarrow}}
\newcommand{\floor}[1]{\left\lfloor #1 \right\rfloor}
\newcommand{\ceil}[1]{\left\lceil #1 \right\rceil}
\newcommand{\E}{\mathbb{E}}
\begin{document}
\maketitle
\begin{abstract}
We consider the non-adaptive bit-probe complexity of the set
membership problem, where a set $S$ of size at most $n$ from a
universe of size $m$ is to be represented as a short bit vector in
order to answer membership queries of the form ``Is $x$ in $S$?''  by
\textit{non-adaptively} probing the bit vector at $t$ places. Let
$s_N(m,n,t)$ be the minimum number of bits of storage needed for such
a scheme. Buhrman, Miltersen, Radhakrishnan, and
Srinivasan~\cite{BMRV2002} and Alon and Feige~\cite{AF2009}
investigated $s_N(m,n,t)$ for various ranges of the parameter $t$. We
show the following.

\emph{General upper bound ($t\geq5$ and odd):} For odd $t\geq 5$,
$s_N(m,n,t)=O(t m^{\frac{2}{t-1}} n^{1-\frac{2}{t-1}} \lg
\frac{2m}{n})$. This improves on a result of Buhrman {\em et al.}
that states for odd $t\geq5$, $s_N(m,n,t) = O(m^{\frac{4}{t+1}}n)$.
For small values of $t$ (odd $t \geq 3$ and $t \leq
  \frac{1}{10} \lg\lg m$) and $n \leq m^{1-\epsilon}$ ($\epsilon
  > 0$), we obtain adaptive schemes that use a little less space: $O(\exp(e^{2t})m^{\frac 2 {t+1}} n^{1 -
    \frac 2 {t+1}} \lg m)$.

    \emph{Three probes ($t=3$) lower bound:} We show that
$s_N(m,n,3)=\Omega(\sqrt{mn})$ for $n\geq n_0$ for some constant
$n_0$.  This improves on a result of Alon and Feige that states that
for $n\geq 16\lg m$, $s_N(m,n,3)=\Omega(\sqrt{\frac {mn}{\lg
    m}})$. The complexity of the non-adaptive scheme might, in
principle, depend on the function that is used to determine the answer
based on the three bits read (one may assume that all queries use the
same function). Let $s_N^f(m,n,3)$ be the minimum number of bits of
storage required in a three-probe non-adaptive scheme where the
function $f:\{0,1\}^3 \rightarrow \{0,1\}$ is used to answer the
queries. We show that for large class of functions $f$ (including the
majority function on three bits), we in fact have
$s_N(m,n,3)=\Omega(m^{1-\frac{1}{cn}})$ for $n\geq 4$ and some $c>0$.
In particular, three-probe non-adaptive schemes that use such query
functions $f$ do not give any asymptotic savings over the trivial
characteristic vector when $n \geq \log m$.
\end{abstract}

\section{Introduction}

The \emph{set membership problem} is a fundamental problem in the area
of data structures and information compression and retrieval. In its
abstract form we are given a subset $S$ of size at most $n$ from a universe of
size $m$ and required to represent it as a bit string so that
membership queries of the form ``Is $x$ in $S$?'' can be answered
using a small number of probes into the bit string. The characteristic
function representation provides a solution to this problem: just one
bit-probe is needed to answer queries, but all sets are represented
using $m$-bit strings (which is very wasteful when $n$ is promised to
be small).

The trade-off between the number of bits in the representation and the
number of probes is the subject of several previous works: it was
studied by Minsky and Papert in their 1969 book {\em Perceptrons}~\cite{MP1969}; more
recently, Buhrman, Miltersen, Radhakrishnan and Venkatesh~\cite{BMRV2002} 
showed the existence of randomized schemes that answer queries with just one bit probe and use near optimal space.
In contrast, they showed that deterministic schemes that answer queries by making a constant number of probes cannot use optimal space.
The deterministic worst-case trade-off for this problem was also considered in the same paper and in several
subsequent works (e.g., Radhakrishnan, Raman and Rao~\cite{RRR2001},
Alon and Feige~\cite{AF2009},
Radhakrishnan, Shah and Shannigrahi~\cite{RSS2010},
Viola~\cite{V2012},
Lewenstein, Munro, Nicholson and Raman~\cite{LMNR2014}, 
Garg and  Radhakrishnan~\cite{GR2015}).
For sets where each element is included with probability $p$,
Makhdoumi, Huang, M\'{e}dard and Polyanskiy~\cite{MHMP2015} showed, in
particular, that no savings over the characteristic vector can be
obtained in this case for non-adaptive schemes with t = 2.

In this work, we focus on deterministic schemes with \emph{non-adaptive} probes,
where the probes are made in parallel (or equivalently the location of
probes do not depend on the value read in previous probes). Such
schemes have been studied in several previous works. Let $s_N(m,n,t)$
be the minimum number of bits of storage required in order to answer
membership queries with $t$ non-adaptive probes. 

\begin{definition}
A non-adaptive $(m,n,s,t)$-scheme consists of a storage function and a
query scheme. The storage function has the form $\phi: {[m] \choose
\leq n} \rightarrow \{0,1\}^s$ that takes a set of size at most $n$
  and returns its $s$-bit representation. The query scheme associates
  with each element $x$ the $t$ probe locations $(i_1(x),\ldots,
  i_t(x)) \in [s]^t$ and a function $f_x: \{0,1\}^t \rightarrow
  \{0,1\}$. We require that for all $S \in {[m] \choose {\leq n}}$ and
  all $x \in [m]$: $x \in S$ iff
  $f_x(\phi(S)[i_1(x)],\phi(S)[i_2(x)],\ldots,\phi(S)[i_t(x)]) =
  1$. Let $s_N(m,n,t)$ denote the minimum $s$ such that there is an
  $(m,n,s,t)$-scheme. 
\end{definition}
In our discussion, we use $s(m,n,t)$ (without the subscript $N$) to denote the minimum space required for adaptive schemes. 
Using the above notation, we now describe
our results and their relation to what was known before. All asymptotic
claims below hold for large $m$.

\subsection{General non-adaptive schemes}\label{sec:generalResults}

\begin{theorem}[Result 1, non-adaptive schemes]
\label{thm:generalnonadaptive}
For odd $t\geq 5$, we have
\[ s_N(m,n,t)=O(t m^{\frac{2}{t-1}}
  n^{1-\frac{2}{t-1}} \lg \frac{2m}{n}).\]
\end{theorem}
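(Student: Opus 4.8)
The plan is to build the bit vector out of several blocks, one block per "round," so that each of the $t$ probes reads one bit from a distinct block, and the answer function is the OR (or, to keep $t$ odd and match the definition, a fixed $t$-bit function computing OR of a subset). The overall strategy mimics a hashing-with-witnesses construction: partition the $s$ bits into $t$ arrays $A_1,\dots,A_t$; associate with each element $x$ a location $i_j(x)$ in $A_j$ for each $j$; we want $x\in S$ to be detectable from the $t$ bits $A_1[i_1(x)],\dots,A_t[i_t(x)]$. For a single array of size roughly $m^{2/(t-1)} n^{1-2/(t-1)}$ we cannot isolate $S$ directly, but using the classical two-level idea (a "weak" filter that has few false positives, followed by a small exact correction) iterated across the $t$ probes, we can drive the false-positive set down geometrically in size until it fits.

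The key steps, in order, would be: (1) \emph{Set up the parameters.} Let $q=\frac{2}{t-1}$, so the target size is about $m^{q}n^{1-q}\lg\frac{2m}{n}$ per block, and there are $t$ blocks. (2) \emph{First filter.} Hash $[m]$ into the first block using a random (or suitably chosen) hash function; a set $S$ of size $n$ lands in at most $n$ cells, and the number of non-elements that collide with $S$ in a random function of the appropriate size is controlled. The trick is to not just use one hash value per element but to use an error-correcting-code / Nisan-Wigderson-style combinatorial design so that for each $x\notin S$, the $t$-tuple of probe locations differs from the union of the $S$-tuples in many coordinates, making it impossible for all $t$ read bits to be "set." (3) \emph{Recursion.} After the first probe, the set of surviving false positives has size at most some $m'<m$; recurse on the remaining $t-1$ (even) probes, or more precisely set up the block sizes so that after reading $j$ probes the candidate universe has shrunk enough that the last block can store an exact characteristic-vector-like correction. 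Track the block sizes to verify the product/sum telescopes to $O(t\, m^{q}n^{1-q}\lg\frac{2m}{n})$. (4) \emph{Verify the query function.} Exhibit the fixed function $f:\{0,1\}^t\to\{0,1\}$ (some monotone function, e.g. computing whether the pattern of read bits is "consistent with membership") and check $x\in S \iff f(\cdots)=1$ for every $S$ and every $x$; this is where one pays the $\lg\frac{2m}{n}$ factor, to get a union bound over all sets $S$ and all $x$ when arguing the hash functions exist.

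The main obstacle I expect is getting the \emph{non-adaptivity} to coexist with the recursive error-reduction: in an adaptive scheme one would read a bit, branch, and read the next bit from a location depending on what was seen, but here all $t$ locations $i_1(x),\dots,i_t(x)$ must be fixed in advance. The resolution is to precompute, for each $x$, a full length-$t$ "codeword" of locations and use a combinatorial design (or a good code over an alphabet of size matching the block sizes) so that distinctness of codewords in enough coordinates substitutes for adaptive branching — this is exactly where the exponent $\frac{2}{t-1}$ (rather than $\frac{1}{t}$) comes from, since roughly $(t-1)/2$ coordinates must "agree" to fool the scheme and each agreement costs a $\sqrt{\cdot}$-type factor. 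A secondary technical point is handling the parity constraint (odd $t$) cleanly and making the construction explicit enough, or at least showing existence via a probabilistic argument with the stated $\lg\frac{2m}{n}$ slack in the union bound; I would do the probabilistic version first and remark on explicitness afterward.
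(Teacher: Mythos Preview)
Your proposal has a real gap: the choice of query function and the overall architecture do not match what is needed to get the exponent $\tfrac{2}{t-1}$ non-adaptively.

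With OR (or any monotone function that returns $1$ as soon as one coordinate is set), a non-member $y$ must have \emph{all} $t$ of its locations equal to $0$; a single collision with $\Gamma(S)$ already creates a false positive. Your recursive ``shrink the surviving universe after $j$ probes'' picture is inherently adaptive, and you correctly flag this as the main obstacle---but the proposed fix (``use a combinatorial design so distinctness in enough coordinates substitutes for branching'') is too vague to carry the argument, and in fact your own heuristic for the exponent (``$(t-1)/2$ coordinates must agree'') is the signature of a \emph{threshold} function, not OR.

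The paper's proof is a single-shot, non-recursive construction. Take $f=\mathrm{MAJORITY}$ on $t$ bits and let $G$ be a random bipartite graph with $[m]$ on the left, $t$ blocks of size $s$ on the right, one random neighbor per block. Two properties are shown to hold whp when $s=\Theta(m^{2/(t-1)}n^{1-2/(t-1)}\lg\tfrac{2m}{n})$: (P1) every left set $R$ of size up to $n+\lceil 2n\lg\tfrac{2m}{n}\rceil$ expands, $|\Gamma(R)|\ge\tfrac{t+1}{2}|R|$; and (P2) for every $S$ of size $n$, the set $T_S$ of non-members sharing at least $\tfrac{t+1}{2}$ neighbors with $S$ has size at most $\lceil 2n\lg\tfrac{2m}{n}\rceil$. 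Given (P1)+(P2), storage is produced in two moves: by (P1) and Hall's theorem, give each $u\in S\cup T_S$ a private set $A_u$ of $\tfrac{t+1}{2}$ locations; write $1$'s on $A_u$ for $u\in S$ and $0$'s for $u\in T_S$; write $0$ everywhere else. Elements outside $S\cup T_S$ have $<\tfrac{t+1}{2}$ neighbors in $\Gamma(S)$ by (P2), so MAJORITY is $0$ for them. The probabilistic verification of (P1) and (P2) is where the $\lg\tfrac{2m}{n}$ and the exponent $\tfrac{2}{t-1}$ appear (the latter from ${t\choose (t+1)/2}(n/s)^{(t+1)/2}\le n/(10m)$). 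The point is that MAJORITY tolerates up to $\tfrac{t-1}{2}$ collisions, which is exactly what lets a greedy assignment handle almost everyone and a small Hall-matching clean up the rest---this two-phase ``greedy then matching'' mechanism is the missing idea in your outline.
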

In comparison, for odd $t\geq 5$, Buhrman {\em et al.} showed that
$s_N(m,n,t) = O(m^{\frac{4}{t+1}}n)$.  The exponent of $m$ in their
upper bound result is roughly four times the exponent of $m$ appearing
in their lower bound result.  Their schemes are non-adaptive and use
the MAJORITY function to answer membership queries. We exhibit schemes
that still use MAJORITY but need less space.  Buhrman {\em et al.}
also show a lower bound of $s(m,n,t) = \Omega(tm^{\frac{1}{t}}
n^{1-\frac{1}{t}})$ valid (even for adaptive schemes) when $n \leq
m^{1-\epsilon}$ (for $\epsilon > 0$ and $t \ll \lg m$). Note that the
exponent of $m$ in our result is twice the exponent of $m$
appearing in the lower bound result.
These schemes, as
well as the non-adaptive scheme for $t=4$ due to Alon and
Feige~\cite{AF2009}, have implications for the problem studied by
Makhdoumi {\em et al.}~\cite{MHMP2015}; unlike in the case of $t=2$,
significant savings are possible if $t\geq 4$, even with non-adaptive
schemes\footnote{We are grateful to Tom Courtade and Ashwin Pananjady
  for this observation.}.
  Using a similar proof idea, we obtain slightly better upper bound with adaptive schemes
when $t$ is small and $n$ is at most $m^{1-\epsilon}$.
\begin{theorem}[Result 2, adaptive schemes]
\label{thm:generaladaptive}
For odd $t \geq 3$ and $t \leq
  \frac{1}{10} \lg\lg m$ and for $n \leq m^{1-\epsilon}$ ($\epsilon
  > 0$), we have 
$ s(m,n,t)=O(\exp(e^{2t})m^{\frac 2 {t+1}} n^{1 - \frac 2 {t+1}} \lg m)$.
\end{theorem}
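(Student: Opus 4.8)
The plan is to imitate the proof of Theorem~\ref{thm:generalnonadaptive} but to use adaptivity so that the recursion can bottom out two steps earlier. Concretely, I would prove by induction on the odd parameter $t$ a bound of the shape
\[ s(m,n,t) \;\le\; C(t)\, m^{2/(t+1)}\, n^{1-2/(t+1)}\, \lg m \]
for all admissible $(m,n)$ (that is, $n \le m^{1-\epsilon}$), where $C(t)$ obeys a recurrence $C(t+2) \le C(t)^{O(1)}$ that unwinds to $C(t) = \exp(e^{\Theta(t)})$. The base case is $t=1$: the characteristic vector is an $(m,n,m,1)$-scheme and $m = m^{2/(1+1)}\, n^{1-2/(1+1)}$, so the inequality holds. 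In the non-adaptive setting the analogous recursion must instead start at $t=3$ --- three non-adaptive probes cannot in general beat the characteristic vector, as the lower bounds of this paper show --- which is exactly why Theorem~\ref{thm:generalnonadaptive} carries $2/(t-1)$ where the present theorem carries $2/(t+1)$.

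For the inductive step, fix odd $t \ge 3$, assume the bound for $t-2$, and build a $t$-probe \emph{adaptive} scheme for set size $n$ in universe $[m]$. Identify $[m]$ with $[B] \times [m/B]$, splitting the universe into $B$ blocks of size $m/B$, with $B$ chosen so the recursion balances. Given $S$, let $n_b$ be the number of elements of $S$ in block $b$, and for a threshold $\Delta$ call block $b$ \emph{heavy} if $n_b \ge \Delta$ and \emph{light} otherwise; there are at most $n/\Delta$ heavy blocks, and the light blocks together contain at most $n$ elements of $S$. Store: (i) a compact \emph{routing} structure over $[B]$ that, on the first (adaptive) probe, tells whether the block of the query $x$ is heavy --- cheap since the heavy blocks form a set of size $\le n/\Delta$; (ii) a structure for the $\le n/\Delta$ heavy blocks, of combined content universe of size $(n/\Delta)(m/B)$, handled by a shallow recursion; and (iii) for the light blocks, the inductive $(m/B, \Delta, t-2)$-scheme applied block by block, but with the $\le B$ light substructures \emph{packed} into one array so that their total length is governed by the $\le n$ elements they carry rather than by $B \cdot s(m/B, \Delta, t-2)$, together with a small offset array (consulted by one of the two probes that adaptivity frees relative to the $(t-2)$-scheme) that tells a query which packed substructure to enter. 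According to the routing bit the remaining probes go into the heavy structure or into the correct packed light substructure; correctness is then immediate.

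Optimizing $B$ and $\Delta$ to balance the three contributions produces the exponent $2/(t+1)$; one $\lg m$ factor survives from the offset/rank bookkeeping (hence the $\lg m$ in the statement, versus $\lg\tfrac{2m}{n}$ in Theorem~\ref{thm:generalnonadaptive}), and the per-level blow-up feeds the recurrence for $C(t)$. I expect the main difficulty to be the bookkeeping rather than any single clever step: one must (a) route a query among $\sim B$ packed light substructures using only the handful of auxiliary bits that a single probe can see, which is what forces the packing-plus-offsets device and a careful accounting of the probe budget at each level; (b) check that ``set size at most (universe size) raised to the power $1-\epsilon$'' survives the passage to the light-block subinstances, where the universe shrinks by the factor $B$ and the set size drops to $\Delta$ --- this costs a little of $\epsilon$ per level, which is where the ceiling $t \le \tfrac1{10}\lg\lg m$ enters; and (c) control $C(t)$ through the recursion so that the final constant is $\exp(e^{2t})$ as claimed, which again needs $t \le \tfrac1{10}\lg\lg m$ to keep that factor $m^{o(1)}$.
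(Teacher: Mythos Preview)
Your approach is genuinely different from the paper's, and as written it has a real gap.

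The paper does not recurse on $t$ at all. It writes $t=t_1+t_2$ with $t_1=\frac{t-3}{2}$ non-adaptive probes and $t_2=\frac{t+3}{2}$ adaptive probes, and answers ``Yes'' iff both parts say ``Yes''. The non-adaptive part is a random bipartite graph $G_1$ used as a filter: setting $\Gamma_{G_1}(S)$ to $1$ and everything else to $0$ already rules out every non-member except a small set of $\surv(S)$. The adaptive part is a random graph $G_2$ laid out on the full binary tree of a depth-$t_2$ decision tree; an expansion property plus Hall's theorem lets one assign to each element of $S\cup\surv^+(S)$ a private set of $\beta=2^{t_2}-t_2$ nodes, enough to force the adaptive tree to the desired leaf and thereby kill the false positives. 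The hypothesis $t\le\tfrac{1}{10}\lg\lg m$, the restriction $n\le m^{1-\epsilon}$, and the $\exp(e^{2t})$ constant all emerge from the union bounds needed to show that the random pair $(G_1,G_2)$ is simultaneously good.

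The gap in your plan is the ``packing-plus-offsets'' device. In the bit-probe model each probe returns a single bit, so an offset array cannot be ``consulted by one of the two probes'': one bit cannot select among the $\sim B$ packed light substructures, let alone return an address of width $\Theta(\lg s)$. If instead you abandon packing and place the light-block substructures at fixed addresses $b\cdot s(m/B,\Delta,t-2)$, no offset read is needed, but then the space is exactly $B\cdot s(m/B,\Delta,t-2)$, which you correctly say you must avoid; and nothing in $s(m/B,\Delta,t-2)$ shrinks when a block happens to receive few elements, so the ``governed by the $\le n$ elements they carry'' claim has no mechanism behind it. There is a secondary accounting issue as well: after spending one probe on the heavy/light bit, the heavy branch still has to be a genuine $t'$-probe scheme for some $t'\le t-1$ on a universe of size $(n/\Delta)(m/B)$ with set size up to $n$, and ``a shallow recursion'' does not specify what that scheme is or why its probes fit. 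Without a workable routing mechanism that respects the one-bit-per-probe constraint, the recursion does not close.
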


\emph{Technique.} To justify our claim, we need to describe the query
scheme, that is, $(i_1(x),i_2(x),$ $\ldots,i_t(x))$ for each $x \in
[m]$ and the query function $f_x:\{0,1\}^t \rightarrow \{0,1\}$.  For
$f_x$ we use the MAJORITY on $t$ bits ($t$ is odd). The locations to
be probed for each element will be obtained using a probabilistic
argument. Once a query scheme is fixed, we need to show how the
assignment to the memory is obtained. For this, we describe a
sequential algorithm.  We show that the random assignment of locations
ensures sufficient expansion allowing us to start with a greedy
argument arrange that most queries are answered correctly, and then
use Hall's bipartite graph matching theorem to find the required
assignment for the remaining elements. Versions of this argument have
been used in previous
works~\cite{LMSS2001,BMRV2002,GM2011,AF2009,GR2015}.


\subsection{Three non-adaptive probes}\label{sec:3NonAdaptiveProbes}
For one probe and $m\geq 2$, it is easy to show that no space can be
saved over the characteristic vector representation.  For two
non-adaptive probes, only for the special case $n=1$, some non-trivial
savings over the characteristic vector representation are possible:
$s_N(m,1,2)=\theta(\sqrt m))$.  For $n\geq 2$, Buhrman {\em et
  al.}~\cite{BMRV2002} showed $s_N(m,n,2)=m$.  The smallest number of
probes for which the complexity of problem with non-adaptive probes is
not settled is three. Observe that any scheme with two adaptive probes
can be converted to a scheme with three non-adaptive probes; the two
probe decision tree has at most three nodes. Thus, using the two
adaptive probes upper bound result of Garg and
Radhakrishnan~\cite{GR2015}, we have $s_N(m,n,3)\leq
s(m,n,2)=O(m^{1-\frac 1{4n+1}})$. Thus, non-trivial savings in space
over the characteristic vector representation is possible when $n=o(\lg
m)$. Consequently, the question is whether more space can be saved or
is this upper bound tight? We are not aware of any three-probe
non-adaptive scheme that manages with $o(m)$ space for sets of size
$\omega(\lg m)$. Alon and Feige~\cite{AF2009} show the following lower
bound: $s_N(m,n,3)=\Omega(\sqrt{\frac {mn}{\lg m}})$ for $n\geq 16\lg
m$.

In order to obtain better lower bounds for three-probe non-adaptive
schemes, we proceed as follows.  In any three-probe non-adaptive
scheme, the query scheme specifies, for each element, the three
locations to probe and a three variable boolean function to be applied
on three values read. In principle, for different elements, the query
scheme can specify different boolean functions. But since there are
only a finite number (256) of boolean functions on three variables,
some set of at least $m/256$ elements of the universe use a common
function. We may thus restrict attention to this part of the
universe, and assume that the function being employed to answer
queries is always the same.
Furthermore, we may place functions obtained from one another by
negating and permuting variables in a common equivalence class, and
restrict our attention to one representative in each class.  For three
variable boolean functions, P{\'o}lya counting yields that there are
twenty-two equivalence classes. This classification of the 256
functions into twenty-two classes is already available in the
literature~\cite{threeAry}. We show the following.

\begin{theorem}[Result 3]\label{result3}
\begin{enumerate}
\item[(a)] If the query function $f:\{0,1\}^3\rightarrow\{0,1\}$ is
  not equivalent to $(x,y,z)\mapsto (x\wedge y)\oplus z$ or
  $(x,y,z)\mapsto 1$ iff $x+y+z= 1$, then $s_N(m,n,3)=\Omega(m^{1-\frac
    1{cn}})$ for $n\geq 4$ and some $c>0$.
\item[(b)] If the query function $f:\{0,1\}^3\rightarrow\{0,1\}$ is equivalent to $(x,y,z)\mapsto (x\wedge y)\oplus z$ or $(x,y,z)\mapsto 1$
	iff $x+y+z=1$, then
	$s_N(m,n,3)=\Omega(\sqrt{mn})$.
\item[(c)]If the query function $f:\{0,1\}^3\rightarrow\{0,1\}$ is equivalent to $(x,y,z)\mapsto (x\wedge y)\oplus z$ and $\lg m\leq n\leq \frac{m}{\lg m}$, then $s_N(m,n,3)=\Omega(\sqrt{mn \frac{\lg \frac m n}{\lg \lg m}})$.
\end{enumerate}
\end{theorem}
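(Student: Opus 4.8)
The plan is to treat the three parts separately, since they correspond to genuinely different behaviours of the query function. For part~(a), the idea is that for each of the twenty equivalence classes \emph{other} than $(x\wedge y)\oplus z$ and the weight-one indicator, one can exhibit a small "bad configuration": a family of queries whose probe locations are so constrained that only a bounded number of distinct index-triples is available, forcing the memory to have essentially $m$ bits. Concretely, I would go through the twenty-two-class list of~\cite{threeAry} and argue that in each surviving case, either the function degenerates (depends on at most two of its inputs, on a subcube, or becomes a $2$-probe scheme in disguise, whence the known $s_N(m,n,2)=m$ bound of Buhrman {\em et al.} applies), or the function is such that for any element $x$ the triple of bits read when $x\notin S$ must be "pinned" in a way that a counting/entropy argument shows $s\geq m^{1-1/(cn)}$. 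The technical engine here is the same kind of argument used by Alon and Feige: pick a random set $S$ of size $n$, look at the induced constraints on the $s$ memory bits, and show the number of valid memory strings is at most $2^{s}$ while the number of sets that must be distinguished forces $s$ to be large; the degeneracy cases are handled by reduction to $t=2$.

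For part~(b), the two exceptional functions are exactly the ones that are affine-like (one is $XOR$ of an $AND$ with a literal; the other is the indicator of Hamming weight $1$, which over $\mathbb{F}_2$ is $e_2\oplus e_3$ of the elementary symmetric polynomials, hence again has a linear flavour). I would prove the $\Omega(\sqrt{mn})$ bound by the standard two-families/incidence argument: if $s$ is the storage size, consider the bipartite "probe graph" between $[m]$ and $[s]$; each element contributes a triple of edges, and the structure of the query function means that distinguishing two sets $S,S'$ that differ in one element imposes a constraint living in a space of dimension comparable to the number of distinct memory locations touched. A double-counting of incident triples (in the spirit of the Alon--Feige lower bound, but without the $\lg m$ loss, because the exceptional functions let us avoid the union bound over bad events that caused that loss) yields $s^2 = \Omega(mn)$. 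This mirrors the known tight bound $s_N(m,1,2)=\Theta(\sqrt m)$, which is the $n=1$ shadow of the same phenomenon.

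For part~(c), which sharpens~(b) in the regime $\lg m\le n\le m/\lg m$ for the specific function $(x\wedge y)\oplus z$, the plan is to exploit the asymmetry of this function: the third coordinate $z$ enters as a pure $XOR$, so the bit it probes acts as a one-time-pad-like "mask", while the first two coordinates form an $AND$. I would fix the $z$-locations and the $(x,y)$-locations, and analyse, for a random $n$-set $S$, how many memory configurations are consistent: the $AND$ part forces certain pairs of bits to be set, and a counting argument over which pairs can be realized by an $s$-bit memory (essentially a Zarankin\'i{}/forbidden-subgraph count) combined with the entropy $\lg\binom{m}{n}\approx n\lg(m/n)$ that the representation must encode, gives $s^2 = \Omega\!\big(mn\,\frac{\lg(m/n)}{\lg\lg m}\big)$; the $\lg\lg m$ in the denominator should come from a dyadic bucketing of memory bits by their "load" (the number of elements probing them), exactly as in the Alon--Feige analysis, with the range restriction on $n$ ensuring the buckets are non-degenerate.

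The main obstacle, I expect, is part~(a): it is not a single clean argument but a case analysis over the equivalence classes, and the subtle point is isolating precisely \emph{why} the two exceptional functions escape the $m^{1-1/(cn)}$ bound while every other non-degenerate function does not — one must show that for every other function the "forbidden bit pattern when $x\notin S$" is rigid enough to kill the savings, and that the two exceptions really do admit the $\sqrt{mn}$-type schemes (or at least are not provably worse). Getting the constant $c$ uniform across all twenty classes, and handling the classes that reduce to $2$-probe schemes cleanly, is where the bookkeeping will be heaviest; parts~(b) and~(c) are more mechanical once the right incidence/entropy inequality is set up.
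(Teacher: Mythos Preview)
Your proposal does not match the paper's actual arguments, and in several places the approach you outline would not yield the stated bounds.

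For part~(a) your plan is too vague to be a proof. The paper's argument is \emph{not} an entropy/counting argument over random sets in the style of Alon--Feige. Rather, after the seven classes that reduce to two adaptive probes (where the lower bound of~\cite{GR2015} applies), the remaining eleven ``density'' cases (MAJORITY, AND, $(x\oplus y)\wedge z$, $(x\vee y)\wedge z$, ALL-EQUAL, $(x\wedge y\wedge z)\vee(\bar y\wedge\bar z)$, and complements) are handled by explicit graph-theoretic constructions: one builds a graph on the memory locations with an edge per universe element, and shows that if $s$ is too small the Moore bound (Proposition~\ref{prop:Moore}) forces short cycles, which in turn produce a pair of small sets $S,T$ that the scheme cannot distinguish. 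For PARITY and $x+y+z\neq 1$ a linear-algebra dimension argument over $\mathbb F_2$ (resp.\ $\mathbb R$) gives $s\geq m$ outright. None of this is ``forbidden bit-pattern rigidity'' or an entropy count; each case requires identifying the right cycle/dependency structure, and your proposal does not indicate how to do that.

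For part~(b) your incidence/double-counting sketch is not the paper's method and, as stated, does not deliver $\Omega(\sqrt{mn})$. The paper uses the polynomial method (adapted from~\cite{RSV2002}): to each $S$ of size $n$ one associates $P_S=\prod_{u\in S}(x(u)y(u)+z(u))$, a multilinear polynomial of degree $\leq 2n$ over $\mathbb F_2$ (or the analogous degree-$2$ form over $\mathbb F_3$ for the weight-one indicator), shows $\{P_S:|S|=n\}$ is linearly independent by evaluating at $\sigma(S')$, and compares $\binom{m}{n}$ to the dimension $\binom{3s+2n}{2n}$ of the ambient space. That comparison is what produces $s=\Omega(\sqrt{mn})$; a bipartite incidence count alone does not.

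For part~(c) your ``dyadic bucketing by load'' is essentially the Alon--Feige mechanism and would at best recover their $\Omega(\sqrt{mn/\lg m})$, not the stronger $\Omega(\sqrt{mn\,\lg(m/n)/\lg\lg m})$. The paper's improvement comes from a \emph{random restriction} on top of the polynomial method: one shows that for most $n$-sets $S$ the product $P_S$ can be rewritten (after fixing a small auxiliary set $T_S$ of non-elements to be absent) as a polynomial $\hat P_S$ of degree only $2n-\Theta(n/c)$, because many pairs of factors share or ``trap'' an edge in the bipartite graph $G_{A,B}$ and hence collapse from degree $4$ to degree $3$. A probabilistic argument (Lemmas~\ref{gainImpliesGoodGainer}--\ref{densityImpliesGain}) shows enough such pairs exist, and the lowered degree tightens the dimension comparison to give the extra $\sqrt{\lg(m/n)/\lg\lg m}$ factor. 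Your outline contains none of this degree-reduction idea, which is the heart of the improvement.
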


The best upper bounds for non-adaptive schemes with four or more
probes use the MAJORITY function to answer membership queries. Our
result implies that for three non-adaptive probes, when queries are
answered by computing MAJORITY, the space required is at least
$\Omega(m^{1- \frac 1{cn}})$ for some constant $c$.  In fact, similar
lower bound holds if membership queries are answered using most
boolean functions. Our results do not yield a similar lower bound for
$(x,y,z)\mapsto (x\wedge y)\oplus z$ and $(x,y,z)\mapsto 1$ iff
$x+y+z=1$ types. For these two types of query functions, we get a
slightly better lower bound than what is implied by
\cite{AF2009}. Thus, further investigations on three probes
non-adaptive schemes need to focus on just $(x,y,z)\mapsto 1$ iff
$x+y+z=1$ and $(x,y,z)\mapsto (x\wedge y)\oplus z$ as the query
functions.

\emph{Technique.} As mentioned above, there are twenty-two types of functions for which we need to prove a lower bound.
Seven of the twenty-two classes contain functions that can be
represented by a decision tree of height at most two. Thus, for these
functions, the two probe adaptive lower bound in~\cite{GR2015} implies the
result. These functions are: constant 0, constant 1, the DICTATOR
function $(x,y,z) \mapsto x$, the function $(x,y,z)\mapsto x \wedge
y$, its complement $(x,y,z)\mapsto \bar{x} \vee \bar{y}$,
$(x,y,z)\mapsto (x\wedge y)\vee(\bar{x}\wedge z)$, and $(x,y,z)\mapsto
(x\wedge y)\vee (\bar{x}\wedge\bar{y})$.

After this, fifteen classes remain. Functions in some eleven of the
remaining fifteen classes admit a density argument, similar in spirit
to the adaptive two-probes lower-bound proof in~\cite{GR2015}.  To streamline the
argument, we classify these eleven classes into two parts. The first
part contains the MAJORITY function. The second part contains the AND
function, the ALL-EQUAL function, the functions
$(x,y,z)\mapsto (x\oplus y)\wedge z$, $(x,y,z)\mapsto (x\vee
y)\wedge z$, $(x,y,z)\mapsto (x\wedge y\wedge
z)\vee (\bar{y}\wedge\bar{z})$, and their complements. For functions
in the second part we deal with two functions---a function and its
complement---with a single proof. In these proofs, we produce sets $S$
and $T$ of size at most $n$ such that storing $S$ and not storing $T$
leads to a contradiction. The proof for the complement function works
with a small twist: storing $T$ and not storing $S$ leads to the
contradiction. Thus, these eleven cases are handled by six proofs. In
each of these proofs we roughly argue (sometimes probabilistically)
that if the scheme is valid, it must conceal a certain dense graph
that avoids small cycles. Standard graph theoretic results (the Moore
bound) that relate density and girth then gives us the lower bound.

For the remaining four classes, we employ linear-algebraic
arguments. Representatives chosen from these classes are PARITY,
$(x,y,z)\mapsto 1$ iff $x+y+z\neq 1$, $(x,y,z)\mapsto (x\wedge
y)\oplus z$, and $(x,y,z)\mapsto 1$ iff $x+y+z=1$. For PARITY and
$(x,y,z)\mapsto 1$ iff $x+y+z\neq 1$, we show using standard dimension
argument, that if the space used is smaller than the universe size
$m$, then there is some element $u\in[m]$ that is (linearly) dependent
on the other elements. Not storing the other elements, leaves the
scheme with no choice for $u$, thus leading to a contradiction. For
$(x,y,z)\mapsto(x\wedge y)\oplus z$ and $(x,y,z)\mapsto 1$ iff
$x+y+z=1$ a modification of an algebraic argument of Radhakrishnan,
Sen and Venkatesh~\cite{RSV2002} implies a lower bound of $\sqrt{mn}$.
(Interestingly, we need to choose an appropriate characteristic of the
field (2 or 3) based on which function we deal with.) For
$(x,y,z)\mapsto(x\wedge y)\oplus z$, we further improve on this
argument by employing random restrictions. These results together
improve the previous best lower bound (due to Alon and
Feige~\cite{AF2009}) irrespective of the query function used.


\section{General non-adaptive upper bound}
\label{sec:multi-probe-ub}

In this section, we prove the general non-adaptive upper bound result: Theorem~\ref{thm:generalnonadaptive}.

\begin{definition}
A non-adaptive $(m,s,t)$-graph is a bipartite graph $G$ with vertex
sets $U=[m]$ and $V$ ($|V|=ts$). $V$ is partitioned into $t$ disjoint
sets: $V_1,\ldots,V_t$; each $V_i$ has $s$ vertices. Every 
$u\in U$ has a unique neighbour in each $V_i$. A non-adaptive
$(m,s,t)$-graph naturally gives rise to a non-adaptive $(m, ts, t)$-query
scheme ${\cal T}_G$ as follows. We view the memory (an array $L$ 
of $ts$ bits) to be indexed by vertices in $V$. On receiving the query
``Is $u$ in $S$?'', we answer ``Yes'' iff the MAJORITY of the locations in the 
neighbourhood of $u$ contain a $1$. We say that the query scheme ${\cal T}_G$ 
is satisfiable for a set  $S\subseteq[m]$, if there is an assignment to the memory locations 
$(L[v] : v \in V)$, such that ${\cal T}_G$ correctly answers all 
queries of the form ``Is $x$ in $S$?''.
\end{definition}

We now restrict attention to odd $t\geq5$. First, we identify an
appropriate property of the underlying non-adaptive $(m,s,t)$-graph
$G$ that guarantees that ${\cal T}_G$ is satisfiable for all sets $S$
of size at most $n$.  We then show that such a graph exists for some
$s=O(m^{\frac{2}{t-1}} n^{1-\frac{2}{t-1}}\lg \frac {2m}{n})$.

\begin{definition}[Non-adaptive admissible graph] \label{def:t-admissible}
We say that a non-adaptive $(m,s,t)$-graph $G$ is admissible for sets of
size at most $n$ if the following two properties hold:
\begin{enumerate}
	\item[(P1)] $\forall R \subseteq [m]$ ($|R|\leq n + \ceil{2n \lg \frac {2m}{n}}$):
                $|\Gamma_{G}(R)| \geq \frac {t+1} {2} |R|$, where $\Gamma_G(R)$
                is the set of neighbors of $R$ in $G$.
        \item[(P2)] $\forall S \subseteq [m]$ ($|S| = n$): $|T_S| \leq \ceil{2n
                \lg \frac {2m}{n}}$, where $T_S = \{ y \in [m] \setminus S : |\Gamma_G(y)
                        \cap \Gamma_G(S)| \geq \frac {t+1}{2} \}$.
\end{enumerate}
\end{definition}
Our theorem will follow from the following claims.
\begin{lemma} \label{lm:t-admissibleImpliesSatisfiable}
If a non-adaptive $(m,s,t)$-graph $G$ is admissible for sets of size
at most $n$, then the non-adaptive $(m, ts, t)$-query scheme ${\cal
  T}_{G}$ is satisfiable for every set $S$ of size at most $n$.
\end{lemma}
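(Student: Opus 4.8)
The plan is to reduce satisfiability to a single application of Hall's matching theorem. Since $\mathcal{T}_G$ answers ``Is $x$ in $S$?'' by taking the MAJORITY of $t$ (odd) bits, a memory assignment $L$ is correct for $S$ exactly when the set of ones $W=\{v\in V: L[v]=1\}$ satisfies $|\Gamma_G(u)\cap W|\geq\frac{t+1}{2}$ for every $u\in S$ and $|\Gamma_G(y)\cap W|\leq\frac{t-1}{2}$ for every $y\in[m]\setminus S$. I would look for such a $W$ with the additional property $W\subseteq\Gamma_G(S)$; then for $y\notin S\cup T_S$ the second requirement is automatic, because $y\notin T_S$ forces $|\Gamma_G(y)\cap\Gamma_G(S)|<\frac{t+1}{2}$, i.e.\ $\leq\frac{t-1}{2}$ since $t$ is odd. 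Thus only the $n$ elements of $S$ and the elements of $T_S$ (few, by (P2)) need attention. I will assume $|S|=n$; for $|S|<n$ one pads $S$ to size $n$, the only fact used below being that $|S|+|T_S|\leq n+\ceil{2n\lg\frac{2m}{n}}$, which survives the padding.

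Next I would set up the bipartite graph $H$ on which Hall's theorem is applied: its right side is $\Gamma_G(S)$, and its left side consists of, for each $u\in S$, a block of $\frac{t+1}{2}$ ``demand'' vertices each joined in $H$ to all of $\Gamma_G(u)$, and, for each $y\in T_S$, a block of $d_y:=|\Gamma_G(y)\cap\Gamma_G(S)|-\frac{t-1}{2}$ ``blocker'' vertices each joined to all of $\Gamma_G(y)\cap\Gamma_G(S)$. Since $y\in T_S$ and $y$ has exactly $t$ neighbours in $G$, we get $1\leq d_y\leq\frac{t+1}{2}$, so every block is non-empty. Given a left-saturating matching $M$, put $W:=\{v\in\Gamma_G(S): v\text{ is matched by }M\text{ to a demand vertex}\}$ and $L:=\mathbf{1}_W$. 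Then each $u\in S$ has its $\frac{t+1}{2}$ demand vertices matched to distinct vertices of $\Gamma_G(u)\cap W$, so at least $\frac{t+1}{2}$ of its $t$ neighbours are ones; each $y\in T_S$ has its $d_y$ blocker vertices matched to distinct vertices of $\Gamma_G(y)\cap\Gamma_G(S)$ that, being matched to blockers, lie outside $W$, so $y$ has at most $|\Gamma_G(y)\cap\Gamma_G(S)|-d_y=\frac{t-1}{2}$ ones among its neighbours in $\Gamma_G(S)$ and none outside; and $y\notin S\cup T_S$ was handled above. Hence $\mathcal{T}_G$ is satisfiable for $S$.

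What remains — and what I expect to be the only delicate step — is verifying Hall's condition for $H$. Because the left vertices come in blocks with identical neighbourhoods, it suffices to test sets of the form ``all demand vertices of some $S'\subseteq S$ together with all blocker vertices of some $T'\subseteq T_S$'', which has size $\frac{t+1}{2}|S'|+\sum_{y\in T'}d_y$ and $H$-neighbourhood $\Gamma_G(S')\cup\bigl(\Gamma_G(T')\cap\Gamma_G(S)\bigr)$. The bookkeeping is made to close exactly by the identity $|\Gamma_G(y)\setminus\Gamma_G(S)|=t-|\Gamma_G(y)\cap\Gamma_G(S)|=\frac{t+1}{2}-d_y$ for $y\in T_S$: since $S'$ and $T'$ are disjoint with $|S'|+|T'|\leq n+\ceil{2n\lg\frac{2m}{n}}$, property (P1) gives $|\Gamma_G(S'\cup T')|\geq\frac{t+1}{2}(|S'|+|T'|)$, while $|\Gamma_G(T')\setminus\Gamma_G(S)|\leq\sum_{y\in T'}\bigl(\tfrac{t+1}{2}-d_y\bigr)$, so
\[
\bigl|\Gamma_G(S')\cup(\Gamma_G(T')\cap\Gamma_G(S))\bigr|\geq|\Gamma_G(S'\cup T')|-|\Gamma_G(T')\setminus\Gamma_G(S)|\geq\tfrac{t+1}{2}|S'|+\sum_{y\in T'}d_y,
\]
which is precisely the size of the tested set. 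Hall's theorem then provides the saturating matching, completing the proof. The point I would watch most carefully is the choice of blocker multiplicity $d_y$: it is tuned so that the vertices lost by intersecting $\Gamma_G(T')$ with $\Gamma_G(S)$ are exactly compensated, whereas a coarser choice (say one blocker per $y$) would leave an unrecoverable deficit whenever some $y\in T_S$ has more than $\frac{t+1}{2}$ neighbours in $\Gamma_G(S)$.
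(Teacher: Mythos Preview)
Your proof is correct, and like the paper it hinges on a single application of Hall's theorem combined with (P1) and (P2); but the paper's route is noticeably shorter. The paper does not restrict the ones to lie inside $\Gamma_G(S)$ and does not introduce variable blocker multiplicities $d_y$. Instead, it pads $S$ to $S'$ of size $n$, uses (P1) directly to obtain, for \emph{every} $u\in S'\cup T_{S'}$, a private set $A_u\subseteq\Gamma_G(u)$ of size exactly $\frac{t+1}{2}$ with all the $A_u$ pairwise disjoint (Hall's condition here is literally (P1), no bookkeeping needed), writes $1$ on $A_u$ for $u\in S$ and $0$ on $A_u$ for $u\in(S'\cup T_{S'})\setminus S$, and $0$ everywhere else. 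Correctness for $y\notin S'\cup T_{S'}$ then follows from the definition of $T_{S'}$.

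What your approach buys is a tighter description of the $1$-set (it lives entirely in $\Gamma_G(S)$) at the cost of the delicate Hall verification you flag; what the paper's approach buys is that Hall's condition becomes a one-line citation of (P1), because elements of $S$ and of $T_{S'}$ are treated symmetrically with the same demand $\frac{t+1}{2}$. Your tuned $d_y$ is needed only because you insisted on confining the ones to $\Gamma_G(S)$; dropping that restriction and letting the ``blockers'' reserve $\frac{t+1}{2}$ zeros anywhere in $\Gamma_G(y)$ removes the whole inclusion--exclusion step.
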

\begin{lemma} \label{lm:t-admissibleExists}
There is a non-adaptive $(m,s,t)$-graph, with $s=O(m^{\frac{2}{t-1}}
n^{1-\frac{2}{t-1}}\lg \frac {2m}{n})$, that is admissible for every set
$S\subseteq [m]$ of size at most $n$. 
\end{lemma}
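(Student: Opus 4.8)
The goal is to exhibit a non-adaptive $(m,s,t)$-graph with $s=O(m^{2/(t-1)}n^{1-2/(t-1)}\lg\frac{2m}{n})$ that satisfies both (P1) and (P2) of Definition~\ref{def:t-admissible}. The natural approach is the probabilistic method: fix $s$ of the stated order, and for each $u\in[m]$ choose its neighbour in each part $V_i$ independently and uniformly at random from the $s$ vertices of $V_i$. Then show that with positive probability both (P1) and (P2) hold, so such a graph exists.

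**Step 1: Property (P1), the expansion condition.** Write $r_0 := n + \ceil{2n\lg\frac{2m}{n}}$. We need that every $R\subseteq[m]$ with $|R|\le r_0$ has $|\Gamma_G(R)|\ge \frac{t+1}{2}|R|$. This is a standard expander-existence argument: the bad event is that there is a set $R$ of size $k$ (for some $k\le r_0$) whose $tk$ edge-endpoints land in a set of fewer than $\frac{t+1}{2}k$ vertices of $V$. Union-bounding over the choice of $R$ (a factor $\binom{m}{k}$), over the choice of the small target set inside $V$ (a factor roughly $\binom{ts}{(t+1)k/2}$), and over the placement of all $tk$ endpoints inside that target set (probability $\approx\left(\frac{(t+1)k/2}{s}\right)^{tk}$ since each of the $t$ neighbours of each of the $k$ vertices must hit the appropriate slice), one checks the sum over $k\le r_0$ is $o(1)$ provided $s$ is at least a constant times $r_0 \cdot (\text{something})^{\Theta(1/(t-1))}$. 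Since $r_0 = \Theta(n\lg\frac{2m}{n})$ and the governing exponent here is $2/(t-1)$ (because each vertex has $t$ edges but we only demand expansion by $\frac{t+1}{2}$, giving a $1 - \frac{t+1}{2t} = \frac{t-1}{2t}$ "surplus" ratio), the bound $s=O\!\left(m^{2/(t-1)}n^{1-2/(t-1)}\lg\frac{2m}{n}\right)$ is exactly what makes this work; I would carry out the binomial estimates with $\binom{a}{b}\le(ea/b)^b$ and verify the geometric-type decay in $k$.

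**Step 2: Property (P2), the small-shadow condition.** Fix $S$ with $|S|=n$; condition on the (random) neighbourhood $\Gamma_G(S)$, which has size at most $tn$. For a fixed $y\notin S$, the event $y\in T_S$ is that at least $\frac{t+1}{2}$ of $y$'s $t$ random neighbours fall inside $\Gamma_G(S)$; each neighbour lands in the relevant slice $V_i\cap\Gamma_G(S)$ with probability at most $n/s$, so $\Pr[y\in T_S]\le \binom{t}{(t+1)/2}(n/s)^{(t+1)/2}=: p$. Hence $\E[|T_S|]\le mp$, and with $s$ of the stated order one gets $mp = O\!\big((n/s)^{(t+1)/2}\cdot m\big)$, which is comfortably below $\frac{1}{2}\ceil{2n\lg\frac{2m}{n}}$ — note the exponent $(t+1)/2 > 2/(t-1)\cdot\frac{t-1}{2}$... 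I should be a little careful here: I would either apply a concentration bound (the $|T_S|$ is a sum of negatively-correlated-enough indicators, or one can use a simple second-moment / Chernoff-style estimate after conditioning on $\Gamma_G(S)$) to get $\Pr[|T_S|>\ceil{2n\lg\frac{2m}{n}}] \le m^{-cn}$ for suitable $c$, then union-bound over the $\binom{m}{n}\le m^n$ choices of $S$. The extra logarithmic factor $\lg\frac{2m}{n}$ built into $s$ is precisely the slack that drives this tail probability below $m^{-n}$.

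**Main obstacle and conclusion.** The only genuinely delicate point is making Step 2's union bound over all $\binom mn$ sets $S$ go through: naively $\E[|T_S|]$ being small only gives Markov, which is too weak to beat $m^n$. The fix is to observe that once we fix $S$, the indicators $\{\mathbf 1[y\in T_S] : y\notin S\}$ are mutually independent conditioned on $\Gamma_G(S)$ (different $y$'s use independent randomness), so a Chernoff bound applies and yields a tail of the form $\exp(-\Omega(n\lg\frac mn)) = (m/n)^{-\Omega(n)}$, which for $s$ chosen with a large enough constant dominates $\binom mn$. Finally, by a union bound the probability that (P1) fails or (P2) fails is $o(1)+o(1)<1$, so a graph satisfying both exists; this is exactly the graph claimed in Lemma~\ref{lm:t-admissibleExists}, and combined with Lemma~\ref{lm:t-admissibleImpliesSatisfiable} it will yield Theorem~\ref{thm:generalnonadaptive}.
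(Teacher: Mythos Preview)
Your proposal is correct and follows essentially the same approach as the paper's proof: a random $(m,s,t)$-graph with independent uniform neighbour choices, a union bound over $(R,L)$ pairs (with the AM--GM/binomial estimates you describe) to get (P1), and for (P2) the bound $\Pr[y\in T_S]\le\binom{t}{(t+1)/2}(n/s)^{(t+1)/2}$ followed by a Chernoff tail and a union bound over the $\binom{m}{n}$ sets $S$. Your identification of the ``main obstacle''---that Markov is too weak and one needs the conditional independence of the indicators $\mathbf 1[y\in T_S]$ to apply Chernoff---is exactly the point the paper relies on as well.
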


\begin{proof}[Proof of Lemma~\ref{lm:t-admissibleImpliesSatisfiable}]
Fix an admissible graph $G$. 
Thus, $G$ satisfies (P1) and (P2)
above. Fix a set $S \subseteq [m]$ of size at most $n$.  We will show
that there is a 0-1 assignment to the memory such that all queries are
answered correctly by ${\cal T}_G$.

Let $S' \subseteq [m]$ be such that $S \subseteq S'$ and $|S'| =
n$. From (P2), we know $|T_{S'}| \leq \ceil{2n \lg \frac {2m}{n}}$. Hence, $|S'
\cup T_{S'}| \leq n + \ceil{2n \lg \frac {2m}{n}}$. From (P1) and Hall's theorem,
we may assign to each element $u \in S' \cup T_{S'}$ a set $A_u
\subseteq V$ such that (i) $|A_u| = \frac {t+1}{2}$ and (ii) the
$A_u$'s are disjoint.  For each $u\in S \subseteq S'$, we assign the
value 1 to all locations in $A_u$.  For each $u\in (S' \cup T_{S'})
\setminus S$, we assign the value 0 to all locations in $A_u$. Since
$\frac {t+1}{2} > \frac t 2$, all queries for $u\in S' \cup T_{S'}$ are
answered correctly.

Assign 0 to all locations in $\Gamma_G([m]
\setminus (S' \cup T_{S'}))$. For $y\in [m]
\setminus (S' \cup T_{S'})$, $|\Gamma_G(y) \cap \Gamma_G(S)|\leq \frac {t-1} {2}$.
As a result, queries for elements in $[m] \setminus 
(S' \cup T_{S'})$ are answered correctly, as the majority evaluates to 0 for each 
one of them. 
\end{proof}

\begin{proof}[Proof of Lemma~\ref{lm:t-admissibleExists}]
In the following, set 
\[ s = \ceil{60 m^{\frac{2}{t-1}} n^{1-\frac{2}{t-1}}\lg \frac {2m}{n}}.\]
We show that a suitable random non-adaptive $(m,s,t)$-graph $G$ is
admissible for sets of size at most $n$ with positive probability. The
graph $G$ is constructed as follows. Recall that $V = \bigcup_i
V_i$. For each $u \in U$, one neighbor is chosen uniformly and
independently in each $V_i$.
\begin{description}
\item[(P1) holds.] If (P1) fails, then for some non-empty $W \subseteq
  U$, $(|W| \leq n + \ceil{2n \lg \frac {2m}{n}})$, we have $|\Gamma_G(W)| \leq
  \frac {t+1}{2}|W| - 1$. Fix a set $W$ of size $r \geq 1$ and $L
  \subseteq V$ of size $\frac{t+1}{2}r - 1$. Let $L$ have
  $\ell_i$ elements in $V_i$; thus, $\sum_i \ell_i =
  \frac{t+1}{2}r - 1$. Then,
\[
\Pr[\Gamma_G(W) \subseteq L] \leq \prod_{i=1}^t
\left(\frac{\ell_i}{|V_i|}\right)^{r} \leq
\left(\frac{(\frac{t+1}{2})r-1}{ts}\right)^{tr},
\]
where the last inequality is a consequence of GM $\leq$ AM. We conclude, using the union bound over choices of $W$ and $L$, that (P1) fails with probability 
at most
\begin{align}
& \sum_{r=1}^{n+\ceil{2n\lg \frac {2m}{n}}} {m\choose r} 
{{ts} \choose \frac{t+1}{2}r-1}\left(\frac{\frac{t+1}{2}r-1}{ts}\right)^{tr}\\
&\leq 
  \sum_{r=1}^{n+\ceil{2n\lg \frac {2m}{n}}}
  \left(\frac{em}{r}\right)^r\left(\frac{tes}{\frac{t+1}{2}r-1}\right)^
  {\frac{t+1}{2}r-1}  
 \left(\frac{\frac{t+1}{2}r-1}{ts}\right)^{tr} \nonumber \\
&\leq
\sum_{r=1}^{n+\ceil{2n\lg\frac {2m}{n}}}
\left[\frac{(e^{\frac{t+3}2 - \frac 1 r}) m  r^{\frac{t-1}2-1 + \frac{1}{r}} }{ 
(s^{\frac 1 r})s^{\frac{t-1}2}} \right]^r \leq \frac{1}{3}, 
\end{align}
where the last inequality holds because we have chosen $s$ large enough.

\item[(P2) holds.] For (P2) to fail, there must exist a set $S\subseteq [m]$ 
of size $n$ such that $|T_S| > \ceil{2n \lg \frac {2m}{n}}$. Fix a set $S$ of size $n$.
Fix a $y \in [m] \setminus S$.
\[
\Pr[y \in T_S]\leq {t \choose \frac {t+1}2}\left(\frac n s\right)^{\frac {t+1}2}
\leq \frac {n} {10m},
\]
where the last inequality holds because of choice of $s$ and $m$ is
large. Thus, $\E[|T_S|]  \leq \frac {n}{10}$.
To conclude that $|T_S|$ is bounded with high probability, we will use the following version of Chernoff bound: if $X = \sum_{i=1}^N X_i$, where each random variable $X_i \in \{0,1\}$ independently, then if $\gamma > 2e \E[X]$, then  $\Pr[X > \gamma] \leq 2^{-\gamma}$. Then,  for all large $m$, 
\[ \Pr[ |T_S| > 2n \lg \frac {2m}{n}] \leq 2^{- 2n \lg \frac {2m}{n}}. \]
Using the union bound, we conclude that
\[\Pr[\mbox{(P2) fails}]
\leq \left( \frac {em} n \right)^n 2^{-2n \lg \frac {2m}{n}}
\leq \frac 1 3.
\]
Thus, with probability at least $\frac 1 3$ the random graph $G$ is admissible.
\end{description}
\end{proof}


\section{General upper bound: adaptive}
\label{sec:multi-probe-ub-adaptive}
In this section, we prove the general adaptive upper bound result: Theorem~\ref{thm:generaladaptive}.
We will use the following definitions from previous works.

\begin{definition}[storing scheme, query scheme, scheme, systematic]
An $(m,n,s)$-{\em storing scheme} is a method for representing a
subset of size at most $n$ of a universe of size $m$ as an $s$-bit
string.  Formally, an $(m,n,s)$-storing scheme is a map $\phi$ from
${{[m]} \choose {\leq n}}$ to $\{0,1\}^s$.  

A deterministic
$(m,s,t)$-{\em query scheme} is a family $\{T_u\}_{u\in [m]}$ of $m$
Boolean decision trees of depth at most $t$. Each internal node in a
decision tree is marked with an index between $1$ and $s$, indicating
the address of a bit in an $s$-bit data structure. For each internal
node, there is one outgoing edge labeled ``0'' and one labeled ``1''.
The leaf nodes of every tree are marked `Yes' or `No'. Such a tree
$T_u$ induces a map from $\{0,1\}^s$ to \{Yes, No\}; this map will
also be referred to as $T_u$.  

An $(m,n,s)$-storing scheme $\phi$ and an $(m,s,t)$-query scheme
$\{T_u\}_{u \in [m]}$ together form an {\em $(m,n,s,t)$-scheme} if
$\forall S\in {[m] \choose \leq n},\, \forall u\in [m]: T_u(\phi(S))
=$Yes if and only if $u \in S$.  Let $s(m,n,t)$ be the minimum $s$
such that there is an $(m,n,s,t)$-scheme.

We say that an $(m,n,s,t)$-scheme is {\em systematic} if the value
returned by each of its trees $T_u$ is equal to the last bit it reads
(interpreting $0$ as No/False and $1$ as Yes/True).
\end{definition}
\newcommand{\leaves}{\mathsf{leaves}}

In order to show that $s(m,n,t)$ is small, we will exhibit efficient
adaptive schemes to store sets of size {\em exactly}
$n$. This will
imply our bound (where we allow sets of size {\em at most} $n$)
because we may {\em pad} the universe with $n$ additional elements,
and extend $S$ ($|S| \leq n$)by adding $n-|S|$ additional elements, to
get a subset is of size exactly $n$ in a universe of size $m+n \leq
2m$.

\begin{definition} 
An adaptive $(m,s,t)$-graph is a bipartite graph $G$ with vertex sets $U=[m]$
and $V$ ($|V|=(2^t-1)s$). $V$ is partitioned into $2^t-1$ disjoint sets:
$A$, $A_0$, $A_1$, $A_{00}$,\ldots, that is, one $A_{\sigma}$ for each $\sigma
\in \{0,1\}^{\leq (t-1)}$; each $A_{\sigma}$ has $s$ vertices. Between
each $u \in U$ and each $A_{\sigma}$ there is exactly one edge. 
Let $V_i:=\cup_{\sigma: |\sigma|=i-1} A_\sigma$.
An $(m,s,t)$-graph naturally gives rise to a systematic
$(m,(2^t-1)s, t)$-query scheme ${\cal T}_G$ as follows. We view the
memory (an array $L$ of $(2^t-1)s$ bits) to be indexed by vertices in
$V$. For query element $u\in U$, if the first ${i-1}$ probes resulted
in values $\sigma \in \{0,1\}^{i-1}$, then the $i$-th probe is made to
the location indexed by the unique neighbor of $u$ in $A_{\sigma}$. In
particular, the $i$-th probe is made at a location in $V_i$. 
We answer ``Yes'' iff the last bit read is $1$.
In addition, we use  following notation.
We refer to $V_t$ as the leaves of $G$ and for $y \in [m]$, let  
$\leaves(y):=V_t \cap \Gamma_G(y)$. For $R \subseteq [m]$, let
$\leaves(R):=V_t \cap \Gamma_G(R)$.

We say that the query scheme ${\cal T}_{G}$ is satisfiable for a set
$S \subseteq [m]$, if there is an assignment to the memory locations
$(L[v]: v \in V)$, such that ${\cal T}_{G}$ correctly answers all
queries of the form ``Is $x$ in $S$?''.
\end{definition}

We assume that $t \geq 3$ is odd and show that $\forall \epsilon >0$ 
$\forall n \leq m^{1-\epsilon}$ $\forall t \leq \frac{1}{10}\lg\lg m$
$s(m,n,t)=O(\exp(e^{2t})m^{\frac 2 {t+1}} n^{1 - \frac 2 {t+1}} \lg m)$. 
Our $t$-probe scheme will
have two parts: a $t_1$-probe non-adaptive part and a $t_2$-probe
adaptive part, such that $t_1 + t_2 = t$. The respective parts will be
based on appropriate non-adaptive $(m,s,t_1)$-graph $G_1$ and adaptive
$(m,s,t_2)$-graph $G_2$ respectively. To decide set membership, we
check set membership in the two parts separately and take the AND,
that is, we answer ``Yes'' iff all bits read in ${\cal T}_{G_1}$ are $1$
and the last bit read in ${\cal T}_{G_2}$ is $1$.  We refer to this
scheme as ${\cal T}_{G_1} \wedge {\cal T}_{G_2}$.

First, we identify appropriate properties of the underlying graphs
$G_1$ and $G_2$ that guarantee that all queries are answered correctly
for sets of size $n$.  We then show that such graphs exist with
$s=O(\exp(e^{2t}-t)m^{\frac 2 {t+1}} n^{1 - \frac 2 {t+1}} \lg m)$.

\newcommand{\surv}{\mathsf{survivors}} We will use the following
constants in our calculations: $\alpha:= 2^{t_2}-1$ and $\beta:=
2^{t_2} - t_2$.  Note that $\alpha$ is the total number of nodes in a
$t_2$-probe adaptive decision tree. In any such decision tree, for
every choice of $\beta$ nodes and every choice $b \in \{0,1\}$ of the
answer, it is possible to assign values to those $\beta$ nodes so that
the decision tree returns the answer $b$.

\begin{definition}[admissible-pair] \label{def:admissible-pair}
We say that a non-adaptive $(m,s,t_1)$-graph $G_1$ and an adaptive
$(m,s,t_2)$-graph $G_2$ form an admissible pair $(G_1,G_2)$ for
sets of size $n$ if the following conditions hold.
\begin{enumerate}
\item[(P1)]
 $\forall S \subseteq [m]$ ($|S| = n$):
$|{\surv}(S)| \leq 10 m \left( \frac n s \right)^{t_1}$,
where ${\surv}(S) = \{y \notin S: \Gamma_{G_1}({y})\subseteq \Gamma_{G_1}(S)\}$.
\item[(P2)] For $S \subseteq [m]$ ($|S| = n$), let 
${\surv}^+(S) = \{y \in {\surv}(S): 
{\leaves_{G_2}}(S) \cap {\leaves_{G_2}}(y) \neq \emptyset \}$. 
Then, 
$\forall S \subseteq [m]$ ($|S| = n$) 
$\forall T \subseteq S \cup {\surv}^+(S)$:
$\Gamma_{G_2}(T) \geq \beta |T|$.
\end{enumerate}
\end{definition}


\begin{lemma} \label{lm:t1t2-admissibleImpliesSatisfiable}
If a non-adaptive $(m,s,t_1)$-graph $G_1$ and an adaptive $(m,s,t_2)$-graph 
$G_2$ form an admissible pair for sets of size $n$, then 
the query scheme ${\cal T}_{G_1} \wedge {\cal T}_{G_2}$ is satisfiable 
for every set $S\subseteq [m]$ of size $n$.
\end{lemma}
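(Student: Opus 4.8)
The plan is to mimic the proof of Lemma~\ref{lm:t-admissibleImpliesSatisfiable}, but now carefully threading the non-adaptive part $G_1$ and the adaptive part $G_2$ together. Fix an admissible pair $(G_1,G_2)$ and a set $S$ of size $n$; we must produce a single assignment $L$ to the shared $(2^{t_1}-1)s + (2^{t_2}-1)s$ memory locations (or however the two memories are laid out) so that $\mathcal{T}_{G_1}\wedge\mathcal{T}_{G_2}$ answers every query correctly. The driving observation is the one already used for the non-adaptive case: an element $u$ outside $S$ is ``dangerous'' for the $G_1$-part only if $\Gamma_{G_1}(u)\subseteq\Gamma_{G_1}(S)$, i.e.\ only if $u\in\surv(S)$; every other $u\notin S$ has a private $G_1$-neighbour that we can set to $0$, forcing $\mathcal{T}_{G_1}$ to reject $u$ regardless of the rest, so the AND rejects $u$ as needed. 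Thus the $G_1$-part alone handles all of $[m]\setminus(S\cup\surv(S))$, and we only need the $G_2$-part to ``clean up'' on $\surv(S)$.

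Next I would use (P1) to bound $|\surv(S)|$, hence $|S\cup\surv^+(S)| \le n + 10m(n/s)^{t_1}$, which with the chosen $s$ is small. On the set $S\cup\surv^+(S)$ I invoke (P2) together with Hall's theorem in the adaptive graph $G_2$: since $|\Gamma_{G_2}(T)|\ge\beta|T|$ for every $T\subseteq S\cup\surv^+(S)$, there is a system of disjoint sets $B_u\subseteq V$, one per $u\in S\cup\surv^+(S)$, with $|B_u|=\beta=2^{t_2}-t_2$ and $B_u$ containing exactly one $G_2$-neighbour of $u$ at each level $V_1,\dots,V_{t_2}$. The structural fact recorded before Definition~\ref{def:admissible-pair} — that in a $t_2$-probe decision tree, any choice of $\beta$ of its $\alpha=2^{t_2}-1$ nodes can be forced to give any prescribed answer — lets me assign values to the locations in $B_u$ so that $\mathcal{T}_{G_2}$ returns the answer ``$u\in S$'' on element $u$: for $u\in S$ I force the $G_2$-tree to accept, and for $u\in\surv^+(S)\setminus S$ I force it to reject. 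For the $G_1$-memory I simply set all $G_1$-neighbours of elements of $S$ to $1$, which makes $\mathcal{T}_{G_1}$ accept each $u\in S$ (this is the all-ones assignment restricted to $\Gamma_{G_1}(S)$), and I leave everything else $0$. Elements of $S$ are then accepted by both parts, hence by the AND.

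It remains to check the elements of $\surv(S)\setminus\surv^+(S)$: these are in $\surv(S)$, so the $G_1$-part may well accept them, but by definition of $\surv^+(S)$ their $G_2$-leaves are disjoint from $\leaves_{G_2}(S)$. Since the only $G_2$-locations we have pushed away from $0$ are inside the sets $B_u$ for $u\in S\cup\surv^+(S)$, and in particular the leaf-level bits we touched for $u\in S$ lie in $\leaves_{G_2}(S)$, the decision path for such a $u$ reads, at its final (leaf) probe, a location set to $0$ — since systematic schemes answer with the last bit read, $\mathcal{T}_{G_2}$ rejects $u$, and the AND rejects $u$. One has to be mildly careful that the all-zero default on the untouched $G_2$-locations, combined with the forced values in the $B_u$'s, is globally consistent: the $B_u$'s are pairwise disjoint by Hall, and the elements of $\surv(S)\setminus\surv^+(S)$ only ever probe leaf locations outside $\leaves_{G_2}(S)$ and non-leaf locations that are either untouched or, if shared with some $B_u$, still do not redirect the path onto a touched leaf — this is where I expect the main technical friction, and I would phrase it as: the path of such a $u$ in $\mathcal{T}_{G_2}$ never reaches a leaf that was set to a forced value, so it ends on a $0$. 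Assembling the three groups — $[m]\setminus(S\cup\surv(S))$ killed by $G_1$, $S$ accepted by both, and $\surv(S)\setminus\surv^+(S)$ killed by $G_2$ at the leaf — gives a fully correct assignment, proving the lemma.
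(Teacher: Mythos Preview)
Your overall approach is exactly the paper's: set $\Gamma_{G_1}(S)$ to $1$ (and the rest of $G_1$ to $0$) to kill everything outside $S\cup\surv(S)$, then use (P2) and Hall on $S\cup\surv^+(S)$ in $G_2$ together with the ``$\beta$ nodes force any answer'' fact, and finally zero the remaining $G_2$-locations. Two small slips: the non-adaptive part $G_1$ has $t_1 s$ locations, not $(2^{t_1}-1)s$; and the Hall sets $B_u\subseteq\Gamma_{G_2}(u)$ have size $\beta=2^{t_2}-t_2$ (i.e.\ $\beta$ of the $\alpha=2^{t_2}-1$ nodes of $u$'s decision tree), not ``one neighbour per level''. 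Also, the bound on $|\surv(S)|$ from (P1) is not needed here; (P2) plus Hall already gives the system of disjoint $B_u$'s.

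The ``technical friction'' you flag is a genuine gap that you do not close. For $y\in\surv(S)\setminus\surv^+(S)$ the definition only yields $\leaves_{G_2}(y)\cap\leaves_{G_2}(S)=\emptyset$; it says nothing about $\leaves_{G_2}(y)\cap\leaves_{G_2}(\surv^+(S))$. Hence the leaf that $y$'s path reaches in $\mathcal{T}_{G_2}$ may lie in some $B_u$ with $u\in\surv^+(S)$, and nothing you wrote rules out that this leaf was set to $1$ during the forcing for $u$. Your sentence ``the path \ldots never reaches a leaf that was set to a forced value'' is not justified (and is in fact false in general). The fix---which the paper also glosses over---is to observe that when forcing the answer $b_u=0$ for $u\in\surv^+(S)$ one may do so while writing only $0$'s to the leaf locations in $B_u$ (an easy induction on $t_2$: with $\beta$ controlled nodes you can always force output $b$ while writing only $b$ to leaves, using internal nodes to steer). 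With that choice, every leaf outside $\leaves_{G_2}(S)$ is either set to $0$ inside some $B_u$ with $u\in\surv^+(S)$ or is unassigned and later zeroed, so the last bit read for any $y\in\surv(S)\setminus\surv^+(S)$ is $0$ and $\mathcal{T}_{G_2}$ rejects it.
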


\begin{lemma} \label{lm:t1t2-admissibleExists}
Let $t \geq 3$ be an odd number; let $t_1 = \frac{t-3}{2}$ and $t_2=\frac{t+3}{2}$.  
Then, there exist an admissible pair of graphs consisting of a non-adaptive $(m,s,t_1)$-graph $G_1$ and an adaptive $(m,s,t_2)$-graph $G_2$ with $s=O(\exp(e^{2t}-t)m^{\frac 2 {t+1}} n^{1 - \frac 2 {t+1}} \lg m)$.
\end{lemma}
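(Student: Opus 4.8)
The plan is to mimic the structure of the proof of Lemma~\ref{lm:t-admissibleExists}: pick $s$ of the stated order, build $G_1$ and $G_2$ by choosing each element's neighbour in each of the relevant vertex classes uniformly and independently, and show that with positive probability both (P1) and (P2) of Definition~\ref{def:admissible-pair} hold. Concretely I would take $s = \ceil{C \exp(e^{2t}-t)\, m^{2/(t+1)} n^{1-2/(t+1)} \lg m}$ for a suitable absolute constant $C$; note that with $t_1=\frac{t-3}{2}$ the target "survivor" bound $10 m (n/s)^{t_1}$ is what drives the choice of exponent, since $2/(t+1) = (t_1+1)^{-1}$ is exactly the threshold making $m (n/s)^{t_1}$ polynomially small, and the $\exp(e^{2t})$ factor is there to absorb the binomial coefficients $\binom{t}{(t+1)/2}$-type terms and the doubly-exponential (in $t$) constant $\beta = 2^{t_2}-t_2$ coming from the adaptive part.

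For (P1): fix $S$ with $|S|=n$ and $y \notin S$. The event $\Gamma_{G_1}(y)\subseteq \Gamma_{G_1}(S)$ requires, in each of the $t_1$ classes $V_i$ of $G_1$, that $y$'s neighbour land among the $\le n$ neighbours of $S$; this has probability at most $(n/s)^{t_1}$, so $\E[|\surv(S)|] \le m (n/s)^{t_1}$. A Chernoff bound (the same statement quoted in the proof of Lemma~\ref{lm:t-admissibleExists}) gives $\Pr[|\surv(S)| > 10 m (n/s)^{t_1}] $ exponentially small in $m(n/s)^{t_1}$; one must check this quantity is $\gg n \lg(m/n)$ so that the union bound over the $\binom{m}{n} \le (em/n)^n$ choices of $S$ closes — here is where $n \le m^{1-\epsilon}$ is used, guaranteeing $m(n/s)^{t_1}$ stays polynomially large in $m$. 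For (P2): first bound $|\surv^+(S)|$. Conditioned on a fixed survivor set $\surv(S)$ of size $\le 10m(n/s)^{t_1}$, a given $y \in \surv(S)$ lies in $\surv^+(S)$ only if its $G_2$-leaf neighbour (in class $V_{t_2}$, of size $s$) collides with one of the $\le n$ leaves of $S$; this happens with probability $\le n/s$ independently across the $y$'s, so $|\surv^+(S)|$ is, with high probability, at most say $20 m (n/s)^{t_1} \cdot (n/s) = 20 m (n/s)^{t_1+1}$, which by the choice of $s$ is $\le n$ (indeed $\ll n$). So $|S \cup \surv^+(S)| \le 2n$. Then the expansion condition $|\Gamma_{G_2}(T)| \ge \beta|T|$ for all $T \subseteq S\cup\surv^+(S)$ is a standard expander-existence estimate: if it fails there is $T$ of size $r \le 2n$ and $L \subseteq V(G_2)$ of size $\beta r - 1$ with $\Gamma_{G_2}(T)\subseteq L$, probability $\le \big(\frac{\beta r}{(2^{t_2}-1)s}\big)^{(2^{t_2}-1)r}$ by GM$\le$AM, and the union bound over $\binom{m}{r}\binom{(2^{t_2}-1)s}{\beta r -1}$ choices is summable once $s$ is a large enough multiple of $m^{1/(t_2/\ldots)}$; checking the exponent arithmetic $\beta/(2^{t_2}-1) $ versus $t_1$ is what fixes the constant $\exp(e^{2t})$.

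The main obstacle is the bookkeeping that makes all three failure probabilities simultaneously bounded by, say, $1/3$ each under one common value of $s$: the (P1) union bound needs $s$ large relative to $m^{1/(t_1+1)}$ times a $\mathrm{poly}(t)$ factor; the $|\surv^+(S)|$ bound needs the extra factor $n/s$ to beat $n$, costing essentially nothing; but the (P2) expander bound involves $\beta = 2^{t_2}-t_2 = 2^{(t+3)/2} - \frac{t+3}{2}$ and $2^{t_2}-1 \approx 2^{(t+3)/2}$, so the per-$r$ term carries a factor like $(e\beta)^{\beta}$ which is $\exp(\Theta(t 2^{t/2}))$ — this is precisely the source of the $\exp(e^{2t})$ (crudely, $\exp(t 2^{t/2}) \le \exp(e^{2t})$ for the relevant range $t \le \frac{1}{10}\lg\lg m$), and one must verify that even after multiplying by this the chosen $s$ keeps the bracketed base in the geometric sum below $1$, which is where the hypothesis $t \le \frac{1}{10}\lg\lg m$ enters (it guarantees $\exp(e^{2t}) \le \exp((\lg m)^{1/5}) = m^{o(1)}$, so $s$ is still $m^{2/(t+1)+o(1)}$ and in particular $o(m)$). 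Once $s$ is pinned down so that $\Pr[\text{(P1) fails}] + \Pr[|\surv^+| \text{ too big}] + \Pr[\text{(P2) expansion fails}] < 1$, a graph realising the admissible pair exists, and combined with Lemma~\ref{lm:t1t2-admissibleImpliesSatisfiable} this yields the claimed $s(m,n,t)$ bound after the padding remark (universe $m+n \le 2m$).
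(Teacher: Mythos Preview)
Your outline contains a genuine gap at the step where you assert that $20 m (n/s)^{t_1+1} \le n$ and hence $|S \cup \surv^+(S)| \le 2n$. This inequality is false for the stated choice of $s$. With $t_1+1 = \frac{t-1}{2}$ and $s = \Theta\!\left(\exp(e^{2t}-t)\, m^{2/(t+1)} n^{1-2/(t+1)} \lg m\right)$, a direct computation gives
\[
m\left(\frac{n}{s}\right)^{t_1+1}
\;=\; \Theta\!\left(\frac{m^{2/(t+1)} n^{1-2/(t+1)}}{\bigl(\exp(e^{2t}-t)\lg m\bigr)^{(t-1)/2}}\right),
\]
which is on the order of $s$ divided by a factor that is only $\exp\!\bigl(O((\lg m)^{0.3}\lg\lg m)\bigr)$ under the hypothesis $t \le \frac{1}{10}\lg\lg m$. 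In particular, since $n \le m^{1-\epsilon}$ forces $(m/n)^{2/(t+1)} \ge m^{2\epsilon/(t+1)}$, this quantity is far larger than $n$ (and even than $n\lg m$) for large $m$. So $\surv^+(S)$ can be huge, and the straightforward expander argument you sketch---a union bound over $\binom{m}{r}$ choices of $T$---fails for $r$ in this range: the base of the geometric term is essentially $m r^{t_2-2}/s^{t_2-1}$, which stays below $1$ only up to $r$ of order $n\cdot\mathrm{polylog}(m)$, not up to the true size of $\surv^+(S)$.

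The paper handles exactly this difficulty by splitting (P2) into three events. Small sets $T$ (size at most $n + \lceil n\lg m\rceil$) are covered by the standard expander estimate you describe. For larger $T = S'\cup R$ with $R \subseteq \surv^+(S)$ and $|R|$ between $\lceil n\lg m\rceil$ and $100\cdot 2^{t_2} m(n/s)^{t_1+1}$, one cannot afford a union bound over all $\binom{m}{|R|}$ subsets of $[m]$; instead the paper restricts $R$ to subsets of $\surv(S)$ (of size at most $10 m(n/s)^{t_1}$, so the count is only $\binom{10 m(n/s)^{t_1}}{|R|}$) and, crucially, exploits the defining property of $\surv^+(S)$---that each $y\in R$ has some leaf in $\leaves_{G_2}(S)$---to gain an extra factor $(2^{t_2} n/s)^{|R|}$ in the probability estimate. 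It is this additional factor that makes the union bound close, and it is the missing ingredient in your argument. (A minor side point: your leaf-collision probability should be $2^{t_2-1}\,n/s$, not $n/s$, since there are $2^{t_2-1}$ leaf classes; also $(t_1+1)^{-1} = 2/(t-1)$, not $2/(t+1)$.)
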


\begin{proof}[Proof of Lemma~\ref{lm:t1t2-admissibleImpliesSatisfiable}]
Fix an admissible pair $(G_1, G_2)$. 
Thus, $G_1$ satisfies (P1) and $G_2$ satisfies (P2) above. Fix a set
$S \subseteq [m]$ of size $n$. We will show that there is an assignment such that
${\cal T}_{G_1} \wedge {\cal T}_{G_2}$ answers all questions of the form ``Is $x$ in $S$?'' correctly.

The assignment is constructed as follows. Assign 1 to all locations in $\Gamma_{G_1}(S)$ and 0 to the remaining locations in $\Gamma_{G_1}([m])$. Thus, ${\cal T}_{G_1}$ answers ``Yes'' for all query elements in $S$ and answers ``No'' for all query elements outside $S \cup \surv(S)$. However, it (incorrectly) answers ``Yes'' for elements in $\surv(S)$. We will now argue that these {\em false positives} can be eliminated using the scheme ${\cal T}_{G_2}$.

Using (P2) and Hall's theorem, we may assign to each element $u \in S
\cup {\surv}^+(S)$ a set $L_u \subseteq V(G_2)$ such that (i)$|L_u| =
\beta$ and (ii) the $L_u$'s are disjoint. Set $b_u = 1$ for $u\in S$ and
$b_u = 0$ for $u \in \surv^+(S)$ (some of the false positives).  As observed
above for each $u \in S \cup \surv^+(S)$ we may set the values in the
locations in $L_u$ such that the value returned on the query element
$u$ is precisely $b_u$.  Since the $L_u$'s are disjoint we may take such
an action independently for each $u$.  After this partial assignment,
it remains to ensure that queries for elements $y \in
\surv(S)\setminus \surv^+(S)$ (the remaining false positives) return a ``No''. Consider any such 
$y$. By the definition of $\surv^+(S)$, no location in
$\leaves_{G_2}(y)$ has been assigned a value in the above partial
assignment. Now, assign 0 to all unassigned locations in $V(G_2)$. Thus
${\cal T}_{G_2}$ returns the answer ``No'' for queries from
${\surv}(S) \setminus {\surv}^+(S)$.
\end{proof}

\begin{proof}[Proof of Lemma~\ref{lm:t1t2-admissibleExists}] In the following, let 
\[ s = \ceil{\exp(e^{2t}-t) m^{\frac{2}{t+1}} n^{1 - \frac{2}{t+1}}\lg m}.\]
We will construct the non-adaptive $(m,s,t_1)$-graph $G_1$ and the 
$(m,s,t_2)$-graph $G_2$ randomly, and show that with positive probability the pair $(G_1,G_2)$ is admissible. 
The graph $G_1$ is constructed as in the proof of 
Lemma~\ref{lm:t-admissibleExists}, and the analysis is similar. Recall that $V(G_1) = \bigcup_{i \in [t_1]} V_i(G_1)$.
For each $u \in U$, one neighbor is chosen uniformly and independently from each $V_i(G_1)$.

\begin{description}
\item[(P1) holds.] Fix a set $S$ of size $n$. Then, 
$\E[|{\surv}(S)|] \leq (m-n)\left(\frac n s\right)^{t_1} 
\leq m \left( \frac n s \right)^{t_1}$.
As before, using the Chernoff bound, we conclude that
\[ \Pr[|{\surv}(S)| > 10 m \left( \frac n s \right)^{t_1}] \leq 
2^{-10m\left(\frac n s  \right)^{t_1}}.\]
Then, by the union bound,
\begin{eqnarray*}
	\Pr[\mbox{P1 fails}]&\leq& {m \choose n}2^{-10m\left(\frac n s  \right)^{t_1}} \\
&\leq& \frac 1 {10},
\end{eqnarray*}
where the last inequality follows from our choice of $s$.

Fix a graph $G_1$ such that (P1) holds. The random graph $G_2$ is
constructed as follows. Recall that $V({G_2})= \bigcup_{z \in
  \{0,1\}^{\leq t_2 - 1}} A_z$.  For each $u \in [m]$, one neighbor is
chosen uniformly and independently from each $A_z$.
\item[(P2) holds]
To establish (P2), we need to show that all sets of the form $S' \cup
R$, where $S' \subseteq S$ and $R \subseteq \surv^+(S)$ expand. To
restrict the choices for $R$, we first show in
Claim~\ref{cl:survivorplus} (a) that with high probability
$\surv^+(S)$ is small. Then, using direct calculations, we show that
whp the required expansion is available in the random graph $G_2$.
\newcommand{\event}{\mathcal{E}}
\begin{claim} \label{cl:survivorplus} 
\begin{enumerate}
\item[(a)] Let $\event_a \equiv \forall S \subseteq [m] (|S| = n):
  |{\surv}^+(S)| \leq 100 \cdot 2^{t_2} m\left(\frac n s \right)^{t_1 + 1}$;
  then, $\Pr[\event_a] \geq \frac{9}{10}$.
\item[(b)] Let $\event_b \equiv \forall R \subseteq [m]\, (|R| \leq n
  + \ceil{n \lg m}): |\Gamma_{G_2}(R)| \geq \beta|R|$; then,
  $\Pr[\event_b] \geq \frac{9}{10}$.
\item[(c)] Let $\event_c = \forall S \subseteq [m] (|S| = n), \forall
  S' \subseteq S, \forall R \subseteq {\surv}^+(S) (\ceil{n \lg m} \leq
  |R| \leq 100 \cdot 2^{t_2} m\left(\frac n s \right)^{t_1 + 1}):
  |\Gamma_{G_2}(S' \cup R)| \geq \beta |S' \cup R|$; then,
  $\Pr[\event_c] \geq \frac{9}{10}$.
\end{enumerate}
\end{claim}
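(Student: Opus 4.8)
The plan is to establish (a), (b), (c) separately, all probabilities taken over the random graph $G_2$ with $G_1$ held fixed so that (P1) holds, and to show each fails with probability at most $\frac1{10}$; then all three events hold for some $G_2$, and for such a $G_2$, property (P2) follows by writing any $T\subseteq S\cup\surv^+(S)$ as $T=(T\cap S)\cup R$ with $R=T\cap\surv^+(S)$: on $\event_a$ we have $|R|\le 100\cdot2^{t_2}m(n/s)^{t_1+1}$, so $\event_b$ supplies $|\Gamma_{G_2}(T)|\ge\beta|T|$ when $|R|\le\ceil{n\lg m}$ and $\event_c$ supplies it otherwise. Each of the three parts is proved in the style of the verification of (P1) in the proof of Lemma~\ref{lm:t-admissibleExists} --- a first-moment bound, the Chernoff bound quoted there, and a union bound --- the point being that the chosen $s$, in particular its factors $\exp(e^{2t})$ and $\lg m$, is large enough to close every union bound. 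Two facts are used throughout: $t\le\frac1{10}\lg\lg m$ keeps $e^{2t}$, $2^{t_2}$ and $\alpha:=2^{t_2}-1$ only $\mathrm{polylog}(m)$, so the exponential-in-$e^{2t}$ factor of $s$ dwarfs all the $e^{\Theta(2^{t_2})}$- and $2^{\Theta(t_2)}$-type constants produced by the $2^{t_2}$ blocks of $G_2$; and $n\le m^{1-\epsilon}$ makes $(m/n)^{2/(t+1)}\ge m^{2\epsilon/(t+1)}$ super-polylogarithmic.

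For (a): $\surv(S)$ depends only on $G_1$, so (P1) gives $|\surv(S)|\le 10m(n/s)^{t_1}$; and for fixed $y\in\surv(S)$, the event $y\in\surv^+(S)$ says that in one of the $2^{t_2-1}$ leaf-blocks the (independent, uniform) neighbor of $y$ lands among the at most $n$ neighbors of $S$ there, so $\Pr[y\in\surv^+(S)]\le 2^{t_2-1}(n/s)$ and $\E|\surv^+(S)|\le 5\cdot2^{t_2}m(n/s)^{t_1+1}$. As $100\cdot2^{t_2}m(n/s)^{t_1+1}>2e\cdot\E|\surv^+(S)|$, Chernoff gives $\Pr[|\surv^+(S)|>100\cdot2^{t_2}m(n/s)^{t_1+1}]\le 2^{-100\cdot2^{t_2}m(n/s)^{t_1+1}}$, and the union bound over the $\le(em/n)^n$ sets $S$ closes because the chosen $s$ makes $100\cdot2^{t_2}m(n/s)^{t_1+1}$ exceed $2n\lg(em/n)$. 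For (b): this is pure vertex-expansion of $G_2$, handled exactly as (P1) in Lemma~\ref{lm:t-admissibleExists} with $t$ replaced by $\alpha$ and $\frac{t+1}2|R|$ by $\beta|R|$; since the slack $\alpha-\beta=t_2-1\ge1$, every summand of the union bound is at most $\big[C_t\,m\,r^{t_2-2}/s^{t_2-1}\big]^r$ for a constant $C_t=C_t(t)$, and the chosen $s$ makes this base at most $C_t/\lg m$ over the whole range $r\le n+\ceil{n\lg m}$, so the sum is $o(1)$.

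Part (c) is the crux; I reveal $G_2$ in two stages, first the $2^{t_2-1}$ leaf-blocks $\mathcal L$ and then the internal blocks $\mathcal I$ (independent of $\mathcal L$ and of $G_1$). After stage one, $\surv(S)$ and $\surv^+(S)$ are determined, and on $\event_a$ the set $\surv^+(S)$ is fixed and has size at most $M:=100\cdot2^{t_2}m(n/s)^{t_1+1}$; the decisive point is that the chosen $s$ forces $M\ll s$. Split the target as $\beta=(2^{t_2-1}-2)+(2^{t_2-1}-t_2+2)$ and correspondingly $|\Gamma_{G_2}(S'\cup R)|=|\Gamma^{\mathcal I}_{G_2}(S'\cup R)|+|\leaves_{G_2}(S'\cup R)|$. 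For the internal term, because $\surv^+(S)$ has size $\le M$ the number of candidates $R$ with $|R|=r$ is only $\binom Mr$, not $\binom mr$, and the expansion argument for the random graph $\mathcal I$ (which has $2^{t_2-1}-1\ge2$ parts and slack $1$) gives, with probability $\ge1-\frac1{20}$ over $\mathcal I$, that $|\Gamma^{\mathcal I}_{G_2}(S'\cup R)|\ge(2^{t_2-1}-2)|S'\cup R|$ for all such $R$: the union bound converges because the base $O(Mr/s^2)$ of the geometric series is tiny, and the range's lower endpoint $\ceil{n\lg m}$ absorbs the $\binom mn2^n$ prefactor.

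It remains to supply the leaf term, i.e.\ to show that (with high probability over $\mathcal L$) $|\leaves_{G_2}(S'\cup R)|\ge(2^{t_2-1}-t_2+2)|S'\cup R|$ for all relevant $S,S',R$; this is the main obstacle, because $\surv^+(S)$ is itself a function of $\mathcal L$, so there is no fixed family of sets to union-bound over. The plan is to use the structure of $\surv^+(S)$: revealing the leaf-neighbors of $S$ pins down $\leaves_{G_2}(S)$ (size $\le2^{t_2-1}n$, at most $n$ per block), and then $\surv^+(S)\subseteq\bigcup_{\sigma,\,v\in\leaves_{G_2}(S)\cap A_\sigma}B_{\sigma,v}$, a union of at most $2^{t_2-1}n$ ``clusters'' $B_{\sigma,v}:=\{y:y\text{'s neighbor in }A_\sigma\text{ is }v\}$, each of expected size $m/s$ and, with high probability, size $O(m/s+\lg m)$. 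Thus (apart from the rare elements lying in several clusters) $R$ is partitioned by the clusters it meets, each element contributing one ``anchored'' leaf-neighbor and $2^{t_2-1}-1$ fresh ones, so a deficient leaf-image of $S'\cup R$ forces many collisions among these fresh neighbors; the number of such configurations is bounded by an encoding argument --- record the offending (element, block) pairs, and for the fresh collisions a forest of within-block parent-pointers, each pair contributing a factor $n/s$ or $r/s$ --- while $\event_a$ and the cluster bound keep the number of eligible $R$ of size $r$ at most $\mathrm{poly}(M)$. Making this rigorous, i.e.\ breaking the circular dependence between ``$R\subseteq\surv^+(S)$'' and ``$\mathcal L$ is good'' (most delicate when $t=3$, where $t_1=0$ and $G_1$ contributes nothing), and then verifying as before that the chosen $s$ dominates every prefactor uniformly for $r\in[\ceil{n\lg m},M]$ --- which is exactly where $n\le m^{1-\epsilon}$ and $t\le\frac1{10}\lg\lg m$ enter --- is the remaining work.
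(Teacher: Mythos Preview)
Your treatments of (a) and (b) are correct and match the paper's: (a) is first-moment plus Chernoff plus a union bound over $S$, and (b) is the expansion argument of (P1) in Lemma~\ref{lm:t-admissibleExists} with $(t,\tfrac{t+1}{2})$ replaced by $(\alpha,\beta)$.

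For (c) you take a substantially different route from the paper, and the step you yourself flag as ``the remaining work'' --- breaking the circular dependence between the condition $R\subseteq\surv^+(S)$ and the leaf randomness --- is a genuine obstacle in your approach that the paper simply sidesteps. The paper does \emph{not} split $\beta$ into a leaf contribution and an internal contribution, and does not reveal $G_2$ in stages. Instead it union-bounds over $R\subseteq\surv(S)$ rather than $R\subseteq\surv^+(S)$. Since $\surv(S)$ depends only on $G_1$, once (P1) holds it is a deterministic set of size at most $10m(n/s)^{t_1}$, so the number of candidates for $R$ of size $r$ is at most $\binom{10m(n/s)^{t_1}}{r}$. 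The condition $R\subseteq\surv^+(S)$ is then folded into the bad event itself: for fixed $S,S',R,L$ with $|L|=\beta(|S'|+|R|)$, define
\[
\event(S,S',R,L)\ \equiv\ \bigl(\forall y\in R:\ \leaves_{G_2}(y)\cap\leaves_{G_2}(S)\neq\emptyset\bigr)\ \wedge\ \bigl(\Gamma_{G_2}(S'\cup R)\subseteq L\bigr),
\]
and bound $\Pr[\event(S,S',R,L)]$ over the full randomness of $G_2$. Each $y\in R$ contributes a factor at most $2^{t_2}n/s$ from the requirement that one of its leaf neighbours land in $\leaves_{G_2}(S)$, and a factor $\bigl(|L|/((\alpha-1)s)\bigr)^{\alpha-1}$ from its remaining $\alpha-1$ neighbours lying in $L$; each element of $S'$ contributes $\bigl(|L|/(\alpha s)\bigr)^{\alpha}$. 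The union bound over $S,S',R,L$ then closes with the chosen $s$ (this is exactly the two-factor bookkeeping in the text).

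In short, the paper enlarges the family of candidate $R$'s from subsets of $\surv^+(S)$ to subsets of the fixed set $\surv(S)$, pays for this with the larger binomial $\binom{10m(n/s)^{t_1}}{r}$, and recoups the cost through the extra factor $(2^{t_2}n/s)^r$ that the now-internalised $\surv^+$ constraint supplies. Your two-stage revelation does handle the internal blocks cleanly (they are independent of $\surv^+(S)$), but the leaf-block half carries exactly the circularity you identify, and your cluster/encoding sketch is not yet a proof; the paper's reformulation makes the whole difficulty disappear.
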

\item[Proof of claim~\ref{cl:survivorplus}.]  Part (a) follows by a routine application of Chernoff bound, as in several previous proofs. For a set $S$ of size $n$, we have $\E[{\surv}^+(S)]\leq |\surv(S)|2^{t_2}(\frac n s)$
        $\leq 2^{t_2}10m\left( \frac n s \right)^{t_1 + 1}$.  Then,
\begin{eqnarray*}
\Pr[\neg \event_a]&\leq& {m \choose n}2^{-2^{t_2}10m\left(\frac n s
\right)^{t_1+1}} \\ 
&\leq& \frac 1 {10},
\end{eqnarray*}
where the last inequality holds because of our choice of $s$.

Next consider part (b).  If $\event_b$ does not hold, then for some
non-empty $W \subseteq [m]$, $(|W| \leq n + \ceil{n \lg m})$, we have
$|\Gamma_{G_2}(W)| \leq \beta |W| -1$. Fix a set $W$ of size $r \geq
1$ and $L \subseteq V(G_2)$ of size $\beta r-1$. Let $L$ have
$\ell_z$ elements in $A_z$. Then,
\[
\Pr[\Gamma_{G_2}(W) \subseteq L] \leq \prod_z \left(\frac{\ell_z}{|A_z|}\right)^{r} \leq
\left(\frac{\beta r-1}{\alpha s}\right)^{ \alpha r}.
\]
We conclude, using the union bound over choices of $W$ and $L$, that the probability that 
$\event_b$ does not hold is at most
\begin{eqnarray*}
&& \sum_{r=1}^{n+\ceil{n\lg m}} {m\choose r}
{{\alpha s} \choose \beta r-1}\left( \frac{\beta r-1}{\alpha s} \right)^{\alpha r}\\
&\leq&
\sum_{r=1}^{n+\ceil{n\lg m}}
\left(\frac {em}{r}\right)^r \left( \frac{\alpha es}{\beta r-1} \right)^{\beta r-1}
\left(\frac{\beta r-1}{\alpha s}\right)^{\alpha r}\\
&\leq&
\sum_{r=1}^{n+\ceil{n\lg m}}
\left(\frac{\beta r}{\alpha es}\right) \left[ \frac {em}{r} e^\beta \left(\frac {\beta r}{\alpha s}\right)^{\alpha-\beta} \right]^r\\
&\leq&
\sum_{r=1}^{n+\ceil{n\lg m}}
\left(\frac{\beta r}{\alpha es}\right) \left[ e^{\beta +1}\left( \frac {\beta}{\alpha} \right)^{\alpha-\beta}  \left(\frac {mr^{\alpha-\beta-1}}{s^{\alpha-\beta}}\right) \right]^r\\
&\leq& \frac {1}{10},
\end{eqnarray*}
where the last inequality holds because of our choice of $s$.

Finally, we justify part (c).  To bound the probability that
$\event_c$ fails, we consider a set $S \subseteq [m]$ of size $n$, a
subset $S' \subseteq S$ of size $i$ (say), a subset $R \subseteq {\surv}^+(S)$ of
size $r$ (where $\ceil{n\lg m} \leq r \leq 100 \cdot 2^{t_2} m\left(\frac n
s \right)^{t_1 + 1}$) and $L \subseteq V(G_2)$ of size $ \ell = 
\beta (i+r) $ and define the event
\[ \event(S,S',R,L) \equiv (\forall y \in R: \leaves_{G_2}(S) \cap \leaves_{G_2}(y) \neq \emptyset) \wedge \Gamma_{G_2} (S' \cup R) \subseteq L.\]
Then,
\begin{align}
 \Pr[\event(S,S',R,L)] &\leq  \left( \frac {2^{t_2}n}{s} \right)^r 
\left( \frac {\ell}{(\alpha-1) s} \right)^{(\alpha-1)r} 
\left( \frac {\ell}{\alpha s} \right)^{\alpha i}\\
&\leq  \left( \frac {2^{t_2}n}{s} \right)^r
      \left( \frac {\beta(i+r)}{(\alpha-1) s} \right)^{(\alpha-1)(i+r)} 
     \left( \frac{\beta(i+r)}{\alpha s} \right)^{i},
\end{align} 
where the factor $\left( \frac {2^{t_2}n}{s} \right)^r $ is justified
because of the requirement that every $y \in R$ has at least one
neighbour in $\leaves_{G_2}(S)$; the factor $\left( \frac
{\ell}{(\alpha-1) s} \right)^{(\alpha-1)r}$ is justified because all
the remaining neighbours must lie in $L$ (we use AM $\geq$ GM); the
last factor $\left( \frac {\ell}{\alpha s} \right)^{\alpha i}$ is
justified because all neighbors of elements in $S'$ lie in $L$ (again
we use AM $\geq$ GM). To complete the argument we apply the union
bound over the choices of $(S,S',R,L)$. Note that we may restrict
attention to $\ell = \beta (i+r)$ (because for our choice of $s$, we
have $\beta(i+r) \leq |V(G_2)| = \alpha s$).  Thus, the probability
that $\event_c$ fails to hold is at most
     \[ \sum_{S,S',R,L} \Pr[\event_c(S,S',R,L)],\] 
     where $S$ ranges over sets of size $n$, $S' \subseteq S$ of size $i$, $R
     \subseteq \surv(S)$ of size $r$ such that $\floor{n \lg m} \leq
     r \leq 100 2^{t_2}m\left( \frac {n}{s} \right)^{t_1 + 1}$, $L$ is
     a subset of $V(G_2)$ of size $\beta(i+r)$. We evaluate this sum
     as follows.
\begin{align}
 & \sum_{r} \sum_{i} {m \choose n}
        {\floor{ 10m \left( \frac {n}{s} \right)^{t_1}} \choose r}
        {n \choose i}
        {\alpha s \choose \beta (i+r)}
        \left( \frac {2^{t_2}n}{s} \right)^{r}
        \left(\frac {\beta(i+r)}{(\alpha-1) s} \right)^{(\alpha-1)(i+r)} 
        \left(\frac{\beta(i+r)}{\alpha s} \right)^{i}
        \\
&\leq
  \sum_{r} \sum_{i} 
   \left[ \left(\frac {em}{n} \right)^{\frac {n}{i+r}} 
                \left( \frac {10 e m \left( \frac {n}{s} \right)^{t_1}} {r} \right)^{\frac {r}{i+r}}
		{n \choose i}^{\frac {1}{1+r}}
                \left( \frac {\beta(i+r)}{(\alpha-1)s} \right)^{\alpha-1} \right.
\nonumber\\
&
\left.
        \left(\frac{e\alpha s}{\beta(i+r)}\right)^\beta
        \left( \frac {2^{t_2}n}{s} \right)^{\frac{r}{i+r}}
        \left(\frac {\beta(i+r)}{(\alpha-1) s} \right)^{\frac{i}{i+r}} 
\right]^{i+r}\\
&\leq
  \sum_{r} \sum_{i} 
   \left[ \left(\frac {em}{n} \right)^{\frac {n}{i+r}} 
                \left( \frac {10 e m \left( \frac {n}{s} \right)^{t_1}} {r} \right)^{\frac {r}{i+r}}
                {n \choose i}^{\frac {1}{1+r}}
		\left( \frac {\beta(i+r)}{(\alpha-1)s} \right)^{\alpha-\beta-1} \right.
\nonumber\\
&
\left.
                \left(\frac{e\alpha}{\alpha-1} \right)^\beta
                \left( \frac {\beta(i+r)}{2^{t_2}n (\alpha-1)} \right)^{ \frac {i}{i+r} } 
                \left( \frac {2^{t_2}n}{s} \right)   
\right]^{i+r}.
\end{align}
We will show that the quantity inside the square brackets is at most
$\frac{1}{2}$. Then, since $r \geq n\lg m$ and $i \geq 0$
\[ \Pr[\neg \event_c] \leq \left(\sum_r 2^{-r}\right) \left(\sum_i 2^{-i} \right) \leq \frac{1}{10}.\]
The quantity in the brackets can be decomposed as a product of two
factors, which we will bound separately.
\begin{description}
\item[Factor 1:] Consider the following contributions
\[\left(\frac {em}{n} \right)^{\frac {n}{i+r}} 
  (10 e)^{\frac{r}{i+r}} 
  {n \choose i}^{\frac {1}{i+r}}
  \left(\frac{e\alpha}{\alpha-1}\right)^{\beta}
  \left(\frac {\beta(i+r)}{2^{t_2}n(\alpha-1)} \right)^{\frac{i}{i+r}}.
\]
Since $r \geq n \lg m$ and $i \leq n$, we have $\frac{i}{i+r} \leq
\frac{n}{n+r} \leq \frac{1}{\lg m} \leq \frac{1}{\lg_e m}$. Thus, for all large enough $m$, 
this quantity is at most
\[ e^2 \cdot 10e \cdot e^2\cdot (2e)^{\beta} \cdot e \leq \exp(e^{2t}-t).\]

\item[Factor 2:] We next bound the contribution for the remaining factors.
\begin{align}
& \left( \frac{m (\frac{n}{s})^{t_1}}{r}\right)^{\frac{r}{i+r}}
\left(\frac{\beta(i+r)}{(\alpha -1) s}\right)^{\alpha-\beta -1}\left(\frac{2^{t_2} n}{s}\right)\\
&\leq \left( \frac{m (\frac{n}{s})^{t_1}}{r}\right)
\left(\frac{2r}{ s}\right)^{\alpha-\beta -1}
\left(\frac{2^{t_2} n}{s}\right) \label{eq:dropexponent} \\
&= \frac{mn^{t_1+1}2^{\alpha-\beta+t_2 -1} r^{\alpha-\beta -2}}{s^{\alpha-\beta +t_1}}.
\end{align}
To justify (\ref{eq:dropexponent}),
recall that $r \leq 100 \cdot 2^{t_2} m\left(\frac n s \right)^{t_1 + 1}$
and \newline
$s = \ceil{\exp(e^{2t}-t) m^{\frac{2}{t+1}} n^{1 - \frac{2}{t+1}}\lg m}$; thus $\frac{m (\frac{n}{s})^{t_1}}{r} \geq 1$.
Then, the above quantity is bounded by
\begin{align}
& \frac{mn^{t_1+1}2^{2(t_2-1)} \left(100 \cdot 2^{t_2} m n^{t_1+1}\right)^{\alpha-\beta-2}}{s^{(t_1+1)(\alpha-\beta - 2)} s^{\alpha - \beta + t_1}}\\
&\leq \left(\frac{100\cdot 2^{2t_2} m n^{t_1+1}}{s^{t_1+2}}\right)^{\alpha-\beta-1}.
\end{align}
\end{description}
Thus, since $s = \ceil{\exp(e^{2t}-t) m^{\frac{2}{t+1}} n^{1 -
    \frac{2}{t+1}}\lg m}$, then the product of the factors is at most
$\frac{1}{10}$, as required.
\end{description}
\end{proof}


\section{Three non-adaptive probes lower bound}

In this section, we prove the three probe lower bound result: Theorem~\ref{result3}. 

\begin{definition}[Equivalent]
	Two boolean functions are called equivalent if one can be obtained from the other by negating and permuting the variables.
\end{definition}

\begin{proposition}
	Let $f,g:\{0,1\}^t \rightarrow \{0,1\}$ be equivalent. If $s_1$ and $s_2$ are the minimum bits of space required for 
	non-adaptive $(m,n,s_1,t)$ and $(m,n,s_2,t)$-schemes with query functions $f$ and $g$ respectively, then $s_1=s_2$.
\end{proposition}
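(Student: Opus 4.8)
The plan is to transform any scheme for one query function into a scheme for the other using the same number of storage bits. The equivalence relation on Boolean functions is generated by two elementary moves — transposing two input coordinates, and complementing a single input coordinate — and each move is its own inverse, so it suffices to show that whenever $g$ is obtained from $f$ by one such move we have $s^g\le s^f$ (where $s^h$ denotes the minimum storage of a non-adaptive $t$-probe scheme all of whose query functions equal $h$); applying this twice gives $s^f\le s^g$, and chaining along a sequence of moves carrying $f$ to $g$ yields $s_1=s_2$.

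The transposition case is immediate. Suppose $g$ is $f$ with coordinates $a$ and $b$ interchanged. Take an optimal $f$-scheme with storage map $\phi$ and probe tuples $(i_1(x),\dots,i_t(x))$; keep $\phi$, adopt query function $g$, and let the probe tuple for $x$ be $(i_1(x),\dots,i_t(x))$ with its $a$-th and $b$-th entries swapped. For every $S$ and every $x$, applying $g$ to the values read reproduces $f$ evaluated on the original ordered tuple $(\phi(S)[i_1(x)],\dots,\phi(S)[i_t(x)])$, which equals $[x\in S]$, and the storage length is unchanged, so $s^g\le s^f$.

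For complementation of coordinate $k$, so that $g(y_1,\dots,y_t)=f(y_1,\dots,\overline{y_k},\dots,y_t)$, the intended transformation keeps the probe tuples, adopts query function $g$, and replaces $\phi$ by $\phi'$ in which $\phi'(S)$ agrees with $\phi(S)$ except that every memory bit read in the $k$-th probe slot by some query element is complemented. Then each query reads $\overline{\phi(S)[i_k(x)]}$ in slot $k$ and $\phi(S)[i_j(x)]$ in every other slot, so $g$ of these values equals $f(\phi(S)[i_1(x)],\dots,\phi(S)[i_t(x)])=[x\in S]$, again without changing the storage length, giving $s^g\le s^f$. The step that requires care — and which I expect to be the main obstacle — is that this prescription is consistent only if no memory bit is read in slot $k$ by one query element and in a different slot by another. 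I would handle this by first normalizing the scheme so that each memory cell plays a single fixed coordinate role (i.e. is probed in the same slot by all elements that touch it), after which the complementation of the slot-$k$ cells is well defined; arguing that an optimal valid scheme may be taken to be slot-consistent in this sense without increasing its number of bits is the crux of the argument, and everything else is routine bookkeeping.
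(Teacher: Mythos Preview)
The paper states this proposition without proof, so there is nothing to compare against directly. Your reduction for the transposition move is correct, and you have correctly isolated the only nontrivial point: for the negation move, flipping the bits read in slot~$k$ is well defined only if every memory cell is probed in at most one slot.

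However, your proposed resolution---that an optimal scheme can always be taken to be slot-consistent without increasing the number of bits---is false, and in fact the proposition as literally stated ($s_1=s_2$) is false. Take $t=2$, $m=2$, $n=1$, $f(y_1,y_2)=y_1\vee y_2$ and $g(y_1,y_2)=\bar{y_1}\vee y_2$; these are equivalent via negating the first input. For $f$ one achieves $s=2$ by setting $i_1(a)=i_2(a)=1$ and $i_1(b)=i_2(b)=2$ (store the characteristic vector). For $g$, however, $s=2$ is impossible: if $i_1(x)=i_2(x)$ for some $x$ then $g$ always outputs~$1$ on~$x$, so both elements must have distinct probe pairs; up to relabelling, $(i_1(a),i_2(a))=(1,2)$, and then $(i_1(b),i_2(b))\in\{(1,2),(2,1)\}$ forces either identical behaviour for $a$ and $b$ or an inconsistency already for $S=\emptyset$. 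Hence $s_g\ge 3$ (and $s_g=3$ is attained). So no argument can establish exact equality, and in particular your slot-consistency normalisation must sometimes cost extra bits.

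What \emph{is} true, and is all the paper needs, is $s_1\le t\,s_2$ (hence $s_1=\Theta(s_2)$): replace the memory by $t$ disjoint copies and let the $j$-th probe of every element go into the $j$-th copy. This is exactly the ``expand by a factor~$3$'' manoeuvre the paper invokes a few lines later. With slot-consistency in hand, your negation step goes through verbatim. Since every lower and upper bound in the paper is asymptotic, the factor-$t$ loss is harmless; but you should state the conclusion as $s_1=\Theta(s_2)$ rather than $s_1=s_2$.
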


For three variable boolean functions, there are twenty-two equivalence classes (see~\cite{threeAry}). To prove Theorem~\ref{result3},
we provide proofs for these twenty-two query functions, each from a different class. In many proofs below we assume that the memory consists of three arrays of size $s$ each, and the three probes
are made on different arrays. Given any scheme that uses space $s$, we can always modify it to meet our assumption, by expanding the space
by factor 3.

\subsection{Decision trees of height two}

Seven of the twenty-two classes contain functions that can be
represented by a decision tree of height at most two. Thus, for these
functions, the two probe adaptive lower bound~\cite{GR2015} implies the
result. These functions are: constant 0, constant 1, the DICTATOR
function $(x,y,z) \mapsto x$, the function $(x,y,z)\mapsto x \wedge
y$, its complement $(x,y,z)\mapsto \bar{x} \vee \bar{y}$,
$(x,y,z)\mapsto (x\wedge y)\vee(\bar{x}\wedge z)$, and $(x,y,z)\mapsto
(x\wedge y)\vee (\bar{x}\wedge\bar{y})$.

\newcommand{\Maj}{\mathsf{Maj}}

\subsection{MAJORITY}
Let $\Phi$ be a non-adaptive $(m,n,s,3)$-scheme with MAJORITY as the query function.
The memory is a bit array $A[1,\cdots,s]$ of length $s$.
For each element $u\in[m]$, $x(u),y(u),z(u) \in [s]$ are the three distinct locations in $A$ that are
probed to determine whether $u$ is in the set or not. For each set $S\subseteq[m]$ 
of size at most $n$, the assignment $\sigma(S) \in \{0,1\}^s$ to $A$ is
such that for all elements $u\in[m]$, $\Maj(A[x(u)],A[y(u)],A[z(u)])$ is 1 iff $u\in S$, where $\Maj$ 
is the MAJORITY of 3 bits.

\begin{definition}(model-graph for $\Phi$, third vertex, meet)
	Let $\Phi$ be a $(m,n,s,3)$-scheme with MAJORITY as the query function. Fix a graph $G$ such that
$V(G)=[s]$, $|E(G)|=m$ and edge labels: $\{\lab(e)|e\in E(G)\}=[m]$ (there is a unique edge for each label 
in $[m]$). $G$ is called a {\em model-graph} for $\Phi$ if
for each $u\in [m]$ the edge labelled $u$ has the set of endpoints in $\{\{x(u),y(u)\},\{y(u),z(u)\},\{z(u),x(u)\}\}$.
For example, the graph $G=([s], \{x(u)\upseudoedge{u} y(u)| u\in [m]\})$ is a model graph for $\Phi$.

In a model-graph for $\Phi$, let $e$ be the set of endpoints of the edge with label $u$. The element in the singleton 
$(\{x(u),y(u),z(u)\} \setminus e)$ is defined to be the \textit{third} vertex of $u$.

Two edge-disjoint cycles $C_1$ and $C_2$ are said to \textit{meet} in a model-graph for $\Phi$ if there exist
elements $u,v\in[m]$ such that the third vertices of $u$ and $v$ are the same vertex and the edges labelled $u$
and $v$ are in the different cycles $C_1$ and $C_2$ respectively.
\end{definition}

\begin{definition}
A model-graph $G$ for an 
$(m,n,s,3)$-scheme with MAJORITY as the query function is said to be \textit{forced}
if at least one of the following three conditions hold.
\begin{enumerate}
	\item[(P1)] $\exists$ edge-disjoint odd cycles $C_1, C_2$ in $G$ with lengths at most $n$ each that
		intersect at a vertex.

\item[(P2)] $\exists$ edge disjoint even cycles $C_1, C_2$ in $G$ with lengths at most $n$ each and
	$C_1$ and $C_2$ meet.

\item[(P3)] $\exists$ an even cycle $C$ of length at most $n$, such that some two edges in $C$, labelled $e$ and $f$ say,
	have an even number of edges between them (while traversing the edges of the cycle in order) and the third 
	vertices of $e$ and $f$ are the same vertex.
\end{enumerate}
\end{definition}

\begin{lemma} \label{lm:modelGraphIsNeverForced}
A model-graph for a scheme with MAJORITY as the query function cannot be forced.
\end{lemma}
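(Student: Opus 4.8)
The plan is to show that each of the three forcing conditions (P1), (P2), (P3) is incompatible with the existence of a valid assignment $\sigma(S)$ for every small set $S$. In each case I will exhibit a specific set $S$ of size at most $n$ (built from the labels lying on the short cycle(s) witnessing the forcing condition) for which no consistent $0/1$ assignment to $A$ exists, contradicting the validity of $\Phi$. The common mechanism is that MAJORITY on three bits behaves like a ``propagation'' rule along an edge of the model graph once one endpoint bit is fixed: if the third vertex of $u$ is set to $1$, then ``$u\in S$'' forces the two cycle-endpoints of $u$ to satisfy $A[\text{one endpoint}]\vee A[\text{other endpoint}]$ patterns, and ``$u\notin S$'' forces an analogous constraint; walking these constraints around a cycle produces a parity obstruction.

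For (P1): take $S$ to be the set of labels on the two edge-disjoint odd cycles $C_1,C_2$ that share a vertex $w$. Setting ``$u\in S$'' for each such label means MAJORITY is $1$ at every edge, so along each cycle the endpoint bits must all be equal (a $0$ at one endpoint would force the neighbour bits up, and chasing this around an \emph{odd} cycle is inconsistent unless all endpoint bits on the cycle are $1$); but then elements whose third vertex lies on the cycle are also forced, and because the two odd cycles meet at $w$ one obtains a bit that must be simultaneously forced to $0$ and $1$. For (P2): here the cycles are even, so an all-equal pattern on each cycle is self-consistent, and instead I use the ``meet'' structure — there are labels $u\in C_1$, $v\in C_2$ with a common third vertex $p$. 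I pick $S$ so that the parity of the even cycles lets me choose the endpoint pattern on $C_1$ to force $A[p]=1$ (via $u$) while the pattern on $C_2$ forces $A[p]=0$ (via $v$), again a contradiction; the even lengths are exactly what gives the freedom to prescribe the value at the shared third vertex from each side independently. For (P3): within a single even cycle $C$, two edges $e,f$ an even number of steps apart share a third vertex $q$. Choosing membership/non-membership alternately (or appropriately) along $C$, the even-gap condition means the constraints coming from $e$ and from $f$ pin $A[q]$ to opposite values, giving the contradiction from one cycle alone.

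The main obstacle — and the part that needs the careful case analysis rather than a one-line argument — is verifying that in each configuration the chain of MAJORITY constraints around a cycle genuinely \emph{forces} a unique value at the relevant vertex rather than merely constraining it, and in particular handling the two sub-cases of where a short cycle's edge sits relative to its own third vertex (the third vertex may or may not lie on the cycle, and if it does, its induced constraint must be folded into the propagation). I would organize this by first proving a small \emph{propagation claim}: ``if $G$ is a model graph for a valid scheme and $C$ is a cycle all of whose labels are put into $S$, then all cycle-endpoint bits on $C$ are $1$ when $|C|$ is odd, and are all-equal when $|C|$ is even; dually if all labels are kept out of $S$,'' together with the companion statement about what a $1$ at a third vertex forces on its edge's endpoints. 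Once that lemma is in hand, (P1)–(P3) each reduce to a short parity bookkeeping, and the three contradictions follow. Throughout I use that $n$ is large enough that the cycles in question (length at most $n$) can all be absorbed into a single set $S$ of size at most $n$, which is guaranteed since the forcing conditions bound every relevant cycle length by $n$ and the witnessing sets of labels are unions of at most two such cycles.
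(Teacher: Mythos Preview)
Your overall strategy—show that each of (P1), (P2), (P3) yields a set $S$ of size at most $n$ that cannot be represented—is the same as the paper's, and your treatment of (P2) and (P3) (forcing the shared third vertex to two different values) is on the right track. However, your handling of (P1), and the central ``propagation claim'' you propose, are genuinely flawed.

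First, the propagation claim is false. You assert that if all labels on a cycle are placed in $S$, then the endpoint bits must be all equal (hence all $1$ on an odd cycle). But with all edges in $S$, a $0$ at one endpoint only forces its two neighbours to be $1$; once a neighbour is $1$, the next edge imposes no constraint on the following vertex, because $\Maj(1,b,\text{third})=1$ is satisfiable for either value of $b$. Concretely, on a triangle with cycle vertices $v_0,v_1,v_2$ assigned $0,1,1$ and all third vertices assigned $1$, every edge has majority $1$, yet the endpoint bits are not all equal. So putting all cycle labels in $S$ does not create the parity obstruction you claim, and in fact the set of all cycle labels can typically be represented.

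Second, your set $S$ for (P1) is the union of the labels on two edge-disjoint cycles, each of length up to $n$; this can have as many as $2n$ elements, so it need not satisfy $|S|\le n$. Your final paragraph asserts otherwise, but the forcing conditions bound each cycle by $n$, not by $n/2$.

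The fix the paper uses addresses both problems simultaneously: instead of putting all labels in $S$, it alternates membership along each cycle (odd-indexed edges of $C_1$ in $S$, even-indexed out, and the reverse on $C_2$). The alternating pattern is what makes the propagation genuine: if $e_i\in S$ and one endpoint is $0$ the other must be $1$, while if $e_{i+1}\notin S$ and one endpoint is $1$ the other must be $0$, so a bit truly flips at every step and walking around an odd cycle forces the shared vertex to both values. Moreover, alternating halves the number of labels taken from each cycle, bringing $|S|$ down to at most $n$. The same alternating device is what makes (P2) and (P3) work cleanly—the endpoint pair on the distinguished edge is forced to be $\{0,1\}$, so the answer is decided by the third vertex, and the two cycles (or the two edges at even distance) then pin that third vertex to opposite values.
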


\begin{lemma} \label{lm:schemeImpliesModelGraphForced}
	Any $(m,n,\floor{\frac 1 6m^{1-\frac 1{\floor {\frac n 2} + 1}}},3)$-scheme with MAJORITY as the query
	function has a forced model-graph.
\end{lemma}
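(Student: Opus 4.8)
The plan is to start from a valid $(m,n,s,3)$-scheme $\Phi$ with MAJORITY as the query function and $s = \floor{\frac16 m^{1 - 1/(\floor{n/2}+1)}}$, form the model-graph $G = ([s], \{x(u) \upseudoedge{u} y(u) : u \in [m]\})$, and show that $G$ must satisfy one of (P1), (P2), (P3). Since $G$ has $s$ vertices and $m$ edges, its average degree is $2m/s \geq \frac13 m^{1/(\floor{n/2}+1)}$, so by the Moore bound (in the form: a graph with $N$ vertices and average degree $d$ contains a cycle of length at most $2\log_{d-1} N + 2$, or more precisely that high average degree forces short cycles) $G$ is far too dense to have girth exceeding $n$. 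The precise constant $\frac16$ and the exponent $\frac1{\floor{n/2}+1}$ are chosen exactly so that $G$ contains a cycle of length at most $n$; moreover, because the density is a \emph{power} of $m$ rather than barely above the threshold, I expect that $G$ contains \emph{many} short cycles, enough to find either two edge-disjoint short cycles sharing a vertex, or a single short cycle with the extra structure demanded by (P2)/(P3) via the ``third vertex'' map. The main work is a counting/pigeonhole argument to extract this richer structure, not merely one short cycle.

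Here is how I would organize the case analysis. First, iteratively peel off short cycles: as long as the remaining graph has average degree at least (say) $\frac16 m^{1/(\floor{n/2}+1)}$, it contains a cycle of length $\leq n$; delete its edges and repeat. Since each such cycle has at most $n$ edges, and we only stop once the edge count drops below roughly $\frac{s}{12} m^{1/(\floor{n/2}+1)}$, we obtain a large family $\mathcal{C}$ of pairwise edge-disjoint cycles of length $\leq n$, with $|\mathcal{C}| \gtrsim s m^{1/(\floor{n/2}+1)} / (12 n)$. Now split on parity of the cycles in $\mathcal{C}$. If two cycles in $\mathcal{C}$ are odd and share a vertex, (P1) holds. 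If there are many odd cycles but they are vertex-disjoint, they would occupy too many vertices (each has $\geq 3$ of the only $s$ vertices), contradicting $|\mathcal{C}|$ being large — so after discarding we may assume all but few cycles in $\mathcal{C}$ are even. Similarly, among the even cycles: for each cycle $C \in \mathcal{C}$ and each edge $e$ of $C$, consider the third vertex $\tau(e)$ of the label of $e$. Each even cycle contributes $\geq 4$ edges, hence $\geq 4$ third-vertex values lying in $[s]$; a pigeonhole over the at most $s$ possible third vertices, applied to the $\gtrsim s m^{1/(\floor{n/2}+1)}/(3n)$ edges spread over the cycles, produces two edges $e, f$ with $\tau(e) = \tau(f)$. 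If $e$ and $f$ lie in different cycles of $\mathcal{C}$, then those two even cycles meet and (P2) holds; if they lie in the same (even) cycle $C$, then since $C$ has even length, the two arcs of $C$ between $e$ and $f$ have lengths of the same parity, so one of the two counts of edges strictly between them is even, and (P3) holds.

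The step I expect to be the main obstacle is the bookkeeping that forces \emph{two} edges with a common third vertex to land in a useful configuration — i.e. ruling out the degenerate possibility that all coincidences $\tau(e) = \tau(f)$ happen to occur only within individual cycles in a way that always gives odd edge-separation, thereby dodging (P3). Handling this cleanly requires being slightly careful: either one argues that within a single even cycle any pair of equal-third-vertex edges already forces the even-separation case because there are two complementary arcs of the same parity, or one uses the abundance of cycles to guarantee a cross-cycle coincidence directly (which needs the number of cycles, not just the number of edges, to exceed $s$ — and $|\mathcal{C}| \gtrsim s m^{1/(\floor{n/2}+1)}/(12n) \gg s$, so this is fine). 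A secondary technical point is getting the Moore-bound constants to fit the claimed $s = \floor{\frac16 m^{1 - 1/(\floor{n/2}+1)}}$ exactly, where I would use the standard bound that a graph of girth $> g$ on $N$ vertices has at most $N^{1 + 2/(g-1)}$ edges (for $g$ odd) or $N^{1+2/g}$-type bounds, and check that with girth $> n$ one gets $m \leq s^{1 + 2/(n-1)} \leq s \cdot s^{1/\floor{n/2}} $, contradicting the density; the factor $\frac16$ and the passage through $\floor{n/2}+1$ rather than $n$ absorb the slack between even and odd $n$ and the constant in the Moore bound.
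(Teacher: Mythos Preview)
Your peeling-and-pigeonhole outline is close to the paper's argument, but there is a real gap in the step where two edges $e,f$ with the same third vertex lie in the \emph{same} even cycle $C$. You argue that because $|C|$ is even the two arcs between $e$ and $f$ have the same parity, ``so one of the two counts of edges strictly between them is even.'' This is a non sequitur: same parity means both even \emph{or both odd}. When $e$ and $f$ sit at positions of the same parity along $C$ (say $e=e_1$ and $f=e_{2k+1}$ in a cycle $e_1,\dots,e_{2\ell}$), both arc-gaps $2k-1$ and $2\ell-2k-1$ are odd and (P3) does not apply. Your proposal offers no fallback for this case, and since you work throughout with the single fixed model-graph $G=([s],\{x(u)\upseudoedge{u}y(u)\})$, you cannot extract (P1) or (P2) from this configuration either.

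The paper resolves exactly this case by exploiting the non-uniqueness of the model-graph: for each element $u$ one may take \emph{any} two of $\{x(u),y(u),z(u)\}$ as the endpoints of the edge labelled $u$. In the bad case above (their Step~5), they re-route the edges labelled $e_1$ and $e_{2k+1}$ through the common third vertex $w$, turning the even cycle into an \emph{odd} cycle $w\pseudoedge{e_1} v_1\pseudoedge{e_2}\cdots\pseudoedge{e_{2k}} v_{2k}\pseudoedge{e_{2k+1}} w$ of length $2k+1\le n$, which is then placed in an ODD bin. Apart from this, their procedure is essentially your peeling argument: repeatedly extract cycles of length $\le n$, sort them into ODD (odd length) and EVEN (even length with all third vertices distinct within the cycle), and stop once the total extracted length exceeds $2s$; then either the ODD total exceeds $s$ and two odd cycles share a vertex (P1), or the EVEN total exceeds $s$ and two even cycles meet via a common third vertex (P2). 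So the one missing idea in your plan is precisely this model-graph modification; without it the case analysis does not close.
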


From lemmas~\ref{lm:modelGraphIsNeverForced} and \ref{lm:schemeImpliesModelGraphForced}, it follows that when MAJORITY is used as the query function,
$s_N(m,n,3)>\frac 1 6m^{1-\frac{1}{\floor{\frac n 2} +1}}$.

\begin{proof}[Proof of Lemma~\ref{lm:modelGraphIsNeverForced}]
Fix a $(m,n,s,3)$-scheme $\Phi$ with MAJORITY as the query function. Fix a model-graph $G$ for $\Phi$.
Assume $G$ is forced, that is, it satisfies (P1) or (P2) or (P3) above.
\begin{description}
	\item[Case: (P1) holds.] (P1) implies that there are edge-disjoint cycles $C_1$ and $C_2$ in $G$ such that, 
		\[C_1: u_0 \pseudoedge{e_1} u_1 \pseudoedge{e_2} \cdots
		\pseudoedge{e_{2k+1}} u_{2k+1}=u_0,\] \[C_2: u_0 \pseudoedge{f_1} v_1 \pseudoedge{f_2} \cdots
		\pseudoedge{f_{2l+1}} v_{2l+1}=u_0,\] and $2k+1, 2l+1 \leq n$.
		Let $S_0=\{e_1,e_3,\cdots,e_{2k+1}\}\cup\{f_2,f_4,\cdots,f_{2l}\}$ and $S_1=\{e_2,e_4,\cdots,e_{2k}\}\cup\{f_1,f_3,\cdots,f_{2l+1}\}$.
		Note that $|S_0|=|S_1|\leq n$. We claim that $\Phi$ cannot represent any set $S$ such that \[S_0\subseteq S \subseteq \bar{S_1}.\]
	In particular, $\Phi$ cannot represent the set $S_0$. Assume $\Phi$ represents such an $S$. We claim that $u_0$ cannot be assigned a 0.
	If $u_0$ is assigned a 0, then since $e_1\in S$, $u_1$ must be assigned a 1. Otherwise, $\Maj(A[x(e_1)],A[y(e_1)],A[z(e_1)])=\Maj(0,0,b)=0$,
	where $b$ is the bit assigned to the location in $\{x(e_1),y(e_1),z(e_1)\}\setminus \{u_0,u_1\}$. Since, $u_1$ is assigned a 1 and $e_2\notin S$, $u_2$ must be assigned a 0. Similarly, since $e_3\in S$, $u_3$ must be assigned
	a 1 and so on. Finally, $u_{2k+1}=u_0$ must be assigned a 1. A contradiction. Hence $u_0$ cannot be assigned a 0.

	Again, we claim $u_0$ cannot be assigned a 1. For if $u_0$ is assigned a 1, since $f_1\notin S$, $v_1$ must be assigned a 0. Again, since
	$f_2\in S$, $v_2$ must be assigned a 1 and so on. Finally, $v_{2k+1}=u_0$ must be assigned a 0. A contradiction. 

	Since $u_0$ can neither be assigned a 0 or a 1, we get a contradiction.
	
	\begin{remark}In the proofs below, we will often encounter similar arguments, where we will have a cycle of dependencies: assigning
a particular bit to a location will force the assignment to the next location along the cycle.
\end{remark}

\item[Case: (P2) holds.] (P2) implies that there are edge-disjoint cycles $C_1$ and $C_2$ in $G$ such that, 
		\[C_1: u_0 \pseudoedge{e_1} u_1 \pseudoedge{e_2} \cdots
		\pseudoedge{e_{2k}} u_{2k}=u_0,\] \[C_2: v_0 \pseudoedge{f_1} v_1 \pseudoedge{f_2} \cdots
		\pseudoedge{f_{2l}} v_{2l}=v_0,\] $2k, 2l \leq n$, and the third vertices of $e_1$ and $f_1$ are the same vertex $w$. 
		Let $S_0=\{e_1,e_3,\cdots,e_{2k-1}\}\cup\{f_2,f_4,\cdots,f_{2l}\}$ and $S_1=\{e_2,e_4,\cdots,e_{2k}\}\cup\{f_1,f_3,\cdots,f_{2l-1}\}$.
		Note that $|S_0|=|S_1|\leq n$. We claim that $\Phi$ cannot represent any set $S$ such that \[S_0\subseteq S \subseteq \bar{S_1}.\]
	In particular, $\Phi$ cannot represent the set $S_0$. Assume $\Phi$ represents such an $S$. Since $e_1\in S$, either 
	the location $u_0$ or the location $u_1$ of the memory A must be assigned a 1,
	otherwise $\Maj(A[x(e_1)],A[y(e_1)],A[z(e_1)])=\Maj(A[u_0],A[u_1],A[w])=\Maj(0,0,A[w])=0.$ Assume $u_1$ is assigned
	a 1. Then, since $e_2$ is not in the set, using a similar argument, $u_2$ must be assigned a 0. Similarly, $u_3$ must be assigned a 1
	and so on. Finally, $u_{2k}=u_0$ must be assigned a 0. Similarly, if $u_0$ was assigned a 1, then $u_1$ must be assigned a 0. Thus, 
	$\Maj(x(e_1),y(e_1),z(e_1))=\Maj(0,1,A[w])=A[w]$. Hence $w$ must be assigned a 1.

	Again, since $f_1$ is not in $S$, either $v_0$ or $v_1$ is assigned a 0. If $v_1$ is assigned 0,
	$v_2$ must be assigned a 1, $v_3$ a 0, and so on. Finally, $v_{2l}=v_0$ must be assigned a 1. Similarly, if $v_0$ is assigned 1, then $v_1$ is assigned 
	a 0. Therefore, $\Maj(x(f_1),y(f_1),z(f_1))=\Maj(A[v_0],A[v_1],A[w])=\Maj(0,1,A[w])=A[w]$. Since $f_1\notin S$, $w$ 
	must be assigned a 0. A contradiction.

\item[Case: (P3) holds.] (P3) implies that there is a cycle $C$: \[v_0 \pseudoedge{e_1} v_1 \cdots \pseudoedge{e_{2k}} v_{2k} \cdots 
	\pseudoedge{e_{2l}} v_{2l}=v_0,\] $2k\leq2l\leq n$ and the third vertices of $e_1$ and $e_{2k}$ are the same vertex $w$.
	Let $S_0=\{e_1,e_3,\cdots,e_{2l-1}\}$ and $S_1=\{e_2,e_4,\cdots,e_{2k},\cdots,e_{2l}\}$. Note that $|S_0|=|S_1|\leq n$. 
	We claim that $\Phi$ cannot represent any set $S$ such that \[S_0\subseteq S \subseteq \bar{S_1}.\] In particular, $\Phi$
	cannot represent the set $S_0$. Assume $\Phi$ represents such an $S$. Since $e_1\in S$, either $v_0$ or $v_1$ must be assigned
	a 1. Assume that $v_1$ is assigned a 1. Then, since $e_2\notin S$, $v_2$ must be assigned a 0. Again, since $e_3\in S$, $v_3$
	must be assigned a 1 and so on. All locations in $R:=\{v_{2r}| 0\leq r\leq l\}$ must be assigned a 0 and all locations in $Q:=\{v_{2r+1}|
	0\leq r\leq l-1\}$ must be assigned a 1. Else if $v_0$ is assigned a 1, then all locations in $R$ must be assigned a 1 and 
	all locations in $Q$ must be assigned a 0. Now, $\Maj(x(e_1),y(e_1),z(e_1))=\Maj(A[v_0],A[v_1],A[w])=\Maj
	(0,1,A[w])=A[w]$. Since $e_1\in S$, $w$ must be assigned a 0. Similarly, $\Maj(x(e_{2k}),y(e_{2k}),z(e_{2k}))=\Maj
	(A[v_{2k-1}],A[v_{2k}],A[w])=\Maj(0,1,A[w])=A[w]$. Since $e_{2k}\notin S$, $w$ must be assigned a 1. A contradiction.
\end{description}
\end{proof}

In order to prove Lemma~\ref{lm:schemeImpliesModelGraphForced} we will make use of the following proposition, which is a consequence of
a theorem of Alon, Hoory and Linial~\cite{AHL2002} (see also Ajesh Babu and Radhakrishnan~\cite{BR2009}).

\begin{proposition} \label{prop:densityImpliesSmallCycle}\label{prop:Moore}
	Fix a graph $G$ such that the average degree $d\geq 2$. Then, \[(d-1)^k > |V(G)| \implies \exists \text{ a  cycle } C\subseteq E(G), |C| \leq 2k.\]
\end{proposition}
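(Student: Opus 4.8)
The plan is to derive this as a corollary of the Moore bound for irregular graphs of Alon, Hoory and Linial~\cite{AHL2002}: every graph with average degree at least $d\geq 2$ and finite girth $g$ has at least $n_0(d,g)$ vertices, where $n_0(d,2r+1)=1+d\sum_{i=0}^{r-1}(d-1)^i$, $n_0(d,2r)=2\sum_{i=0}^{r-1}(d-1)^i$, and $n_0(d,g)$ is non-decreasing in $g$. I note that the full strength of the \emph{irregular} Moore bound is what makes the exponent $(d-1)^k$ attainable: the naive alternative of first passing to a subgraph of minimum degree exceeding $d/2$ and then invoking the classical (regular) Moore bound would only produce a short cycle under the much weaker hypothesis $(d/2-1)^k>|V(G)|$.

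I would prove the proposition by contraposition. Suppose $G$ has no cycle of length at most $2k$, and write $N:=|V(G)|$; the goal is to show $(d-1)^k\leq N$. Since the average degree of $G$ is at least $2$, $G$ has at least $N$ edges and hence is not a forest, so it contains a cycle and $\girth(G)$ is finite; by assumption $\girth(G)\geq 2k+1$. To apply the Moore bound cleanly I would pass to a connected component $H$ of $G$ whose average degree is still at least $d$ --- such a component exists because the average degree of $G$ is the size-weighted average of the average degrees of its components --- and note that $H$ has at most $N$ vertices and, since every cycle of $H$ is also a cycle of $G$, girth at least $2k+1$.

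Applying the Moore bound to $H$ and using monotonicity of $n_0$ in the girth then gives
\[ N \;\geq\; |V(H)| \;\geq\; n_0(d,\,2k+1) \;=\; 1+d\sum_{i=0}^{k-1}(d-1)^i \;\geq\; 1+d(d-1)^{k-1} \;=\; 1+(d-1)^k+(d-1)^{k-1} \;>\; (d-1)^k, \]
where we used $d\geq 2$, so that $d-1\geq 1$ and every term omitted from the sum is nonnegative. This contradicts the hypothesis $(d-1)^k>N$, and the proposition follows. I expect no genuine obstacle once the irregular Moore bound is in hand; the points requiring a little care are eliminating the degenerate acyclic case (handled by $d\geq 2$), passing to a connected component without weakening the average-degree hypothesis, and the elementary arithmetic above. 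Should a self-contained proof be preferred, an entropy argument applied to a degree-biased non-backtracking walk of length $k$ in $G$, in the spirit of Babu and Radhakrishnan~\cite{BR2009}, gives the same Moore-type lower bound on $N$ when $\girth(G)>2k$.
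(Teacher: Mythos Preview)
Your proposal is correct and follows exactly the route the paper indicates: the paper does not give a proof of this proposition at all, but simply records it as ``a consequence of a theorem of Alon, Hoory and Linial~\cite{AHL2002} (see also Ajesh Babu and Radhakrishnan~\cite{BR2009}),'' and you have spelled out precisely that derivation. The only remark worth making is that the step of passing to a connected component, while harmless, is not strictly needed---the irregular Moore bound of~\cite{AHL2002} already applies to arbitrary (not necessarily connected) graphs with average degree at least $2$.
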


\begin{proof}[Proof of Lemma~\ref{lm:schemeImpliesModelGraphForced}] 
	Fix an $(m,n,\floor{\frac 1 6 m^{1-\frac 1{\floor{\frac n 2} + 1}}},3)$-scheme $\Phi$ 
	that uses MAJORITY as the query function. Note that 
	$s:=\floor{\frac 1 6 m^{1-\frac 1{\floor{\frac n 2} + 1}}}$ implies 
	\[m \geq s^{1 + \frac 1 {\floor{\frac n 2}}} + 4s + 1.\tag*{($\star$)}\] For $\Phi$ we will come up with a model-graph which is forced, that is, one of (P1), (P2)
	or (P3) holds. We will start with 
	an initial model-graph $G$ for $\Phi$. We will observe that the average degree of $G$ is high and invoke 
	Proposition~\ref{prop:densityImpliesSmallCycle} to find a small cycle $C$. If $|C|$ is odd, we will bin it in $\text{ODD}$, delete $C$ and repeat.
	If $|C|$ is even and all the third vertices of the labels of edges in $C$ are distinct, we will bin $C$ in $\text{EVEN}$, delete the edges of $C$ and repeat;
	otherwise, we will either discover that property (P3) holds or we will modify our model-graph and find an odd cycle in it and bin it in $\text{ODD}$,
	delete it and repeat. The moment the sum of the lengths of the deleted cycles exceeds $2s$, we know either the sum of the lengths of odd or even cycles exceeds $s$ and
	two odd cycles intersect or even cycles with distinct third vertices meet, which means either (P1) or (P2) holds. Formally, the procedure can be 
	described as below. We will maintain the following invariant. $\text{EVEN}$ will contain edge-disjoint cycles of length even and at most $n$ each and the
	third vertices of the labels in such a cycle will be all distinct. $\text{ODD}$ will contain edge-disjoint cycles of length odd and at most $n$. Furthermore
	$([s],E(G)\cup\text{EVEN}\cup{ODD})$ will always be a model-graph for $\Phi$.

	\begin{description}
		\item[Step 0: Initialization.]
			$\text{EVEN}=\emptyset$. $\text{ODD}=\emptyset$. $G=([s], \{x(u)\upseudoedge{u} y(u)| u\in [m]\})$. Observe $G$ is
			a model-graph for $\Phi$.
		\item[Step 1.] If $\sum_{C\in\text{EVEN}\cup\text{ODD}}|C| > 2s$ , END (this ensures that either (P1) or (P2) holds). 
			Else, using Proposition~\ref{prop:densityImpliesSmallCycle} fix a cycle $C \subseteq E(G)$ such that $|C|\leq n$.
		\item[Step 2.] If $|C|$ is odd, $\text{ODD} \leftarrow \text{ODD} \cup \{C\}$ and $E(G)\leftarrow E(G)\setminus C$ and GOTO Step 1.
		\item[Step 3.] If $|C|$ is even and all the third vertices of the labels of edges in $C$ are distinct, $\text{EVEN}\leftarrow
			\text{EVEN}\cup\{C\}$ and $E(G)\leftarrow E(G)\setminus C$ and GOTO Step 1.
		\item[Step 4.] If $|C|$ is even and the third vertices of the labels of two edges in $C$ which have an even number of edges between
			them while traversing the edges of $C$ in order, then END (Note this means that (P3) holds).
		\item[Step 5.] If $|C|$ is even and the third vertices of the labels of two edges in $C$ have an odd number of edges between them (while
			traversing the edges of $C$ in order), then represent $C$ as 
		\[C: v_0 \pseudoedge{e_1} v_1 \cdots v_{2k} \pseudoedge{e_{2k+1}} v_{2k+1} \cdots \pseudoedge{e_{2l}} v_{2l}=v_0,\]
		such that the third vertices of $e_1$ and $e_{2k+1}$ are the same vertex $w$. We modify the model-graph $G$ by changing
		the endpoints of the edges appearing with labels $e_1,e_{2k+1}$ from $\{v_0,v_1\}$, $\{v_{2k},v_{2k+1}\}$ to  $\{v_1,w\}$, $\{v_{2k},w\}$ respectively,
		thus obtaining a shorter odd cycle $C'$ in $G$:
		\[E(G)\leftarrow (E(G)\setminus\{v_0\upseudoedge{e_1}v_1, v_{2k}\upseudoedge{e_{2k+1}}v_{2k+1}\})\cup \{v_1\upseudoedge{e_1}w, v_{2k}\upseudoedge{e_{2k+1}}w\}\] 
		(Observe: $G$ with $E(G) \cup \{e| e\in \text{ODD} \cup 
		\text{EVEN}\}$ continues to be a model-graph for $\Phi$).
		\[C'\subseteq E(G): w\pseudoedge{e_1}v_1 \pseudoedge{e_2} v_2 \cdots v_{2k}\pseudoedge{e_{2k+1}}w\] is an odd length cycle of length at most $n$ in 
		$G$. 
		
		$\text{ODD} \leftarrow \text{ODD} \cup \{C'\}$. $E(G)\leftarrow E(G)\setminus C'$. GOTO Step 1.
	\end{description}

	In Step 1, if $|E(G)| \leq 2s$, then the average degree $d$ is at least $\frac {m-2s} s > s^\frac 1{\floor{\frac n 2}}+2$ (from $\star$) and $(d-1)^{\floor{\frac n 2}}>s$
	which implies from Proposition~\ref{prop:densityImpliesSmallCycle} that there is a cycle of length at most $n$.

	We claim that the procedure terminates only by encountering an END statement in Step 1 or in Step 4. 
	Observe that once the procedure finds a cycle in Step 1, then exactly one of the four if conditions in Steps 2-5 holds. If the procedure does not
	encounter an END statement in Step 4, then the procedure moves to Step 1 again as each of the Steps 2, 3 and 5 end in a `GOTO Step 1' statement.

	If the procedure encounters the END statement in Step 4, then (P3) holds. If the procedure encounters the END statement in Step 1, then from
	the pigeonhole principle, either $\sum_{C \in \text{ODD}}|C| > s$ or $\sum_{C \in \text{EVEN}}|C| > s$. In the first case, (P1) holds. In the second case,
	since each edge in a cycle in $\text{EVEN}$ has a distinct third vertex, two cycles in $\text{EVEN}$ meet.

	Finally, we observe that the procedure terminates. If the procedure does not terminate in Step 4, then the procedure repeatedly finds edge disjoint cycles
	and deletes them. If the number of edges in the deleted cycle exceeds $2s$, then the procedure will terminate when it encounters the END statement in 
	Step 1.
\end{proof}

\subsection{Density argument}

The query functions considered in this section all admit a density argument. For such a query function, 
a valid scheme that supports sets from a large universe using
only small space, must conceal a certain dense graph
that avoids certain forbidden configurations. On the other hand, standard graph theoretic results (e.g., the Moore bound) would imply
that dense graphs must have at least one of those forbidden configurations, which would contradict the existence of any such scheme. 

In this section we provide lower bounds for the following ten query functions: AND
function, $(x,y,z)\mapsto (x\oplus y)\wedge z$, $(x,y,z)\mapsto (x\vee
y)\wedge z$, the ALL-EQUAL function, $(x,y,z)\mapsto (x\wedge y\wedge
z)\vee (\bar{y}\wedge\bar{z})$, and their complements. For these functions,
we deal with two functions---a function and its
complement---with a single proof. In these proofs, we produce sets $S$
and $T$ of size at most $n$ such that storing $S$ and not storing $T$
leads to a contradiction. The proof for the complement function works
with a small twist: storing $T$ and not storing $S$ leads to the
contradiction.

We now develop a common framework to prove lower bounds for the above mentioned query functions. We assume the query function is $f$, where $f$ could be any of these ten functions. Fix a scheme $\Phi_f$ for the query function $f$,
where the memory consists of three distinct bit arrays: $A[1,\cdots,s], B[1,\cdots,s]$ and $C[1,\cdots,s]$. 
For any element $u\in[m]$, the scheme $\Phi_f$ probes three distinct locations $x(u)\in A,y(u)\in B,z(u)\in C$
to determine if $u$ is in the set or not. 
Given any set $S\subseteq[m]$ of size at most $n$,
the assignment $\sigma_f(S)\in \{0,1\}^{3s}$ to the memory is such that for all elements $u\in[m]$,
$f(A[x(u)],B[y(u)],C[z(u)])$ is 1 iff $u\in S$.

We will need the following definitions. 
\begin{definition}[$G_{A,B}(E), G_{B,C}(E), G_{A,C}(E)$]
	For a scheme $\Phi_f$, and any subset of elements $E\subseteq[m]$, we define the bipartite graph $G_{A,B}(E)$
	as follows. The vertex sets are $A=[s]$ and $B=[s]$. For each element $u\in E$, we have an edge labelled
	$u$ with end points $x(u)\in A$ and $y(u)\in B$.

	Similarly, we define the bipartite graphs $G_{B,C}(E)=(B:=[s],C:=[s],\{y(u)\upseudoedge{u} z(u) : u\in E\})$
	and $G_{A,C}(E)=(A:=[s],C:=[s],\{x(u)\upseudoedge{u} z(u): u\in E\})$.
\end{definition}

\begin{definition}[private vertex, private $z$ vertex]
	For an element $u\in[m]$, we say $u$ does not have a private vertex if each of its three probe locations is also
	the probe location of another element, that is, there exists elements $v_1,v_2,v_3\in[m]$ such that
	$u\notin \{v_1,v_2,v_3\}$, $x(u)=x(v_1)$, $y(u)=y(v_2)$ and $z(u)=z(v_3)$. If it is not the case 
	that $u$ does not have a private vertex, we say that $u$ has a private vertex.

For an element $u\in[m]$, we say $u$ has a private $z$ vertex if its probe location in $C$ is not shared by any other element, that is for all $v\in[m]\setminus\{u\}$, $z(v)\neq z(u)$.
\end{definition}

\begin{proposition}\label{prop:privateVertex} 
	If every $u\in[m]$ has a private vertex, then $3s\geq m$.
\end{proposition}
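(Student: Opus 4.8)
The plan is to use a simple counting argument over the three memory arrays. Recall that each element $u\in[m]$ probes one location $x(u)\in A$, one location $y(u)\in B$, and one location $z(u)\in C$, where each array has size $s$. The hypothesis is that every $u\in[m]$ has a private vertex, meaning at least one of $x(u)$, $y(u)$, $z(u)$ is not shared by any other element.

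First I would classify the elements of $[m]$ according to which of their three probe locations is private: let $M_A$ be the set of elements whose location in $A$ is private (i.e.\ $x(u)\neq x(v)$ for all $v\neq u$), and similarly define $M_B$ and $M_C$. By the assumption, $[m]=M_A\cup M_B\cup M_C$, so by the pigeonhole principle one of these three sets has size at least $m/3$. Say $|M_A|\geq m/3$ (the other cases are symmetric, replacing $A$ by $B$ or $C$).

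Next I would observe that the map $u\mapsto x(u)$ is injective when restricted to $M_A$: if $u,u'\in M_A$ with $u\neq u'$ and $x(u)=x(u')$, then $x(u)$ is shared, contradicting $u\in M_A$. Hence $|M_A|\leq |A| = s$, and combining with $|M_A|\geq m/3$ gives $s\geq m/3$, i.e.\ $3s\geq m$. This completes the proof. There is no real obstacle here — it is a clean pigeonhole-plus-injectivity argument; the only thing to be careful about is stating the three symmetric cases uniformly, which is handled by the symmetry of the roles of $A$, $B$, $C$ in the definition of ``private vertex''.
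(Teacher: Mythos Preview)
Your proof is correct. The paper itself does not supply a proof of this proposition, treating it as self-evident, so there is no ``paper proof'' to compare against. Your pigeonhole-plus-injectivity argument works; a marginally more direct route is to assign to each $u$ one of its private vertices and observe that this gives an injection from $[m]$ into $A\sqcup B\sqcup C$ (since a private vertex of $u$ is by definition not a probe location of any other element, so no two elements can be assigned the same vertex), yielding $m\leq 3s$ without the intermediate pigeonhole step.
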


\begin{proposition}\label{prop:privateZVertex}
	If at least $\frac m 2$ $u$'s in $[m]$ have a private $z$ vertex, then $s\geq \frac m 2$.
\end{proposition}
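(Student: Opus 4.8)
The plan is to unpack the definition of ``private $z$ vertex'' and observe that it immediately makes the assignment $u \mapsto z(u)$ injective on the relevant set of elements; a counting argument then finishes the job. First I would let $P = \{u \in [m] : u \text{ has a private } z \text{ vertex}\}$ and assume, as in the hypothesis, that $|P| \geq \frac m 2$.

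Next I would prove the claim that $u \mapsto z(u)$ is injective on $P$. Take distinct $u_1, u_2 \in P$. Since $u_1$ has a private $z$ vertex, the definition says $z(v) \neq z(u_1)$ for all $v \in [m] \setminus \{u_1\}$; instantiating this with $v = u_2$ (allowed, since $u_2 \neq u_1$) gives $z(u_2) \neq z(u_1)$. Hence the set $\{z(u) : u \in P\}$ consists of $|P|$ distinct elements of the index set $C = [s]$, so $s \geq |P| \geq \frac m 2$, which is the desired bound.

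There is no real obstacle here: once the definition is spelled out the statement is a one-line pigeonhole argument, and in fact no probabilistic or structural input about the scheme $\Phi_f$ is needed. This proposition is the ``$z$-side'' analogue of Proposition~\ref{prop:privateVertex}; the latter is only marginally less immediate because it must account for all three probe coordinates $x, y, z$ at once, which costs the factor of $3$ in $3s \geq m$.
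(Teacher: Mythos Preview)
Your argument is correct and is exactly the intended one: the paper states this proposition without proof, treating it as immediate from the definition, and your injectivity-plus-counting argument is precisely that immediate justification.
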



\subsubsection{AND}

Let the query function $f$ be the AND function.

From Proposition~\ref{prop:privateVertex}, if every $u\in[m]$ has a private vertex, we are done. Otherwise, fix an element $u\in[m]$ such that $u$ 
does not have a private vertex. Let elements $v_1,v_2,v_3\in[m]$ be such that $x(u)=x(v_1)$, $y(u)=y(v_2)$
and $z(u)=z(v_3)$. Let $S=\{v_1,v_2,v_3\}$ and $T=\{u\}$. Clearly, $|S|,|T|\leq n$. We now show that $\Phi_f$ cannot 
represent any set $S'$ such that
\[ 
	S\subseteq S' \subseteq [m]\setminus T.
\]
In particular $\Phi_f$ cannot represent the set $S$. To show this, consider the assignment for the set $S'$:
$\sigma_f(S')$ to the memory $A,B,C$. Since $v_1,v_2,v_3\in S'$, and $f$ is the AND function, $A[x(v_1)]=
B[y(v_2)]=C[z(v_3)]=1$. This implies $A[x(u)]=B[y(u)]=C[z(u)]=1$. Thus, $f(A[x(u)],B[y(u)],C[z(u)])=1$. But
$u\notin S'$. A contradiction.

The same argument works for $\bar{f}$ when we try to represent any set $S'$ such that
\[
	T\subseteq S' \subseteq [m]\setminus S.
\]

\subsubsection{$(x,y,z)\mapsto (x\oplus y)\wedge z$}

Let the query function $f$ be $(x,y,z)\mapsto (x\oplus y)\wedge z$.

From Proposition~\ref{prop:privateZVertex}, if at least $\frac m 2$ elements have a $z$ private vertex, we are done. Otherwise, fix a set $E\subseteq [m]$
of size at least $\frac m 2$ that has no element with a private $z$ vertex.

Now, we assume $s\leq \frac 1 4m^{1-\frac{1}{\lfloor\frac n 2\rfloor}}$ and prove a contradiction. 
The average degree of vertices in $G_{A,B}(E)$ is at least $2m^{\frac{1}{\lfloor\frac n 2\rfloor}}\geq 2$, for large $m$.
Since
\[
	(2m^{\frac{1}{\lfloor\frac n 2\rfloor}}-1)^{\lfloor\frac n 2\rfloor}\geq (m^{\frac{1}{\lfloor\frac n 2\rfloor}})^
{\lfloor\frac n 2\rfloor}\geq m> 2s,
\] from Proposition~\ref{prop:Moore}, 
there exists a cycle of length $2k\leq 2\lfloor\frac n 2\rfloor\leq n$ in $G_{A,B}(E)$:
\[
 v_0 \pseudoedge{u_1} v_1 \pseudoedge{u_2} \cdots
		\pseudoedge{u_{2k}} v_{2k}=v_0.
\]

Since $u_1\in E$ does not have a private $z$ vertex, fix an element $v$ such that $z(v)=z(u_1)$. Let
$S=\{v,u_2,u_3,\cdots,u_{2k}\}$ and $T=\{u_1\}$. Clearly, $|S|,|T|\leq n$. We now show that $\Phi_f$ cannot 
represent any set $S'$ such that
\[ 
	S\subseteq S' \subseteq [m]\setminus T.
\]
In particular $\Phi_f$ cannot represent the set $S$. Under the assignment $\sigma_{S'}$, since $v\in S'$, 
the location $z(v)=z(u_1)$ is assigned 1; otherwise if $C[z(v)]=0$, then $(A[x(v)]\oplus B[y(v)])\wedge C[z(v)]=0$.
Again, under the assignment $\sigma_{S'}$, exactly one of $v_0$ and $v_1$ is assigned 0. To see this, let us assume $v_1$ 
is assigned the
bit $b$, then since $u_2\in S'$, $v_2$ must be assigned the bit $\bar{b}$; otherwise if both $v_1$ and $v_2$ are assigned
$b$, then $(A[x(u_2)]\oplus B[y(u_2)])\wedge C[z(u_2)]=(b\oplus b)\wedge C[z(u_2)]=0$. Similarly, $v_3$ must be assigned the
bit $b$, $v_4$ must be assigned the bit $\bar{b}$ and so on. Thus, each location in $\{v_1,v_3,\cdots,v_{2k-1}\}$ must be assigned
the bit $b$ and each location in $\{v_2,v_4,\cdots,v_{2k}=v_0\}$ must be assigned
the bit $\bar{b}$. Now,  $(A[x(u_1)]\oplus B[y(u_1)])\wedge C[z(u_1)]=(b\oplus\bar{b})\wedge 1=1$. But $u_1\notin S'$.
A contradiction. Thus, $s>\frac 1 4m^{1-\frac{1}{\lfloor\frac n 2\rfloor}}$.

The same argument works for $\bar{f}$ when we try to represent any set $S'$ such that
\[
	T\subseteq S' \subseteq [m]\setminus S.
\]
\subsubsection{$(x,y,z)\mapsto (x\vee y)\wedge z$}

Let the query function $f$ be $(x,y,z)\mapsto (x\vee y)\wedge z$.

From Proposition~\ref{prop:privateZVertex}, if at least $\frac m 2$ elements have a private $z$ vertex, we are done. Otherwise, fix a set $E\subseteq [m]$
of size at least $\frac m 2$ that has no element with a private $z$ vertex. Now make disjoint pairs of distinct elements $(u,v)$,  $u,v\in E$, 
that share the same $z$ location, that is, $z(u)=z(v)$. Make as many pairs as possible. At most $s$ many, one per 
location in $C$, elements can remain unpaired. Delete these unpaired elements from $E$. 
Thus, $m':=|E|\geq\frac 1 2 (\frac m 2 -s)\geq \frac m 8$ if $s\leq \frac m 4$ (which if not true, we are immediately done). For each such pair $(u,v)$, define $v$ to be the reserved partner of $u$. Let
$E'\subseteq E$ be the set of all unreserved elements. Clearly, $E'=\frac{|E|}{2}=\frac {m'}{2}\geq\frac m{16}$.

Find elements $u,v,w\in E'$ such that $u\notin\{v,w\}$, $x(u)=x(v)$ and $y(u)=y(w)$. If no such triple of elements exists then each element
has at least one of its $x$ or $y$ probe locations not shared with any other element in $E'$, and thus $2s\geq |E'|\geq \frac m{16}$ and
we are done.

Let $S=\{u\}\cup\{$reserved partner of $t: t\in\{v,w\}\}$, $T=\{v,w\}$. Clearly, $|S|,|T|\leq n$.
We now show that $\Phi_f$ cannot 
represent any set $S'$ such that
\[ 
	S\subseteq S' \subseteq [m]\setminus T.
\]
In particular $\Phi_f$ cannot represent the set $S$. Any assignment $\sigma_{S'}$ must assign 1 to each location $z(t)$,
for each $t\in\{v,w\}$; since, the reserved partner of each such $t$ is in $S'$ and $t$ shares the same $z$ vertex with it. 
The assignment $\sigma_{S'}$ must assign 0 to each of $x(v),y(v),x(w),y(w)$; since, if any of them is assigned a 1, together with the fact that $z(v),z(w)$ are assigned 1, either $(A[x(v)]\vee B[y(v)])\wedge C[z(v)]$ or
$(A[x(w)]\vee B[y(w)])\wedge C[z(w)]$ will evaluate to 1 which cannot happen as $v,w\notin S'$.

Now, since $x(u)=x(v)$ and $y(u)=y(w)$, $(A[x(u)]\vee B[y(u)])\wedge C[z(u)]=(0\vee 0)\wedge C[z(u)]=0$. But $u\in S'$. 
A contradiction.

The same argument works for $\bar{f}$ when we try to represent any set $S'$ such that
\[
	T\subseteq S' \subseteq [m]\setminus S.
\]
\subsubsection{ALL-EQUAL}

Let the query function $f$ be the ALL-EQUAL function.

We assume $s< \frac{1}{600}m^{1-\frac{1}{\lfloor\frac n 4\rfloor +1}}$ and derive a contradiction.

\begin{lemma}\label{lem:forbiddenConfig}
	If $G_{A,B}$ and $G_{B,C}$ contain cycles
	$C_1$ and $C_2$ respectively of size at most $\frac n 2$ each such that for some $e_1\in C_1$ and $e_2\in C_2$, 
	labels of $e_1$ and $e_2$ are the same, then the scheme $\Phi_f$ cannot represent all sets of size $n$.
\end{lemma}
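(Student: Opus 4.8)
The plan is to exploit the ALL-EQUAL query function, which outputs $1$ exactly when all three probed bits agree (all $0$ or all $1$), and to argue that the two cycles $C_1 \subseteq G_{A,B}$ and $C_2 \subseteq G_{B,C}$ together with the shared-label edge force a contradiction in the style of the MAJORITY proof (Lemma~\ref{lm:modelGraphIsNeverForced}) and the AND/$(x\oplus y)\wedge z$ proofs. First I would name the shared edge: let $u_0 \in [m]$ label both an edge $e_1 \in C_1$ (so $e_1$ has endpoints $x(u_0) \in A$ and $y(u_0) \in B$) and an edge $e_2 \in C_2$ (so $e_2$ has endpoints $y(u_0)\in B$ and $z(u_0)\in C$). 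Write out $C_1 \colon x(u_0) = a_0 \xrightarrow{u_0} b_0 = y(u_0) \xrightarrow{u_1'} a_1 \xrightarrow{} \cdots \to a_0$ as an even cycle of length at most $n/2$ in the bipartite graph $G_{A,B}$, and similarly $C_2$ as an even cycle of length at most $n/2$ in $G_{B,C}$ through the same $B$-vertex $y(u_0)$.

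The key step is the set construction. As in the density-argument template, I would pick out the odd-indexed labels along each cycle to form a set $S$ to be stored and the even-indexed labels to form a set $T$ not to be stored (or the reverse, whichever is consistent), arranging that $u_0$ itself lands, say, in $T$ — i.e.\ $u_0 \notin S'$ for the set $S'$ we try to represent, $S \subseteq S' \subseteq [m]\setminus T$. Because $|C_1|, |C_2| \le n/2$, the sizes of $S$ and $T$ are each at most $n$. Then I would run the chain-of-forced-assignments argument twice. Going around $C_1$: fixing the bit at $a_0 = x(u_0)$ forces (via ALL-EQUAL and the membership/non-membership pattern of the labels) a determined bit at $b_0 = y(u_0)$ — concretely, if a label $u$ with $u \in S'$ lies on edge $\{a,b\}$, ALL-EQUAL $=1$ forces $A[a] = B[b]$ ("must be equal"), and if $u \notin S'$ it forces them to differ; propagating around the even cycle yields a parity relation between $x(u_0)$'s bit and $y(u_0)$'s bit. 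Going around $C_2$ the same way yields a relation between $y(u_0)$'s bit and $z(u_0)$'s bit, hence a forced value of $x(u_0) \oplus z(u_0)$ relative to $y(u_0)$. Finally, applying ALL-EQUAL to the query for $u_0$ itself — whose three probes are $x(u_0), y(u_0), z(u_0)$ — and using $u_0 \notin S'$, we need these three bits \emph{not} all equal; but the two propagated parity constraints will have pinned them (for a suitable choice of which labels go into $S$ versus $T$) to a configuration that is all-equal, contradiction. The choice of parities in $S$ and $T$ is exactly the free handle we use to defeat ALL-EQUAL.

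The main obstacle I anticipate is bookkeeping the parities correctly so that the two independent propagations around $C_1$ and $C_2$ combine to force the \emph{wrong} value of ALL-EQUAL at $u_0$ rather than leaving one bit free. Unlike MAJORITY, where the "third vertex" bit could absorb slack, here every probed location is a cycle vertex, so one must verify the two cycles being \emph{even} (forced by bipartiteness of $G_{A,B}$ and $G_{B,C}$) makes the propagation around each cycle consistent (closes up) and that the label $u_0$ can be placed on the intended side; if the naive placement makes the query at $u_0$ consistent rather than contradictory, one flips the roles of $S$ and $T$, which is precisely the "small twist" used for complement functions elsewhere in this section. Since $\bar f$ (the $\mathrm{NAE}$-type complement) is handled by the same argument with $S$ and $T$ swapped, I would state the lemma's proof once and note this symmetry; after the lemma, one combines it with a Moore-bound argument (Proposition~\ref{prop:densityImpliesSmallCycle}) showing that if $s$ is below the claimed threshold, such cycles $C_1, C_2$ with a shared label must exist — for instance, build $G_{A,B}(E)$ on an element set $E$ with no private-$z$ vertex, extract a short cycle, then among its labels extract a further short cycle in $G_{B,C}$, using that the involved $B$-degrees are still large.
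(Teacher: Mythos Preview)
There is a genuine gap in your propagation step. You assert that when a label $u\notin S'$ lies on an edge $\{a,b\}$ of $G_{A,B}$, the constraint ``ALL-EQUAL $=0$'' forces $A[a]\neq B[b]$. This is false: ALL-EQUAL$(A[x(u)],B[y(u)],C[z(u)])=0$ only says the three bits are not all equal; nothing prevents $A[x(u)]=B[y(u)]$ so long as $C[z(u)]$ differs from that common value. The third probe $C[z(u)]$ is not a vertex of $C_1\subseteq G_{A,B}$ and is completely unconstrained by that cycle, so it can absorb the non-membership requirement without transmitting any information across the edge $\{a,b\}$. Consequently your alternating odd/even split does not propagate at the non-membership steps, and no parity relation between $x(u_0)$ and $y(u_0)$ (or between $y(u_0)$ and $z(u_0)$) is actually forced. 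The analogy with MAJORITY breaks down precisely here: for MAJORITY, knowing two of the three bits already determines the output or forces the remaining pair, whereas for ALL-EQUAL a ``No'' answer imposes nothing on any chosen pair.

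The remedy (and what the paper does) is to use only membership constraints for propagation and reserve the single shared label for the contradiction. Put \emph{all} edge-labels of $C_1$ and $C_2$ except the shared label $u_0$ into $S$, and set $T=\{u_0\}$; then $|S|\le |C_1|+|C_2|-2\le n$ and $|T|=1$. Now every step around each cycle uses a label in $S'$, so ALL-EQUAL $=1$ forces the two endpoint bits to agree. Starting from $y(u_0)$ with value $b$, going around $C_1$ forces $x(u_0)=b$ and going around $C_2$ forces $z(u_0)=b$; hence all three probes for $u_0$ read $b$ and ALL-EQUAL evaluates to $1$, contradicting $u_0\notin S'$. No parity bookkeeping, and no freedom in choosing which labels go to $S$ versus $T$, is needed.
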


\begin{lemma}\label{lem:densityImpliesForbiddenConfig}
	If $s< \frac{1}{600}m^{1-\frac{1}{\lfloor\frac n 4\rfloor+1}}$, then $G_{A,B}$ and $G_{B,C}$ will contain cycles
	$C_1$ and $C_2$ respectively each of length at most $\frac n 2$ such that for some $e_1\in C_1$ and $e_2\in C_2$, 
	labels of $e_1$ and $e_2$ are the same.
\end{lemma}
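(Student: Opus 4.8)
The plan is to extract enough edge-disjoint short cycles from $G_{A,B}$ that the set of their labels is too large to avoid forming a short cycle in $G_{B,C}$; a common label then comes out for free, which is exactly the configuration forbidden by Lemma~\ref{lem:forbiddenConfig}. Write $\ell:=\lfloor n/4\rfloor$ and assume $n\ge 4$, so that $\ell\ge 1$ and $2\ell\le n/2$ (for $n\le 3$ the hypothesis $s<\tfrac{1}{600}m^{1-1/(\ell+1)}=\tfrac{1}{600}$ is vacuous). Recall that $G_{A,B}=G_{A,B}([m])$ and $G_{B,C}=G_{B,C}([m])$ are bipartite multigraphs, each on $2s$ vertices and with exactly $m$ edges, one per element of $[m]$, and that distinct elements carry distinct labels, so edge-disjoint subgraphs have disjoint label sets.

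First I would fix a threshold $\tau:=\lceil 5\,(2s)^{1+1/\ell}\rceil$ and run a greedy extraction on $G_{A,B}$: repeatedly invoke Proposition~\ref{prop:Moore} to find a cycle of length at most $2\ell$, place it in a collection $\mathcal{D}$ of edge-disjoint cycles, delete its edges, and stop once at least $\tau$ edges have been deleted in total. The point to check is that the extraction cannot stall earlier: while fewer than $\tau$ edges are gone, the residual graph still has more than $m-\tau$ edges on $2s$ vertices, so its average degree $d$ satisfies $d>(m-\tau)/s$. The hypothesis gives $m>(600s)^{1+1/\ell}\ge 600\,s^{1+1/\ell}$, while crudely $\tau\le 21\,s^{1+1/\ell}$, so $d-1\ge 400\,s^{1/\ell}$ and hence $(d-1)^{\ell}\ge 400^{\ell}s>2s$, which is exactly the density hypothesis Proposition~\ref{prop:Moore} needs to output another cycle of length $\le 2\ell$. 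Since each extracted cycle removes at least two edges, the process terminates with $\sum_{D\in\mathcal{D}}|D|\ge\tau$.

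Let $L\subseteq[m]$ be the set of all labels occurring on the cycles of $\mathcal{D}$. By edge-disjointness and uniqueness of labels, $|L|=\sum_{D\in\mathcal{D}}|D|\ge\tau$. Now I would apply Proposition~\ref{prop:Moore} once more, this time to the bipartite graph $G_{B,C}(L)$, which has $2s$ vertices and $|L|\ge\tau$ edges and therefore average degree at least $\tau/s\ge 10\,s^{1/\ell}$: since $(\tau/s-1)^{\ell}\ge 5^{\ell}s>2s$, there is a cycle $C_2\subseteq G_{B,C}(L)$ of length at most $2\ell\le n/2$. Finally, pick any element $u$ that labels an edge of $C_2$. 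Then $u\in L$, so the $G_{A,B}$-edge labelled $u$ lies in exactly one cycle $C_1$ of $\mathcal{D}$; this $C_1$ is a cycle in $G_{A,B}$ of length at most $2\ell\le n/2$, $C_2$ is a cycle in $G_{B,C}$ obeying the same bound, and both contain an edge labelled $u$. That is precisely the hypothesis of Lemma~\ref{lem:forbiddenConfig}, and is what the present lemma asserts.

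The only genuinely delicate step is the choice of $\tau$ in the second paragraph: it must be small enough that, after deleting $\tau$ edges, the Moore condition $(d-1)^{\ell}>2s$ in the residual $G_{A,B}$ still holds, and at the same time large enough that $\tau$ edges already beat the Moore threshold in $G_{B,C}$ — in other words one needs the chain $m\gg\tau\gg s^{1+1/\ell}$ to hold, and the constant $\tfrac{1}{600}$ in the hypothesis is exactly the slack that secures it. Verifying this is a routine estimate relying only on crude bounds such as $2^{1+1/\ell}\le 4$ and $600^{1+1/\ell}\ge 600$ (both valid since $\ell\ge 1$) and on $x-1\ge\tfrac45 x$ for $x\ge 5$ to absorb the $-1$ terms.
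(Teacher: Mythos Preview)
Your argument is correct, and it reaches the same conclusion through a somewhat different organisation than the paper. The paper runs the cycle–extraction procedure \emph{symmetrically} on both $G_{A,B}$ and $G_{B,C}$, removing short cycles from each until (by Proposition~\ref{prop:Moore}) at most $6s^{1+1/\ell}\le m/100$ edges remain in either; a one-line union bound then shows that a random label $u\in[m]$ survives in a removed cycle on \emph{both} sides with probability at least $98/100$. Your route is \emph{asymmetric}: you extract only from $G_{A,B}$, just long enough that the label set $L$ already meets the Moore threshold inside $G_{B,C}(L)$, and then a single further application of Proposition~\ref{prop:Moore} produces $C_2$ entirely within $L$, so a shared label is automatic. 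Both arguments rest on the same density/girth trade-off and on the same arithmetic slack furnished by the constant $\tfrac{1}{600}$; your version avoids the probabilistic phrasing at the cost of having to calibrate the threshold $\tau$ explicitly, while the paper's version is a touch cleaner conceptually but needs the final counting step.
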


Our claim follows immediately from the above two lemmas.

\begin{proof}[Proof of Lemma~\ref{lem:forbiddenConfig}]
	Let the two cycles $C_1$ and $C_2$ as promised by Lemma~\ref{lem:forbiddenConfig} be
\begin{enumerate}
	\item[]\[ C_1: r_0 \pseudoedge{u_1} r_1 \pseudoedge{u_2} \cdots
		\pseudoedge{u_{2k}} r_{2k}=r_0,\]
	\item[] \[C_2: r_0 \pseudoedge{u_1} t_1 \pseudoedge{v_2} \cdots
		\pseudoedge{v_{2l}} t_{2l}=r_0,\]
\end{enumerate}
where $r_0=y(u_1)\in B$, $r_1=x(u_1)\in A$, $t_1=z(u_1)\in C$ and $2k,2l\leq \frac n 2$.
Let $S=\{u_2,u_3,\cdots,u_{2k}\}\cup\{v_2,v_3,\cdots,v_{2l}\}$ and $T=\{u_1\}$. Clearly, $|S|,|T|\leq n$.
We now show that $\Phi_f$ cannot represent any set $S'$ such that
\[ 
	S\subseteq S' \subseteq [m]\setminus T.
\]
In particular $\Phi_f$ cannot represent the set $S$. Let the assignment $\sigma_{S'}$ assign
the bit $b$ to the location $r_0$. Then, all locations in $\{r_1,r_2,\cdots,r_{2k}\}\cup\{t_1,t_2,\cdots,t_{2l}\}$ must be
assigned the bit $b$: since, $r_{2k}=r_0$ is assigned the bit $b$ and since $u_{2k}\in S'$, $r_{2k-1}$ must be assigned the
bit $b$, similarly since $r_{2k-1}$ is assigned the bit $b$ and since $u_{2k-1}\in S'$, $r_{2k-2}$ must be assigned the
bit $b$ and so on and thus all locations in $\{r_1,r_2,\cdots,r_{2k}\}$ must be assigned the bit $b$. Arguing similarly,
since $r_0=t_{2l}$ is assigned the bit $b$ and since $v_{2l}\in S'$, $t_{2l-1}$ must be assigned the bit $b$, and so on
and thus all locations in $\{t_1,t_2,\cdots,t_{2l}\}$ must be assigned the bit $b$. Now, since $r_0,r_1,t_1$ are all assigned
the bit $b$ and they are the three probe locations for the element $u_1$, the ALL-EQUAL function for $u_1$ evaluates to 1. But
$u_1\notin S'$. A contradiction.
\end{proof}

\begin{proof}[Proof of Lemma~\ref{lem:densityImpliesForbiddenConfig}]
	In the graph $G_{A,B}([m])$, as long as the number of edges is at least
	$6s^{1+{\lfloor\frac n4\rfloor}}$ we can find a cycle
	of length at most $\frac n2$ from Proposition~\ref{prop:Moore}; since, the average degree then would be at least 
	\[\frac{6s^{1+{\lfloor\frac n4\rfloor}}}{2s}\geq3s^{\frac 1{\lfloor\frac n4\rfloor}}\geq 2,\]
	and since \[(3s^{\frac 1{\lfloor\frac n4\rfloor}}-1)^{\lfloor\frac n4\rfloor}>2s.\] 

	We repeatedly remove cycles: $C_1, C_2,\cdots$ from the graph $G_{A,B}$  
	each of length at most $\frac n2$ using Proposition~\ref{prop:Moore} (that is, we delete the edges appearing in the picked cycle),
	till no more cycle remains; then, the number of remaining edges in the graph will be at most 
	$6s^{1+{\lfloor\frac n4\rfloor}}\leq\frac 1{100}m$; since,
	\[
		s< \frac{1}{600}m^{1-\frac{1}{\lfloor\frac n 4\rfloor+1}}\implies m>600s^{1+\frac 1{\lfloor\frac n4\rfloor}}.
	\]

	Similarly, following the same argument, we can remove cycles: $D_1, D_2,\cdots$ from the graph $G_{B,C}([m])$, where
	the length of each cycle is at most $\frac n2$, till no more cycle remains. The number of remaining edges in the graph
	will be at most $\frac1{100}m$.

	Now, pick a random element $u\in[m]$. The probability that the edge $\{x(u),y(u)\}$ with label $u$ 
	appears in some cycle $C_i$ and the edge $\{y(u),z(u)\}$ with label $u$ appears in some cycle $D_j$ is
	at least $1-(\frac 1{100}+\frac 1{100})=\frac {98}{100}$ (using the union bound).
	Thus, there exists an element $u$ and cycles $C_i$ and $D_j$ each of length at most $\frac n 2$ in $G_{A,B}$ and $G_{B,C}$ respectively, each containing an edge labelled $u$.
\end{proof}

The same argument works for $\bar{f}$ when in Lemma~\ref{lem:forbiddenConfig}, we try to represent any set $S'$ such that
\[
	T\subseteq S' \subseteq [m]\setminus S.
\]

\subsubsection{$(x,y,z)\mapsto (x\wedge y\wedge z)\vee (\bar{y}\wedge\bar{z})$}

Let the query function $f$ be $(x,y,z)\mapsto (x\wedge y\wedge z)\vee (\bar{y}\wedge\bar{z})$.

\begin{definition}[forced]
	We say that the scheme $\Phi_f$ is forced if in the graph $G_{B,C}([m])$ at least
	one of the following two conditions hold.
	\begin{enumerate}
		\item[(P1)] There exists a cycle $C$ of length at most $\frac n 2$ such that
			there are two elements $u_1,u_2\in[m]$ which appear as labels of some
			two edges in $C$ and $x(u_1)=x(u_2)$.
		\item[(P2)] There exist cycles $C_1$ and $C_2$ of lengths at most $\frac n 2$ each
			and some two elements $u,v\in[m]$ that appear as the labels of an edge in $C_1$
			and an edge in $C_2$ respectively have the same $x$ probe location, that is, $x(u)=x(v)$.
	\end{enumerate}
\end{definition}

\begin{lemma}\label{lem:forcedImpliesContradiction}
	If the scheme $\Phi_f$ is forced, then it cannot represent all sets of size at most $n$.
\end{lemma}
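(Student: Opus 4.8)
The plan is to reduce the statement to a concrete obstruction: in each of the two forcing cases I will produce sets $S,T$ with $|S|,|T|\le n$ such that no memory assignment can correctly answer the queries for every $S'$ with $S\subseteq S'\subseteq [m]\setminus T$; in particular $\Phi_f$ cannot represent $S$. Everything rests on three elementary observations about $f(x,y,z)=(x\wedge y\wedge z)\vee(\bar y\wedge\bar z)$, which is $1$ exactly on $\{000,100,111\}$. First, for every element $w$, if $w\in S'$ then the bit stored at $y(w)\in B$ equals the bit stored at $z(w)\in C$, and if that common bit is $1$ then the bit stored at $x(w)\in A$ is forced to $1$. Second, if $w\notin S'$ but the bits at $y(w)$ and $z(w)$ happen to be equal, then that common bit must be $1$ and the bit at $x(w)$ must be $0$ (because $f(\cdot,0,0)=1$ and $f(1,1,1)=1$, so neither a common $0$ nor the pattern $(1,1,1)$ is allowed). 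Third, as a consequence of the first observation, whenever all edges of a cycle in $G_{B,C}$ — or all but one edge of such a cycle — lie in $S'$, all vertices of that cycle receive a common bit $b$, since each $S'$-edge equates the bits at its two endpoints and a cycle (or a cycle minus an edge) spans its vertex set.

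For case (P1), let $C$ be the promised cycle (length $\le n/2$) and let $u_1,u_2$ be two of its edge labels with $x(u_1)=x(u_2)$; reorient $C$ so that $u_1$ is its first edge, with $u_2$ at some later position. I would put $u_1$ into $T$ and all remaining edges of $C$ into $S$, so $|T|=1$ and $|S|<n$. Fix any admissible $S'$ and assignment. The edges of $C$ other than $u_1$ are all in $S'$, so all vertices of $C$ get a common bit $b$. If $b=0$, then $u_1\notin S'$ with both its endpoints $0$ already violates the second observation. If $b=1$, then $u_2\in S'$ with both its endpoints equal to $1$ forces the bit at $x(u_2)=x(u_1)$ to be $1$ by the first observation; but then the three probes of $u_1$ read $(1,1,1)$, so $f=1$ and $u_1\in S'$ — contradicting $u_1\in T$.

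For case (P2), let $C_1,C_2$ be the two cycles (each of length $\le n/2$) and $u\in E(C_1)$, $v\in E(C_2)$ the edge labels with $x(u)=x(v)$. If $u$ also appears on $C_2$ (or $v$ on $C_1$) then that single cycle witnesses (P1) and we are done; so I may assume $u\notin E(C_2)$, $v\notin E(C_1)$, and (after possibly swapping the two cycles, and invoking (P1) when the cycles overlap too much) that $C_2$ contains an edge $g^{*}\ne v$ with $g^{*}\notin E(C_1)$. Put $u$ and $g^{*}$ into $T$, and put all remaining edges of $C_1$ and of $C_2$ into $S$; then $|T|=2$, $|S|\le n-2$, $S\cap T=\emptyset$, and $v\in S$. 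Fix any admissible $S'$ and assignment. The edges of $C_1$ other than $u$ force all vertices of $C_1$ to a common bit $b_1$; since $u\notin S'$ has both endpoints equal to $b_1$, the second observation forces $b_1=1$ and the bit at $x(u)$ to be $0$. Likewise the edges of $C_2$ other than $g^{*}$ force all vertices of $C_2$ to a common bit, which is again pinned to $1$ because $g^{*}\notin S'$. But then $v\in S'$ has both its endpoints equal to $1$, so the first observation forces the bit at $x(v)$ to be $1$; as $x(v)=x(u)$, this contradicts the bit at $x(u)$ being $0$.

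Finally, I would note that the complement $\bar f$ is handled by the identical constructions with the roles of $S$ and $T$ exchanged (one shows no assignment represents any $S'$ with $T\subseteq S'\subseteq[m]\setminus S$), as in the ALL-EQUAL section. The one genuinely non-routine point is case (P2): placing \emph{both} $u$ and $v$ into $T$ fails, since then the second observation only forces the bits at $x(u)$ and $x(v)$ both to $0$, which is mutually consistent. The fix is to sacrifice one extra edge $g^{*}$ of $C_2$ into $T$ so that $C_2$'s common vertex value is pinned to $1$; this is what lets us conclude from $v\in S'$ that the bit at $x(v)$ must be $1$, producing the clash. Very short cycles (length $2$) and heavily overlapping cycles are handled by minor variations of the above or reduce to (P1), and I would leave those routine checks to the reader.
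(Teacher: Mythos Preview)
Your proof is correct and follows the paper's strategy: exhibit $S,T$ with $|S|,|T|\le n$ so that no assignment works for any $S'$ with $S\subseteq S'\subseteq[m]\setminus T$. Case (P1) is identical to the paper's. For (P2) the paper makes the dual (and slightly slicker) choice: it puts the special edge of $C_2$ (called $v_1$) into $T$ and an arbitrary \emph{other} edge $u_1$ of $C_1$ into $T$, keeping the special edge of $C_1$ (called $u_k$) in $S$. After pinning the common $C_1$-bit to $1$ via $u_1\notin S'$ and hence forcing $A[x(u_k)]=A[x(v_1)]=1$, the identity $f(1,b',b')=1$ yields the contradiction for $v_1\notin S'$ directly, \emph{without} having to pin the $C_2$-bit. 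This eliminates your auxiliary edge $g^{*}$ and with it the hand-waved edge cases about overlapping or very short cycles.
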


\begin{lemma}\label{lem:densityForces}
	If $s< \frac{1}{7}m^{1-\frac{1}{\lfloor\frac n 4\rfloor+1}}$, then $\Phi_f$ is forced.
\end{lemma}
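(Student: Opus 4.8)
The plan is to follow the same density-plus-Moore-bound template used for the MAJORITY and ALL-EQUAL query functions, but using only one of the three bipartite graphs, namely $G_{B,C}([m])$, with the ``forbidden configuration'' being a repeated $x$-probe location among elements that sit on short $(B,C)$-cycles. Concretely, from the hypothesis I would show that $G_{B,C}([m])$ is dense enough that, after stripping off edge-disjoint short cycles, a set $E'$ of more than $s$ element-labels has been used; a pigeonhole argument on $u\mapsto x(u)$ then forces two labels in $E'$ with a common $x$-location, and depending on whether they lie on the same extracted cycle or on two different ones, this is exactly condition (P1) or condition (P2).

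Write $k=\lfloor n/4\rfloor$. The first step is to unwind the hypothesis $s<\frac{1}{7}m^{1-\frac{1}{k+1}}$, i.e. $7s<m^{k/(k+1)}$, into $m>(7s)^{1+1/k}\geq 7s^{1+1/k}$. The second step is the cycle extraction. The graph $G_{B,C}([m])$ has $2s$ vertices and $m$ edges; restricting to whatever edge set is current, if it has $e$ edges then its average degree is $e/s$, so by Proposition~\ref{prop:Moore} (the Moore bound) there is a cycle of length at most $2k\leq n/2$ as long as $e$ stays above a threshold $\tau$ of order $s^{1+1/k}$ (the choice $\tau=6s^{1+1/k}$ works: then the average degree is at least $6s^{1/k}\geq 6\geq 2$ and $(6s^{1/k}-1)^k\geq(5s^{1/k})^k=5^ks>2s$). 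I would repeatedly pick such a cycle, record it in a family $\mathcal{C}$, and delete its edges, halting once fewer than $\tau$ edges remain.

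Since the cycles in $\mathcal{C}$ are edge-disjoint and each element labels exactly one edge of $G_{B,C}$, the set $E'\subseteq[m]$ of labels of edges lying on cycles of $\mathcal{C}$ has size $|E'|=m-(\text{number of leftover edges})>m-\tau>7s^{1+1/k}-6s^{1+1/k}=s^{1+1/k}\geq s$, hence $|E'|\geq s+1$. As $u\mapsto x(u)$ maps $E'$ into $[s]$, there are distinct $u,v\in E'$ with $x(u)=x(v)$. Each of $u,v$ appears as a label of an edge on some cycle in $\mathcal{C}$, and every cycle of $\mathcal{C}$ has length at most $n/2$. If the two edges lie on the same cycle of $\mathcal{C}$, condition (P1) holds; if they lie on two different cycles, condition (P2) holds. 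In either case $\Phi_f$ is forced, which is the claim.

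I expect no genuine conceptual difficulty here: this is a stripped-down version of the ALL-EQUAL argument (Lemma~\ref{lem:densityImpliesForbiddenConfig}), needing neither a second graph nor a union bound. The only care required is bookkeeping, namely fixing the extraction threshold, checking that the Moore bound's hypotheses (average degree at least $2$, and $2k\leq n/2$ absorbing the floor) hold at every step, and verifying that the constant $\frac{1}{7}$ in the hypothesis is precisely what makes $|E'|$ exceed $s$ once the extraction stops.
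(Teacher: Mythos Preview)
Your proposal is correct and follows essentially the same approach as the paper: extract edge-disjoint short cycles from $G_{B,C}([m])$ using the Moore bound until the remaining edge count drops below the threshold $6s^{1+1/\lfloor n/4\rfloor}$, observe that more than $s$ element-labels sit on the extracted cycles, and then pigeonhole on $u\mapsto x(u)$ to force either (P1) or (P2). Your presentation is in fact slightly cleaner than the paper's (which phrases it as ``assume (P1) fails and derive (P2)'' and has a typo writing $z(u)$ where $x(u)$ is meant), but the underlying argument is the same.
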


Our claim follows immediately from these two lemmas.

\begin{proof}[Proof of Lemma~\ref{lem:forcedImpliesContradiction}]
We assume either (P1) or (P2) above holds and derive a contradiction.
\begin{description}
	\item[Case: (P1) holds.]
		Let $C$ be a cycle:
		\[ C:  r_0 \pseudoedge{v_1} r_1 \pseudoedge{v_2} \cdots r_{k-1} 
			\pseudoedge{v_k} r_{k}\cdots\pseudoedge{v_l} r_l=r_0
		\]
		in $G_{B,C}$ of length at most $\frac n2$, that
		contains two edges labelled $v_1$ and $v_k$ such that $x(v_1)=x(v_k)$.
		Let $S=\{v_2,\cdots,v_l\}$ and $T=\{v_1\}$. Clearly $|S|,|T|\leq n$
		We now show that $\Phi_f$ cannot represent any set $S'$ such that
\[ 
	S\subseteq S' \subseteq [m]\setminus T.
\]
In particular $\Phi_f$ cannot represent the set $S$. The assignment $\sigma_{S'}$ has to 
assign the same bit to all locations in $\{r_1,\cdots,r_l=r_0\}$. To see this, first observe that for any
element $u\in S'$, for the query function $f$ to evaluate to 1, $y(u)$ and $z(u)$ 
must be assigned the same bit. Now, if $r_1$ is assigned the bit $b$,
then, since $v_2\in S$, $r_2$ must be assigned the bit $b$, and again $r_2$ is assigned the bit $b$ 
and $v_3\in S$, $r_3$ must be assigned the bit $b$ and so on and thus all locations in $\{r_1,\cdots,r_l=r_0\}$ is
assigned the same bit $b$. Now, since the locations $x(v_1),x(v_k)$ are the same and the locations $y(v_1),z(v_1), y(v_k),z(v_k)$ are assigned the same bit the query function will evaluate to the same value for both $v_1$ and $v_k$. But $v_k\in S'$ and $v_1\notin S'$. A contradiction.
\item[Case: (P2) holds.]
	Let $C_1$ and $C_2$ be cycles:
	\begin{align*}
	&C_1:  r_0 \pseudoedge{u_1} r_1 \pseudoedge{u_2} \cdots r_{k-1} 
			\pseudoedge{u_k} r_{k}\cdots\pseudoedge{u_l} r_l=r_0\\
	&C_2:  t_0 \pseudoedge{v_1} t_1 \pseudoedge{v_2} \cdots\pseudoedge{v_p} t_p=t_0
 \end{align*}
	each of length at most $\frac n 2$ and $x(u_k)=x(v_1)$.
	Let $S=\{u_2,u_3\cdots u_l\}\cup\{v_2,v_3,\cdots,v_p\}$ and $T=\{u_1,v_1\}$. Clearly, $|S|,|T|\leq n$.
	We now show that $\Phi_f$ cannot represent any set $S'$ such that
	\[ 
		S\subseteq S' \subseteq [m]\setminus T.
	\]
In particular $\Phi_f$ cannot represent the set $S$. Following a similar reasoning as in the earlier case, we have the
assignment $\sigma_{S'}$ must assign all locations in $\{r_1,r_2,\cdots, r_l=r_0\}$ the same bit $b$ and
all locations in $\{t_1,t_2,\cdots,t_p\}$ the same bit $b'$. Since $u_1\notin S'$, 
the locations in $\{y(u_1),z(u_1)\}=\{r_0,r_1\}$ cannot be assigned both 0's and since both of them are assigned
the same bit $b$, it must be that $b=1$. Now, since $u_k\in S'$ and $\{y(u_k),z(u_k)\}=\{r_{k-1},r_k\}$ are 
both assigned $b=1$, $x(u_k)$ must be assigned 1. $x(u_k)=x(v_1)$ is assigned 1 and the locations $y(v_1), z(v_1)$ are
assigned the bit $b'$ implies $f(A[x(v_1)],B[y(v_1)],C[z(v_1)])=f(1,b',b')=1$. But $v_1\notin S'$. A contradiction.
\end{description}
\end{proof}

\begin{proof}[Proof of Lemma~\ref{lem:densityForces}]
If the scheme uses space less and (P1) above does not hold then we show (P2) above must hold.

In the graph $G_{B,C}([m])$, as long as the number of edges is at least
$6s^{1+{\lfloor\frac n4\rfloor}}$ we can find a cycle
of length at most $\frac n2$ from Proposition~\ref{prop:Moore}; since, the average degree then would be at least 
\[\frac{6s^{1+{\lfloor\frac n4\rfloor}}}{2s}\geq3s^{\frac 1{\lfloor\frac n4\rfloor}}\geq 2,\]
and since \[(3s^{\frac 1{\lfloor\frac n4\rfloor}}-1)^{\lfloor\frac n4\rfloor}>2s.\] 

We repeatedly remove cycles: $C_1, C_2,\cdots$ from the graph $G_{B,C}$  
each of length at most $\frac n2$ using Proposition~\ref{prop:Moore} till no more cycle remains; then, the number of remaining edges in the graph will be at most 
$6s^{1+{\lfloor\frac n4\rfloor}}$; since,
\[
	s< \frac{1}{7}m^{1-\frac{1}{\lfloor\frac n 4\rfloor+1}}\implies m>7s^{1+\frac 1{\lfloor\frac n4\rfloor}}.
\]
Thus, the sum of the lengths of the removed cycles exceeds $s$. Now, for any two edges labelled $u$, $v$ in a removed cycle,
$z(u)\neq z(v)$; otherwise (P1) holds. Thus, the sum of the probe locations in $C$ ($z$ probe locations) of the labels of the edges appearing in the cycles
exceeds $s$ and hence some two cycles must have edge labels $u$, $v$ respectively, such that $z(u)=z(v)$.
\end{proof}

The same argument works for $\bar{f}$ when in Lemma~\ref{lem:forcedImpliesContradiction}, we try to represent any set $S'$ such that
\[
	T\subseteq S' \subseteq [m]\setminus S.
\]


\subsection{Dimension argument}

In the proofs below we make use of algebraic arguments.
The query functions considered in this section admit a dimension argument. The argument will go as follows. To each element in the
universe we will associate a vector. If the space used by a scheme is small, then all these vectors will reside in a vector space of 
small dimension, which, in turn, will force a linear dependence between them. We will then argue that if we keep one element $u$ aside
and not store any other elements appearing in the linear dependence, then the scheme will be left with no choice for $u$ leading to a contradiction.

We now develop a common framework to prove lower bounds for the query functions- PARITY and $(x,y,z)\mapsto 1$ iff $x+y+z\neq1$ (over $\mathbb R$).
We assume the query function is $f$, where $f$ could be either of these two functions. Fix a scheme $\Phi_f$ for the query function $f$,
where the memory is a bit array $A[1,\cdots,s]$ consisting of $s$ locations.
For any element $u\in[m]$, the scheme $\Phi_f$ probes three distinct locations $x(u),y(u),z(u)\in A$
to determine if $u$ is in the set or not. 
Given any set $S\subseteq[m]$ of size at most $n$
the assignment $\sigma_f(S)\in \{0,1\}^{s}$ to $A$ is such that for all elements $u\in[m]$,
$f(x(u),y(u),z(u))$ is 1 iff $u\in S$.

\newcommand{\vect}{\mathsf{vector}} 

\begin{definition}[Fields $\mathbb F_2$, $\mathbb R$, vector spaces $V$, $W$, $\vect$]

	Let $\mathbb F_2=\{0,1\}$ be the field with modulo 2 arithmetic. Let $\mathbb R$ be the field of real numbers with the usual arithmetic.

	Let $V={\mathbb F}^s_2$ be the vector space of all $s$ dimensional vectors over the field $\mathbb F_2$.

	Let $W={\mathbb R}^s$ be the vector space of all $s$ dimensional vectors over the field $\mathbb R$.

	For an element $u\in[m]$, define $\vect(u)$ to be an $s$-dimensional vector which contains three 1's in the positions $x(u),y(u)$ and $z(u)$
	and 0's everywhere else. Clearly $\vect(u)\in V\subseteq W$.
\end{definition}

\subsubsection{PARITY}
Now we fix $f$ to be the parity function. 

\begin{proposition}\label{prop:dotAssignment}
	For a set $S\subseteq [m]$ of size at most $n$, the assignment $\sigma_f(S)$ is such that for all $u\in [m]$, the dot product of the two vectors $\vect(u)$ and $\sigma_f(S)$ in $V$ is 1 iff $u\in S$. 
\end{proposition}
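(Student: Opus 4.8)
The plan is to simply unwind the two definitions involved --- the $\mathbb{F}_2$ dot product and the query function $f=\mathrm{PARITY}$ --- and observe that they match the defining property of the scheme. First I would recall that, by definition, $\vect(u)\in V={\mathbb F}_2^s$ has a $1$ exactly in the three coordinates $x(u),y(u),z(u)$ (these are distinct, as noted when $\Phi_f$ was introduced, so there really are three of them) and $0$ elsewhere. Writing $A=\sigma_f(S)$, the dot product in $V$ is then
\[
\langle \vect(u),\sigma_f(S)\rangle \;=\; \sum_{i\in[s]}\vect(u)[i]\cdot A[i] \;=\; A[x(u)]+A[y(u)]+A[z(u)] \pmod 2 ,
\]
since only the three marked coordinates contribute.

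Next I would note that the right-hand side, as an element of $\{0,1\}$, is precisely $f(A[x(u)],A[y(u)],A[z(u)])$: the three-bit PARITY function, viewed as a map $\{0,1\}^3\to\{0,1\}$, is exactly the sum of its inputs modulo $2$. Hence $\langle \vect(u),\sigma_f(S)\rangle = f(A[x(u)],A[y(u)],A[z(u)])$. Finally, by the defining property of the scheme $\Phi_f$ for the set $S$ --- namely that the assignment $\sigma_f(S)$ makes $f(A[x(u)],A[y(u)],A[z(u)])=1$ if and only if $u\in S$ --- we conclude that $\langle \vect(u),\sigma_f(S)\rangle = 1$ iff $u\in S$, which is the claim.

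There is essentially no obstacle here; the only points that need a word of care are (i) that $x(u),y(u),z(u)$ are genuinely distinct, so that $\vect(u)$ has exactly three nonzero entries and the sum above is a sum of three (not fewer) terms, and (ii) the identification of the boolean PARITY function with addition in $\mathbb{F}_2$, so that ``the query function evaluates to $1$'' and ``the dot product equals $1$'' are literally the same statement. Both are immediate from the definitions already set up, so the proof is a two-line computation.
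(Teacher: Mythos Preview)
Your proposal is correct and matches the paper's intended reasoning: the paper states this proposition without proof, treating it as immediate from the definitions, and your two-line unwinding (dot product $=A[x(u)]+A[y(u)]+A[z(u)]$ in $\mathbb{F}_2$, which is exactly the PARITY query function, which by the correctness of $\Phi_f$ equals $1$ iff $u\in S$) is precisely the justification the reader is meant to supply.
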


If $m>s$, then the set of $m$ vectors $\{\vect(u):u\in[m]\}\subseteq V$ is linearly dependent; $V$ is $s$-dimensional.
Thus, there exists distinct elements $u,v_1,\cdots,v_t\in [m]$, such that
\[
	\vect(u)=\sum_{i=1}^t \vect(v_i).
\]
Taking dot product on both sides with the assignment vector $\sigma_f(\{u\})$ for the singleton $\{u\}$, we have
\begin{align*}
	\vect(u).\sigma_f(\{u\})&=\sum_{i=1}^t \vect(v_i).\sigma_f(\{u\})\\
	\implies 1&=\sum_{i=1}^t 0=0,
	\end{align*}
	where the last step follows from Proposition~\ref{prop:dotAssignment} and the fact that  $u\in\{u\}$ and $v_i\notin\{u\}$ for all $i\in[t]$.
	A contradiction. Thus $s\geq m$.
	
\subsubsection{$(x,y,z)\mapsto1$ iff $x+y+z\neq 1$}
Now we fix $f$ to be the $(x,y,z)\mapsto1$ iff $x+y+z\neq 1$ function. 

\begin{proposition}\label{prop:dotAssignmentForR}
	For a set $S\subseteq [m]$ of size at most $n$, the assignment $\sigma_f(S)$ is such that for all $u\in [m]$, the dot product of the two vectors $\vect(u)$ and $\sigma_f(S)$ in $W$ is 0 iff $u\in S$. 
\end{proposition}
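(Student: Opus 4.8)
The plan is to unwind the definitions and reduce the claim to a single normalization of the storing scheme. Recall that $\vect(u)\in W=\mathbb R^s$ is the $0/1$ vector supported exactly on the three distinct coordinates $x(u),y(u),z(u)$, and that $\sigma_f(S)\in\{0,1\}^s$ is the array content written for $S$. Hence, over $\mathbb R$,
\[
\langle \vect(u),\sigma_f(S)\rangle \;=\; \sigma_f(S)[x(u)]+\sigma_f(S)[y(u)]+\sigma_f(S)[z(u)]\;=\;w_u(S),
\]
where $w_u(S)\in\{0,1,2,3\}$ counts the $1$'s among the three bits probed by $u$. By the definition of $f$ (namely $f(a,b,c)=1$ iff $a+b+c\neq 1$) and the correctness of the scheme ($f$ of the probed bits equals $[u\in S]$), we get $w_u(S)\neq 1$ precisely when $u\in S$. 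In particular, when $u\notin S$ we have $w_u(S)=1\neq 0$; this already gives the direction ``if $\langle\vect(u),\sigma_f(S)\rangle=0$ then $u\in S$''.

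It remains to show that when $u\in S$ the dot product is exactly $0$ and not $2$ or $3$. For this I would argue that the scheme may be taken to be in the following normal form: for every $S$ with $|S|\le n$, the assignment $\sigma_f(S)$ writes $0$ to every location in $P_S:=\bigcup_{v\in S}\{x(v),y(v),z(v)\}$. In normal form every $u\in S$ reads $(0,0,0)$, so $w_u(S)=0$, and combined with the first paragraph we obtain $\langle\vect(u),\sigma_f(S)\rangle=0$ iff $u\in S$, as claimed. Imposing the normal form amounts to this: starting from any valid assignment for $S$, set all of $P_S$ to $0$, and then $0/1$-colour the locations of $[s]\setminus P_S$ so that every $u\notin S$ still has exactly one $1$ among its surviving probes. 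Any such colouring leaves members of $S$ reading $(0,0,0)$ (so $f=1$, correct) and non-members reading a weight-$1$ pattern (so $f=0$, correct), hence is again a valid encoding of $S$.

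The main obstacle is precisely the feasibility of this recolouring: one must show that for every $S$ the constraint system ``exactly one $1$ per non-member probe triple, restricted to $[s]\setminus P_S$'' admits a solution (in particular that no $u\notin S$ has all three of its probes inside $P_S$). I would attack this by building the completion from the scheme's own assignment $\sigma_f(\emptyset)$, which already places weight exactly $1$ on every triple, and then re-routing the unique $1$ of any non-member whose $1$-bit happens to lie in $P_S$, arguing that these re-routings can be performed consistently without producing a triple of weight $0$ or $\ge 2$ for some other non-member. This feasibility argument — rather than the elementary dot-product bookkeeping of the first two paragraphs — is where the real content of the proposition lies, and it is the step I expect to require the most care.
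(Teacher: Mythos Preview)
The paper states this proposition without proof, treating it as an immediate observation. You are right to be suspicious: as literally stated, only the direction ``$\langle\vect(u),\sigma_f(S)\rangle=0 \Rightarrow u\in S$'' follows from the correctness of the scheme. For $u\in S$ one only gets $\langle\vect(u),\sigma_f(S)\rangle\in\{0,2,3\}$, not necessarily~$0$; a concrete valid scheme can have a member read the pattern $(1,1,0)$, giving dot product~$2$.

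However, your proposed remedy --- normalise the storage so that every member reads $(0,0,0)$ and then recolour the remaining locations so every non-member still has weight exactly~$1$ --- is both unnecessary and unjustified. You yourself flag the feasibility of the recolouring as the real content, and your sketch (start from $\sigma_f(\emptyset)$ and ``re-route'' the unique $1$ of any non-member whose $1$-bit lands in $P_S$) does not obviously terminate: flipping a bit to repair one non-member can destroy the weight-$1$ property for another, and these repairs can cascade. Nothing in the hypotheses rules this out.

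The clean fix is to weaken the proposition to the tautology
\[
\langle\vect(u),\sigma_f(S)\rangle = 1 \iff u\notin S,
\]
which \emph{is} immediate from the definition of $f$ and the correctness of the scheme. This weaker statement is all the subsequent dimension argument needs: dotting the dependence $\vect(u)=\sum_i\alpha_i\vect(v_i)$ with $\sigma_f(\emptyset)$ gives $1=\sum_i\alpha_i$; dotting with $\sigma_f(\{u\})$ gives $\langle\vect(u),\sigma_f(\{u\})\rangle=\sum_i\alpha_i\cdot 1=1$, contradicting $u\in\{u\}$ (which forces $\langle\vect(u),\sigma_f(\{u\})\rangle\neq 1$). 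So the proposition as printed is a slight overstatement in the paper, and the entire normalisation programme you outline can be discarded.
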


If $m>s$, then the set of $m$ vectors $\{\vect(u):u\in[m]\}\subseteq W$ is linearly dependent; $W$ is $s$-dimensional.
Thus, there exists distinct elements $u,v_1,\cdots,v_t\in [m]$ and $\alpha_1,\cdots,\alpha_t\in \mathbb R$, such that
\begin{align}
	\vect(u)&=\sum_{i=1}^t\alpha_i \vect(v_i).\label{eqn:sum}
\end{align}
Taking dot product on both sides with the assignment vector $\sigma_f(\{u\})$ for the singleton $\{u\}$, we have
\begin{align*}
	\vect(u).\sigma_f(\{u\})&=\sum_{i=1}^t\alpha_i \vect(v_i).\sigma_f(\{u\})\\
	\implies 0&=\sum_{i=1}^t \alpha_i
	\end{align*}
	where the last step follows from Proposition~\ref{prop:dotAssignmentForR} and the fact that  $u\in\{u\}$ and $v_i\notin\{u\}$ for all $i\in[t]$.

Again, taking dot product on both sides of Equation (\ref{eqn:sum}) with the assignment $\sigma_f(\emptyset)$ for the empty set, we have
\begin{align*}
	\vect(u).\sigma_f(\{u\})&=\sum_{i=1}^t\alpha_i \vect(v_i).\sigma_f(\{u\})\\
	\implies 1&=\sum_{i=1}^t \alpha_i=0
	\end{align*}
	where the last step follows from Proposition~\ref{prop:dotAssignmentForR} and the fact that  $u,v_i\notin\emptyset$ for all $i\in[t]$.
	A contradiction. Thus, $s\geq m$.

\subsection{Degree argument}

In this section we provide lower bound proofs for the query functions $(x,y,z)\mapsto (x\wedge y)\oplus z$
and $(x,y,z)\mapsto 1$ iff $x+y+z=1$. 

\subsubsection{$(x,y,z)\mapsto (x\wedge y)\oplus z$}\label{sec:xy+zLbForAllN}
Let $\Phi$ be a scheme with $(x,y,z)\mapsto (x\wedge y)\oplus z$ as the query function. 
The memory consists of three distinct bit arrays: $A[1,\cdots,s], B[1,\cdots,s]$ and $C[1,\cdots,s]$. 
For any element $u\in[m]$, the scheme probes three locations $x(u)\in A$, $y(u)\in B$ and $z(u)\in C$
to determine if $u$ is in the set or not. We treat each location as a boolean variable.
Given any set $S\subseteq[m]$ of size at most $n$
the assignment $\sigma(S)\in \{0,1\}^{3s}$ to $A,B$ and $C$ is such that for all elements $u\in[m]$,
$(x(u)\wedge y(u))\oplus z(u)$ is 1 iff $u\in S$.

We first prove that $s=\Omega(\sqrt{mn})$ by specializing the lower bound proof in~\cite{RSV2002} 
to our case.

\begin{definition}[Field $\mathbb F_2$, vector space $V$, polynomials $P_S$]

	Let $\mathbb F_2$ denote the field $\{0,1\}$ with mod 2 arithmetic. The query function
$(x,y,z)\mapsto (x\wedge y)\oplus z$ is same as $(x,y,z)\mapsto xy+z$ (over $\mathbb F_2$).

Let $V$ be the vector space over the field $\mathbb F_2$ of all multilinear polynomials
of total degree at most $2n$ in the $3s$ variables: $A[1],\cdots,A[s],$
$B[1],\cdots,B[s],$ $C[1],\cdots,C[s]$ with coefficients coming from $\mathbb F_2$.

For each set $S\subseteq[m]$, we define the polynomial $P_S$ in $3s$ variables and coefficients
coming from the field $\mathbb{F}_2$  as follows:
\[
	P_S=\prod_{u\in S} (x(u)y(u)+z(u)).
\]

We make $P_S$ multilinear by reducing the exponents of each variable using the identity $x^2=x$ for each
variable $x$. This identity holds since we will be considering only $0$-$1$ assignment to the variables.

\end{definition}

To prove the theorem for $(x,y,z)\mapsto (x \wedge y)\oplus z$, we use the following two lemmas.

\begin{lemma}\label{lm:linearIndependence}
	The set of $m \choose n$ multilinear polynomials $\{P_S: |S|=n\}$ is linearly independent in the vector space $V$.
\end{lemma}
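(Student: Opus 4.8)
The plan is to produce, for each $n$-element set $S$, a linear functional on $V$ that reads off precisely the coefficient of $P_S$ in an arbitrary $\mathbb{F}_2$-linear combination of the $P_T$. The natural candidate is evaluation at the memory assignment $\sigma(S)$ that the (putatively valid) scheme $\Phi$ uses to store $S$: the claim will reduce to the observation that these evaluation functionals are ``dual'' to the polynomials $P_T$, in the sense that evaluating $P_T$ at $\sigma(S)$ gives $1$ if $T=S$ and $0$ otherwise.

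First I would record the relevant property of $\sigma(S)$. Because $\Phi$ correctly answers membership queries for every set of size at most $n$, for each $S$ with $|S|=n$ the string $\sigma(S)\in\{0,1\}^{3s}$ satisfies: for every $u\in[m]$, the value of $x(u)y(u)+z(u)$ at $\sigma(S)$ equals $1$ if $u\in S$ and $0$ otherwise. Since every variable takes a $0$-$1$ value, reducing exponents via $x^2=x$ does not change the value of any polynomial at $\sigma(S)$; hence for any $n$-set $T$ the value $P_T(\sigma(S))$ equals $\prod_{u\in T}(x(u)y(u)+z(u))$ evaluated at $\sigma(S)$, which is $\prod_{u\in T}\mathbf{1}[u\in S]$. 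This product is $1$ exactly when $T\subseteq S$, and since $|T|=|S|=n$ that forces $T=S$; so $P_T(\sigma(S))=\delta_{S,T}$ over $\mathbb{F}_2$.

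Linear independence then follows at once. Each $P_S$ is a product of $n$ polynomials of total degree $2$, so $\deg P_S\le 2n$, and multilinearization only lowers the degree; thus $P_S\in V$. Suppose $\sum_{|T|=n}\lambda_T P_T$ is the zero element of $V$, that is, the zero function on $\{0,1\}^{3s}$. Evaluating at $\sigma(S)$ for an arbitrary $n$-set $S$, and using that evaluation is $\mathbb{F}_2$-linear, gives $\lambda_S=\sum_{|T|=n}\lambda_T P_T(\sigma(S))=0$. As $S$ was arbitrary, all $\lambda_S$ vanish and the ${m\choose n}$ polynomials $\{P_S:|S|=n\}$ are linearly independent. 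I do not expect a genuine obstacle here; the only points needing care are that multilinearization preserves evaluations on the Boolean cube (so the functionals are well defined on $V$) and that the inclusion $T\subseteq S$ collapses to equality precisely because every set in play has size exactly $n$. The actual space lower bound will come from combining this independence with the upper bound on $\dim V$ supplied by the companion dimension lemma.
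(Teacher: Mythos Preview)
Your proof is correct and is essentially the same argument the paper gives: both evaluate a putative dependence $\sum_T \lambda_T P_T = 0$ at the assignment $\sigma(S)$ and use the correctness of the scheme to obtain $P_T(\sigma(S))=\delta_{S,T}$, whence $\lambda_S=0$. The only cosmetic difference is that you phrase this via ``dual evaluation functionals'' and note explicitly that $T\subseteq S$ collapses to $T=S$ by equal cardinality, while the paper argues directly that $S\neq S'$ forces some factor to vanish.
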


\begin{lemma}\label{lm:spanningSet}
	$V$ has a spanning set of size at most ${{3s+2n}\choose {2n}}$.
\end{lemma}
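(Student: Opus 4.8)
The plan is to take the most natural spanning set --- the multilinear monomials themselves --- and simply count it.

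By construction, $V$ consists of all $\mathbb{F}_2$-linear combinations of multilinear monomials of total degree at most $2n$ in the $3s$ variables $A[1],\dots,A[s],B[1],\dots,B[s],C[1],\dots,C[s]$. Hence the set $\mathcal{M}$ of all such monomials spans $V$ (in fact it is a basis, since distinct monomials are linearly independent as formal polynomials, but we only need spanning). A multilinear monomial of total degree exactly $d$ is determined by choosing the $d$-element subset of variables occurring in it, so $|\mathcal{M}| = \sum_{d=0}^{2n} \binom{3s}{d}$.

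It then remains to verify $\sum_{d=0}^{2n}\binom{3s}{d} \le \binom{3s+2n}{2n}$. The cleanest route is a slack-variable argument: $\binom{3s+2n}{2n}$ equals the number of size-$2n$ multisets over the $3s+1$ symbols $\{A[1],\dots,C[s],w\}$ (since the number of size-$k$ multisets from $N$ symbols is $\binom{N+k-1}{k}$, here with $N=3s+1$, $k=2n$). Map a multilinear monomial of degree $d\le 2n$ in the first $3s$ variables to the multiset obtained by adjoining $2n-d$ copies of the slack symbol $w$; this is an injection of $\mathcal{M}$ into the set of those multisets (the original monomial is recovered by discarding all copies of $w$), which gives the inequality. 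Combining this with the count of $|\mathcal{M}|$ above yields $|\mathcal{M}|\le\binom{3s+2n}{2n}$, proving the lemma.

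There is essentially no obstacle here: the content is just the standard estimate $\sum_{d\le k}\binom{N}{d}\le\binom{N+k}{k}$ (which could equally well be shown by induction on $k$ using Pascal's rule), and everything else is bookkeeping. The only point worth flagging is that $V$ is a space of \emph{formal} polynomials over $\mathbb{F}_2$, so the monomials genuinely span it; no subtlety about polynomials-versus-functions enters, since we want an upper bound on the size of a spanning set rather than a lower bound on $\dim V$.
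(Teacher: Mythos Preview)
Your proof is correct and follows essentially the same approach as the paper: take the multilinear monomials of degree at most $2n$ as a spanning set and bound $\sum_{d\le 2n}\binom{3s}{d}$ by $\binom{3s+2n}{2n}$. The paper justifies this last inequality via the surjection $T\mapsto T\cap[3s]$ from $\binom{[3s+2n]}{2n}$ onto $\binom{[3s]}{\le 2n}$, whereas you use an injection into size-$2n$ multisets over $3s+1$ symbols; these are minor variants of the same counting trick.
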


Using these two lemmas, we first prove the theorem and provide the proofs of the lemmas later.

\begin{proof}[Proof]
Now, since the size of a linearly independent set is at most the size of a spanning set, using
Lemmas~\ref{lm:linearIndependence} and \ref{lm:spanningSet}, we have
\begin{align*}
	{m \choose n} &\le {{3s+2n}\choose {2n}} \\
	\implies \left({\frac m n}\right)^n &\le \left({\frac {e(3s+2n)}{2n}}\right)^{2n}\\
	\implies 3s &\ge \frac 2 e\sqrt n (\sqrt m - e\sqrt n)\\
	\implies 3s &\ge \frac{18}{10e} \sqrt{mn} \quad \text{(when $n\le \frac m{900}\{\implies e\sqrt n \le \frac 1{10}\sqrt m\}$).}
\end{align*}	

When $n\geq \frac m{900}$, the fact that the assignments to the memory for storing different sets 
of size $\lceil \frac m{900}\rceil$ are different implies that the space required is at least 
$\lg{m \choose {\lceil\frac m{900}}\rceil}\ge \Omega(m)\ge\Omega(\sqrt{mn})$.

\end{proof}

Now, we prove the two lemmas.
\begin{proof}[Proof of Lemma~\ref{lm:linearIndependence}]

First observe that any $S$ of size $n$, the polynomial $P_S$ has $n$ factors of degree 2 each.
	Hence, the degree of $P_S$ is at most $2n$.
		
	For sets $S,S'\subseteq[m]$ of size $n$ each, the evaluation of the polynomial
$P_S$ on the assignment $\sigma(S')$ is
\[
	P_{S}(\sigma(S')) = \begin{cases}
	\hfill	0 \hfill & \text{ if $S\neq S'$} \\
	\hfill	1 \hfill & \text{ if $S = S'$.} \\
	\end{cases}
\]
Since $S\neq S'$ and $|S|=|S'|=n\geq 1$, there exists $u\in S$ such that $u\notin S'$ and thus under the
assignment $\sigma(S')$, the factor $x(u)y(u)+z(u)$ in $P_S(\sigma(S'))$ evaluates to 0. While,
when $S=S'$, for each $u\in S$ the factor $x(u)y(u)+z(u)$ in $P_S(\sigma(S'))$ evaluates to 1.

In particular, this proves that $\{P_S:|S|=n\}$ has size $m\choose n$. 
Further we use this observation below to prove the lemma.
	
	Let $\sum_{S:|S|=n}\alpha_S P_S = 0$ where each $\alpha_S\in \mathbb F_2$.
	To show that the $P_S$'s are linearly independent, we need to show that each $\alpha_S$ is 0.
	Consider an arbitrary set $S'$ of size $n$, consider the assignment $\sigma(S')$
	to the variables in the above identity.
\begin{align*}	
0 &= \sum_{S:|S|=n}\alpha_S P_S(\sigma(S')) \\
  &= \alpha_{S'} P_{S'}(\sigma(S'))\ \ + \sum_{S:S\neq S',|S|=n}\alpha_S P_S(\sigma(S'))\\
&= \alpha_{S'} P_{S'}(\sigma(S')) \quad   \text{(since, $P_S(\sigma(S'))=0$ for each $S\neq S'$)}\\
&=	\alpha_{S'}   		  \quad   \text{(since, $P_{S'}(\sigma(S'))=1$)}.
	\end{align*}
\end{proof}

\begin{proof}[Proof of Lemma~\ref{lm:spanningSet}]
	The monomials of total degree at most $2n$ form a spanning set; each polynomial in $V$ can
	be written as a linear combination of these monomials.
	Thus, the size of this spanning set is
	   \begin{align*}
		    \sum_{k=0}^{2n} {{3s}\choose{k}}
		    \le &{{3s+2n}\choose{2n}},
	    \end{align*}
	    where the last inequality follows from the fact that $T\mapsto T\cap[3s]$ is an onto
	    map from ${[3s+2n]\choose{2n}}$ to ${[3s]}\choose{\le 2n}$.
    \end{proof}
    \subsubsection{$(x,y,z)\mapsto 1$ iff $x+y+z=1$}

The lower bound proof for $(x,y,z)\mapsto 1$ iff $x+y+z=1$ is similar to the lower bound proof for $(x,y,z)\mapsto (x\wedge y)\oplus z$.
The only difference here is that instead of looking at the query function over the field $\mathbb F_2$, we consider the query function over the field $\mathbb F_3$
(the set of three elements $\{0,1,2\}$ with mod 3 arithmetic). Over the field $\mathbb F_3$, the query function $(x,y,z)=1$ iff $x+y+z=1$
is same as $(x,y,z)\mapsto x+y+z+xy+yz+zx$ (a degree 2 polynomial). Accordingly, the multilinear polynomial corresponding to a set $S$ of size $n$ is defined to be
\[
	P_S=\prod_{u\in S} (x(u)+y(u)+z(u)+x(u)y(u)+y(u)z(u)+z(u)x(u)).
\]
where we reduce the exponents using the identity $x^2=x$ for each variable $x$ (this identity holds as we consider only $0$-$1$ assignments).
Notice that $P_S$ has degree at most $2n$ and the rest of the proof is same as before.

\subsection{Improved bound for $(x,y,z)\mapsto (x\wedge y)\oplus z$}
We now prove part (c) of Theorem~\ref{result3}. We extend the idea of the lower bound proof for the query function $(x,y,z)\mapsto (x\wedge y)\oplus z$ for 
    $\lg m \leq n \leq \frac m{\lg m}$ to get a better lower bound. We continue to use the framework of Section~\ref{sec:xy+zLbForAllN}.

For the scheme $\Phi$, we first define a bipartite graph $G_{A,B}$ with vertex sets $A,B\subseteq[s]$. 
The sets $A, B$ and
the edges will be determined as follows. Initially let $A=B=[s]$ and for each element 
$u\in[m]$ add an edge labelled $u$ between $x(u)\in A$ and $y(u)\in B$. 
Our goal is to ensure that there are $m'\geq \frac{4m}{5}$ edges in the graph and
all vertices have degree at least $\frac {m}{10s}\geq \frac{m'}{10s}$. 
Repeatedly delete all vertices with degree less than $\frac{m}{10s}$, to get $A$ and $B$.
We will lose at most $2s\times
\frac{m}{10s}=\frac m 5$ edges. Also delete the corresponding elements of the universe $[m]$ which appeared as labels of the deleted edges. $\Phi$
restricted to the remaining elements of the universe gives a scheme on  $m'\geq \frac {4m}5$ elements. Without loss of generality, we assume the remaining $m'$
elements form the set $[m']$. Proving
a lower bound of $\Omega(\sqrt{{m'}n \frac{\lg \frac {m'} n}{\lg \lg {m'}}})=\Omega(\sqrt{mn\frac{\lg\frac m n}{\lg \lg m}})$ on the space for this restricted scheme will prove the theorem. 

Thus, $G_{A,B}$ has $m'$ edges labelled with distinct elements from the universe $[m']\subseteq[m]$ and vertex sets $A,B\subseteq[s]$ with minimum degree at least $\frac{m'}{10s}$. The edge labelled $u\in[m']$ has endpoints $x(u)\in A$ and $y(u)\in B$.

We assume $s=\sqrt{cm'n}$ for $c=\left\lceil \frac{\lg\frac {m'}{16n}}{\lg\lg\frac{m'}{n}}\right\rceil-1$ and show a contradiction.

\begin{definition}[The parameters $s,m,m',n,c,c',D,k$ and their relations]
	In the following, we assume 
	\begin{align*}
		&c=\left\lceil\frac{\lg\frac {m'}{16n}}{\lg\lg\frac{m'}{n}}\right\rceil-1,\\
		&\lg m\leq n \leq \frac {m}{\lg m},\\
		&s=\sqrt{cm'n}.
	\end{align*}
	From before, we have	
	\begin{align*}
		&m\geq m'\geq \frac 4 5 m.
	\end{align*}
	Define 
	\begin{align*}
		&D:=\frac 1{10}\sqrt{\frac{m'}{cn}},\\
		&c':=2e\cdot e^2\cdot(64\cdot100^2)^2c\ln c,\\
		&k:=\left\lceil\frac{n}{2}(1-\frac{1}{c'})\right\rceil.
	\end{align*}

Clearly, using the above definitions we have 
minimum degree of a vertex in $G_{A,B}$ is at least $D$ (by the assumption on $s$), 
\begin{align*}
	&m'\geq \max\{2.100^2cn, 64\cdot e^2n, 2^{12^2e^2}n\},\\
	&n\geq\max\{4c',6c,64\cdot100^2c\},\\ 
	&\min\{D,c,c',n\}\geq100
\end{align*}
for all large $m'$. These relations will be used in proving the various lemmas and the theorem
below.
\end{definition}

\newcommand{\restr}{\mathsf{restriction}} 

\begin{definition}[Trap, gain, good gainer, restriction]
	In a bipartite graph $(A,B,E)$, two edges $e_1$ and $e_2$ are said to trap an edge $e=\{a,b\}$ if 
	$a\in e_1$ and $b\in e_2$.

	In a bipartite graph $(A,B,E)$, a set $S\subseteq E$ is said to gain if
	there exists two edges $e_1, e_2 \in S$ which either intersect at an end point, that is, $e_1 \cap e_2\neq \emptyset$
	or if they trap an edge $e \notin S$. 

	In a bipartite graph $(A,B,E)$, a set of edges $S\subseteq E$ of size $n$ is called a good gainer if
	every subset $S'\subseteq S$ of size $2k$ gains ($k$ is defined above).

	For a set $T\subseteq[m]$, define $\restr(T)$ to be the set of all assignments to the memory 
	specified by the storage scheme for sets which do not contain any element from $T$, 
	that is, $\restr(T)=\{\sigma(S):|S|\leq n \text{ and } T\cap S = \emptyset\}$.
\end{definition}

\begin{lemma}\label{gainImpliesGoodGainer}
	Fix a bipartite graph $G=(A,B,E)$ with  $|A|,|B|\leq s$, $|E|=m'$ and minimum degree at least $D$. A uniformly random set of edges $S\subseteq E$ of size $n$ is a good gainer with probability at least
	$\frac 1 2$ for all large $m'$.
\end{lemma}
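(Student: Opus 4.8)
The plan is to bound $\Pr[S\text{ is not a good gainer}]$ by a first-moment argument and show it is far below $\tfrac12$. The first thing to do is to pin down what it means for a $2k$-subset $S'\subseteq S$ \emph{not} to gain: no two edges of $S'$ may share an endpoint, and no two edges $\{a_i,b_i\},\{a_j,b_j\}$ of $S'$ may trap an edge outside $S'$. Since in a matching an edge $\{a_i,b_j\}$ with $i\neq j$ cannot itself lie in $S'$, the second clause just says $\{a_i,b_j\}\notin E$ for all $i\neq j$. Hence ``$S'$ does not gain'' is equivalent to ``$S'$ is an induced matching of $G$ of size $2k$'', i.e.\ a matching whose $4k$ endpoints span, in $G$, exactly its own $2k$ edges. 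So $S$ is a good gainer precisely when no $2k$ edges of $S$ form an induced matching of $G$.

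Writing $N_{2k}$ for the number of induced matchings of $G$ of size $2k$, linearity of expectation then gives
\[
\Pr[S\text{ not a good gainer}]\ \le\ N_{2k}\cdot\Pr[\text{a fixed }2k\text{-subset of }E\subseteq S]\ =\ N_{2k}\,\frac{\binom{n}{2k}}{\binom{m'}{2k}}.
\]
Because $n-2k=\Theta(n/c')$ is small, $\binom{n}{2k}=\binom{n}{n-2k}\le\bigl(en/(n-2k)\bigr)^{n-2k}\le\exp(\gamma n/c)$ for an absolute constant $\gamma$ that is \emph{tiny} --- this is exactly where the definition $c'=2e\cdot e^{2}\cdot(64\cdot100^{2})^{2}c\ln c$ earns its keep: the factor $\ln c$ cancels the $\ln\!\bigl(n/(n-2k)\bigr)\approx\ln c'$, and the huge numerical constant forces $\gamma\ll 1$. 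So it suffices to prove $N_{2k}\le\tfrac12\exp(-\gamma' n/c)\binom{m'}{2k}$ for some fixed $\gamma'>\gamma$; equivalently, a uniformly random $2k$-subset of $E$ is an induced matching with probability at most $\tfrac12\exp(-\gamma'n/c)$.

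This last estimate is the heart of the matter, and I would get it from two observations. First, a random $2k$-subset of $E$ is a matching with probability at most $\exp(-\Theta(n/(cD)))$: the minimum-degree hypothesis forces $\Omega(m'^{2}/s)$ pairs of edges that share a vertex, and with $s=\sqrt{cm'n}$ and $D=\tfrac1{10}\sqrt{m'/(cn)}$ (so $s^{2}=cm'n$ and $sD=m'/10$) the expected number of such pairs inside a random $2k$-subset is $\Theta(n/(cD))$. Second, conditioned on being a matching it is chord-free (hence genuinely induced) with probability at most $\exp(-\Theta(n/c))$: the edge density of $G$ is at least $m'/s^{2}=1/(cn)$, so among the $\binom{2k}{2}\approx n^{2}/2$ pairs of matching edges the expected number carrying a chord $\{a_i,b_j\}\in E$ is $\Theta(n/c)$. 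To make this rigorous I would enumerate induced matchings through the identity $N_{2k}=\sum_{W\subseteq A,\,|W|=2k}\prod_{a\in W}d_W(a)$ with $d_W(a)=|\{b\in N(a):N(b)\cap W=\{a\}\}|$, and split the sum according to whether the ``private-vertex count'' $\sum_{a\in W}d_W(a)=|\{b:|N(b)\cap W|=1\}|$ is of the typical order (in which case AM--GM on the product yields the exponential saving) or atypically large (a rare event, bounded by a Chernoff estimate, with the extreme ``clustered-neighbourhood'' regime dispatched separately by an expansion/Moore-bound argument in the style of the density-versus-girth proofs elsewhere in the paper). Feeding in the remaining parameter inequalities collected before the lemma ($n\ge 64\cdot100^{2}c$, $D\ge100$, $m'\ge2\cdot100^{2}cn$, etc.) is what finally closes the gap.

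I expect the genuine obstacle to be precisely this bound on $N_{2k}$: the naive ``all $2k$-matchings'' count overshoots the target by a factor of roughly $e^{n}$, so the chord-free condition must be exploited essentially optimally, and the large-deviation bookkeeping that does so is delicate and uses essentially every one of the numerical relations recorded in the parameter definition. By contrast, the combinatorial step (``does not gain'' $\Leftrightarrow$ ``induced matching''), the union bound, and the cancellation of $\ln c$ against $\ln(n/(n-2k))$ in $c'$ are routine once that estimate is available.
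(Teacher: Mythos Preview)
Your outer reduction is correct and is exactly what the paper does: union-bound over the $\binom{n}{2k}$ subsets of $S$, and bound the probability that a uniformly random $2k$-subset of $E$ fails to gain. Your rewriting as $N_{2k}\binom{n}{2k}/\binom{m'}{2k}$ is just a repackaging of this, since $N_{2k}/\binom{m'}{2k}$ is precisely that failure probability; the observation that ``does not gain'' is the same as ``induced matching'' is correct and helpful. Your treatment of the $\binom{n}{2k}$ factor via the constant $c'$ also matches the paper.

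The gap is entirely in the inner step --- showing that a uniformly random $2k$-subset of $E$ fails to gain with probability at most $\exp(-\Theta(n/c))$. The paper isolates this as a separate lemma (Lemma~\ref{densityImpliesGain}) and proves it by a two-stage sampling argument, not by counting induced matchings. Stage~I: draw $k$ vertices $v_1,\dots,v_k$ from $A$ with replacement, each with probability proportional to its degree, and one uniformly random incident edge $e_i$ for each. Stage~II: top up to $2k$ edges uniformly from the remainder. Now condition on the $v_i$'s and let $B_0\subseteq B$ be their neighbourhood. If $|B_0|\ge kD/100$, then for $S'$ not to gain no stage-II edge may touch $B_0$ (otherwise it either meets some $e_i$ or, together with some $e_i$, traps an edge outside $S'$); since at least $kD^2/100$ edges of $G$ meet $B_0$, this already costs $\exp(-\Theta(k^2D^2/m'))=\exp(-\Theta(n/c))$. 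If $|B_0|<kD/100$, the neighbourhoods of the $v_i$'s overlap heavily, and the independent stage-I edge choices themselves force a gain with the required probability: each $e_i$ must miss $\Gamma(v_i)\cap\Gamma(\{v_1,\dots,v_{i-1}\})$, and these constraints multiply. In both cases the \emph{minimum-degree} hypothesis, not merely the average edge density $m'/s^2$, is what drives the bound.

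Your proposed route through $N_{2k}=\sum_{W}\prod_{a\in W}d_W(a)$ is a reasonable starting point, but the plan you outline (AM--GM when the private-vertex count is typical, Chernoff otherwise, a Moore-type argument for the clustered regime) is not a proof; in particular your density heuristic ``$\binom{2k}{2}\cdot m'/s^2\approx n/c$ chords in expectation'' does not by itself yield the exponential tail and does not invoke the minimum-degree hypothesis, which is exactly what prevents the graph from hiding large induced matchings in a sparse corner. You have correctly located the hard step but not crossed it; the two-stage revealing argument above is one concrete way to do so.
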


\begin{lemma}\label{goodGainerImpliesLowDegree}
	If $S\subseteq [m']$ such that $E_S:=\{e\in E(G_{A,B}):\lab(e)\in S\}$ is a good gainer in $G_{A,B}$, then there exist
	a polynomial $\hat{P_S}$ of degree at most $2n-\lfloor\frac {n} {2c}\rfloor$ and a set $T_S\subseteq[m']$ such that $T_S\cap S=\emptyset$, $|T_S|
	\leq \frac n {2c}$ and 
	$P_S(\bar{x})=\hat{P_S}(\bar{x})$ for every $\bar{x}\in\restr(T_S)$.
\end{lemma}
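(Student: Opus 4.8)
The plan is to obtain $\hat P_S$ from $P_S$ by performing, on the restriction $\restr(T_S)$, two kinds of degree‑lowering rewrites, and to use the good‑gainer hypothesis (together with the minimum‑degree property of $G_{A,B}$) to guarantee that enough rewrites are available.

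First I would expand
\[
  P_S=\prod_{u\in S}\bigl(x(u)y(u)+z(u)\bigr)=\sum_{S_1\subseteq S}M_{S_1},\qquad
  M_{S_1}:=\prod_{u\in S_1}x(u)y(u)\cdot\prod_{u\in S\setminus S_1}z(u),
\]
and reduce exponents via $v^2=v$ (legitimate since we only evaluate on $0$–$1$ assignments), so that $M_{S_1}$ has degree at most $n+|S_1|$; hence only the monomials with $|S_1|>n-\floor{\frac n{2c}}$ need any reduction. Next, on $\restr(T_S)$ every $w\in T_S$ satisfies $z(w)=x(w)\wedge y(w)=x(w)y(w)$, because a stored set disjoint from $T_S$ forces the scheme to reject $w$. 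So inside a monomial an occurrence of the product $x(w)y(w)$ may be replaced by the single variable $z(w)$ (degree drops by one), and, independently, repeated occurrences of a variable may be merged. Translating to $E_S$: if $e_u,e_v\in E_S$ share a vertex, say $x(u)=x(v)$, then in any $M_{S_1}$ with $u,v\in S_1$ the factor $x(u)y(u)x(v)y(v)$ collapses to $x(u)y(u)y(v)$; if $e_u,e_v$ trap an edge $e_w$ with $\lab(w)\notin S$, say $x(w)=x(u)$ and $y(w)=y(v)$, then putting $\lab(w)$ into $T_S$ rewrites $x(u)y(u)x(v)y(v)=(x(u)y(v))(x(v)y(u))$ as $z(w)\,x(v)y(u)$ on $\restr(T_S)$. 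Either way the degree of $M_{S_1}$ drops by at least one whenever both elements of such a reducing pair lie in $S_1$, while a trapping pair adds at most one label to $T_S$.

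The argument would then assemble $\hat P_S$ and $T_S$ as follows: extract from $E_S$ a family $\mathcal R$ of pairwise‑disjoint reducing pairs with $|\mathcal R|\ge\floor{\frac n{2c}}$, let $T_S$ be the (at most $|\mathcal R|$) trapped labels it produces, and let $\hat P_S$ be the result of performing all the corresponding rewrites term by term in $\sum_{S_1}M_{S_1}$. Disjointness makes the bookkeeping robust: deleting $j:=n-|S_1|$ elements of $S$ destroys at most $j$ members of $\mathcal R$, so the reduced $M_{S_1}$ has degree at most $(n+|S_1|)-(|\mathcal R|-j)=2n-|\mathcal R|\le 2n-\floor{\frac n{2c}}$; and $T_S\cap S=\emptyset$ since by definition trapped edges carry labels outside $S$. (For $S_1\ne S$ one may also pick up an occasional extra unit when a trapped label $z(w)$ coincides with an already‑present $z(u)$.) Thus the whole lemma reduces to the extraction claim.

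The extraction is the crux and the expected main obstacle. The good‑gainer hypothesis says only that no $2k$‑edge subset of $E_S$ is free of reducing pairs, and a direct greedy peeling turns this into merely about $(n-2k)/2\approx n/(2c')$ disjoint reducing pairs — short of the target $\floor{\frac n{2c}}$ by the factor $c'/c$ that is deliberately baked into the definition of $c'$. Bridging this gap is where the minimum‑degree bound must be used: every vertex of $G_{A,B}$ has degree $D=\tfrac1{10}\sqrt{m'/(cn)}$, almost all of whose incident edges carry labels outside the small set $S$, so a good‑gainer $E_S$ in fact conceals far more reducing structure than a single greedy pass sees (a Turán‑type count already yields $\Omega(n)$ reducing pairs, predominantly of trapping type, collisions inside the sparse $E_S$ being handled separately). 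One then has to amplify this — iterating the peeling through an appropriate number of refinement rounds and using the fresh trapped edges available at each vertex to keep producing disjoint reducing pairs — until the usable degree saving reaches $\floor{\frac n{2c}}$ with $|T_S|\le\frac n{2c}$. Calibrating this amplification against the precise value $c'=2e\cdot e^2\cdot(64\cdot100^2)^2c\ln c$, together with the attendant union bounds over the choices of $S$ and of the extracted configurations, is where the real work of the proof lies.
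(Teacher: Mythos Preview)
The paper's argument is considerably simpler than what you outline, and it does not use the minimum-degree property of $G_{A,B}$ at all in this lemma.

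\emph{Reduction.} The paper never expands $P_S=\prod_{u\in S}(x(u)y(u)+z(u))$ into the monomial sum $\sum_{S_1}M_{S_1}$; it keeps the factored form throughout. In each step it selects two elements $v,w$ of the current $\hat S$ whose edges form a gaining pair, and replaces the product $(x(v)y(v)+z(v))(x(w)y(w)+z(w))$ of two degree-$2$ factors by a single degree-$3$ expression: if the edges share a vertex this follows from $x^2=x$; if they trap an edge labelled $t\notin S$, one adds $t$ to $T_S$ and uses the identity $x(t)y(t)=z(t)$ valid on $\restr(T_S)$. Either way the total degree of $\hat P_S$ drops by exactly one and $v,w$ are removed from $\hat S$. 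After $\lfloor n/(2c)\rfloor$ iterations, $\deg\hat P_S\le 2n-\lfloor n/(2c)\rfloor$ and $|T_S|\le\lfloor n/(2c)\rfloor$, with $T_S\cap S=\emptyset$ since trapped edges by definition lie outside $E_S$. Your per-monomial accounting of how many reducing pairs survive in each $S_1$ is unnecessary, and in fact has its own subtleties (variable overlaps between distinct pairs can cause the claimed ``one unit of saving per surviving pair'' to double-count multilinearisation savings).

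\emph{Extraction.} The paper runs exactly the greedy peeling you dismiss: at the start of step $j\le\lfloor n/(2c)\rfloor$ one has $|\hat S|=n-2(j-1)\ge 2k$, so some $2k$-subset of $\hat S$ gains by the good-gainer hypothesis, and hence $\hat S$ itself contains a gaining pair. The minimum-degree bound on $G_{A,B}$ is used only in the separate probabilistic lemma establishing that a random $2k$-set gains with high probability; it plays no role here. Your proposed multi-round ``amplification'' via Tur\'an counts and iterated refinement is not part of the paper's argument, and as you present it is a sketch rather than a proof.

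You did, however, put your finger on a genuine tension: with $k=\lceil\tfrac n2(1-\tfrac1{c'})\rceil$ and $c'\gg c$, the assertion $n-2(\lfloor n/(2c)\rfloor-1)\ge 2k$ is not obvious. The paper's own proof in fact writes $2k=2\lceil\tfrac n2(1-\tfrac1c)\rceil$ (with $c$, not $c'$) at this very step, under which the inequality holds trivially. So the discrepancy you flagged is handled in the paper by this $c/c'$ switch, not by any min-degree amplification.
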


Using these two lemmas, we first prove the theorem, and provide the proofs of the lemmas later.

\begin{proof}[Proof of part (c)]

To each set $S\subseteq[m']$ of size $n$ such that $E_S:=\{e\in E(G_{A,B}):\lab(e)\in S\}$ is a good gainer, we associate a
polynomial $\hat{P_S}$ of degree at most $2n-\lfloor\frac n {2c}\rfloor$ and set $T_S\subseteq [m']$ such that $|T_S|\leq \frac n{2c}$,
$T_S \cap S = \emptyset$ and
$P_S(\bar{x})=\hat{P_S}(\bar{x})$ for every $\bar{x}\in\restr(T_S)$. This is possible due to Lemma~\ref{goodGainerImpliesLowDegree}.

Let $T$ be a random subset of $[m']$ where each element of $[m']$ is independently included in $T$ with probability $\frac 1 2$. For each $S$
above, let $I_S$ be the indicator random variable of the event $S\cap T=\emptyset$ and $T_S\subseteq T$. Then, the expected sum of these
indicator random variables is

\[
	\sum_{S: E_S\text{ is good gainer}}\mathbb{E} [I_S]  
	\geq \frac 1 2 {m' \choose n}\frac 1 {2^n}\cdot\frac 1 {2^{\frac n c}}
	\geq \frac 1{2^{2n}} {m' \choose n},
\]
where in the first inequality, the factor $\frac 1 2 {m' \choose n}$  appears because there are at least that many good gainers (from
Lemma~\ref{gainImpliesGoodGainer}), the factor $\frac 1 {2^n}\cdot\frac 1 {2^{\frac n c}}$ appears because for a given good gainer $S$,
the $n$ elements in $S$ must fall outside $T$ and the elements in $T_S$ (at most $\frac n c$ of them) must fall inside $T$. The second
inequality follows from the fact that $cn\geq n + c$ which holds for $n,c \geq 2$.

Thus, there exists a $T$ for which at least  $\frac 1{2^{2n}} {m' \choose n}$ indicator random variables $I_S$'s are 1. 
Fix such
a $T$ and let $\mathcal{S}=\{S \subseteq[m']: S\text{ is a good gainer, } S\cap T=\emptyset \text{ , } T_S\subseteq T\}$. 
We have $|\mathcal{S}|\geq \frac 1{2^{2n}} {m' \choose n}$.

We first prove that the collection of polynomials $\{\hat{P_S}|S\in \mathcal{S}\}$ is linearly independent. 
First, observe that $R\subseteq T$ implies that $\restr(T)\subseteq \restr(R)$. 
Thus, for any $S\in\mathcal{S}$, $\hat{P_S}(\bar{x})=P_S(\bar{x})$ for all $\bar{x}\in\restr(T)\subseteq\restr(T_S)$. 
Since, each $S\in\mathcal{S}$ is disjoint
from $T$, the assignment $\sigma(S)\in\restr(T)$. Consequently, for all $S,S'\in\mathcal{S}$, we have the property that
\[
	\hat{P_{S}}(\sigma(S'))=
	P_{S}(\sigma(S')) = \begin{cases}
	\hfill	0 \hfill & \text{ if $S\neq S'$} \\
	\hfill	1 \hfill & \text{ if $S = S'$,} \\
	\end{cases}
\]
which can be used to prove $\{\hat{P_S}|S\in \mathcal{S}\}$ is linearly independent as we did in Lemma~\ref{lm:linearIndependence}.

Since the polynomials $\hat{P_S}$ for $S\in\mathcal{S}$ are linearly independent and have degree at most $2n-\lfloor \frac n {2c} \rfloor
\leq 2n-\frac n{3c}=2n(1-\frac 1{6c})$ (for $n\geq 6c$),
we have 
\begin{align}
	\frac 1 {2^{2n}} {m' \choose n}^n &\leq {3s + 2n-\lfloor \frac n {2c}\rfloor  \choose 2n-\lfloor \frac n {2c}\rfloor} \nonumber \\
	\implies \frac 1 {2^{2n}} \left(\frac {m'} n\right)^n &\leq \left(e\cdot\frac {3s + 2n}{2n(1 -\frac 1{4c})}\right)^{2n(1-\frac 1{6c})\nonumber}\\
	\implies 2n\left(1 -\frac 1{4c}\right)\left(\frac 1 2 \sqrt{\frac {m'} n}\right)^{\frac 1 {1 - \frac 1 {6c}}}
	&\leq 3es + 2en\nonumber\\
	\implies n\left(\frac 1 2 \sqrt{\frac {m'} n}\right)^{1+\frac 1{6c}} &\leq 3es +2en\label{firstStep}\\
	\implies n\left(\frac 1 4 \sqrt{\frac {m'}n}\right)^{1+\frac 1{6c}} &\leq 3e\sqrt{cm'n}\label{secondStep}\\
	\implies \left(\frac 1 {16}\cdot \frac {m'} n\right)^{\frac 1 {12 c}} &\leq 12 e \sqrt{c}\nonumber\\
	\implies c&\geq\frac {\lg\frac {m'}{16n}}{12\lg\lg\frac {m'}{n}}\label{lastStep},
\end{align}
where (\ref{firstStep}) holds because $c\geq \frac{1}{2}\implies 1-\frac{1}{4c}\geq \frac{1}{2}$ and $1\geq 1-(\frac{1}{6c})^2\implies 
\frac{1}{1-\frac{1}{6c}}\geq1+\frac{1}{6c}$, (\ref{secondStep}) holds for $s=\sqrt{cm'n}$ and $m'\geq 64e^2n\implies \frac1 2\sqrt{\frac{m'}{n}}-2e\geq
\frac 1 4\sqrt{\frac{m'}{n}}$, 
and (\ref{lastStep}) holds because 
$c<\frac {\lg\frac {m'}{16n}}{12\lg\lg\frac {m'}{n}}\implies \left(\frac 1 {16}\cdot \frac {m'} n\right)^{\frac 1 {12 c}}> \lg\frac{m'}{n}$ and
$m'\geq 2^{12^2e^2}n\implies\sqrt{\lg\frac{m'}{n}}\geq 12e\implies\lg{\frac{m'}{n}}\geq
12e\sqrt{\lg\frac{m'}{n}}\geq12e\sqrt c$. A contradiction.

\end{proof}

\begin{proof}[Proof of Lemma~\ref{goodGainerImpliesLowDegree}]
	For a $S\subseteq[m']$ such that $E_S$ is a good gainer, 
	we get the desired $T_S$ and $\hat{P_S}$ by running the following procedure. Recall
	$P_S=\prod_{u\in S} (x(u)y(u)+z(u))$ consists of $n$ factors each of degree $2$. 
	
	Initially we set $\hat{S} = S$, $E_{\hat{S}} = E_S$, $\hat{P_S} = P_S$, and  $T_S=\emptyset$. 
	We run the procedure for $\lfloor \frac n {2c}\rfloor$ steps. In each step we maintain the following
	invariants: $T_S\cap S=\emptyset$, $P_S(\bar{x})=\hat{P_S}(\bar{x})$ for all assignments $\bar{x}\in \restr(T_S)$.
	At the end of each step we delete two elements from each of $\hat{S}$ and $E_{\hat{S}}$. 	
	\begin{description}
	\item[IF] $E_{\hat{S}}$ has two intersecting edges $e_1$ and $e_2$ (with labels $v$ and $w$ say), that is, $x(v)=x(w)$
	or $y(v)=y(w)$,
	then we multiply out the two factors in $\hat{P_S}$ corresponding to $v$ and $w$ to get a degree 3 factor:
	\begin{align*} 
		&(x(v)y(v)+z(v))(x(w)y(w)+z(w))\\
		=&x(v)x(w)y(v)y(w)+x(v)y(v)z(w)+x(w)y(w)z(v)\\
		=&\begin{cases}
			\hfill x(v)y(v)y(w)+x(v)y(v)z(w)+x(w)y(w)z(v)\hfill & \text{if $x(v)=x(w)$}\\
			\hfill x(v)x(w)y(v)+x(v)y(v)z(w)+x(w)y(w)z(v)\hfill & \text{if $y(v)=y(w).$}
		\end{cases}
	\end{align*}

\item[ELSE] (no two edges in $E_{\hat{S}}$ intersect) since we run the procedure for only $\lfloor \frac {n}{2c}\rfloor$
	steps and delete some 2 elements at the end of each step, 
	$|\hat S|\geq n-2(\lfloor \frac n {2c} \rfloor -1) \geq 2k=2\lceil \frac n 2(1-\frac 1 c)\rceil$. 
	We invoke Lemma~\ref{gainImpliesGoodGainer} with $S'=E_{\hat{S}}$, 
	to find edges $e_1$ and $e_2$ (with labels $v$ and $w$ say) in $E_{\hat{S}}\subseteq E_S$ 
	that traps edge $e\notin E_S$ (with label $t$ say). 
	Note that $e$ is not just not in $E_{\hat{S}}$ as promised by Lemma~\ref{gainImpliesGoodGainer} but also not in $E_S$;
	otherwise $e$ and $e_1$ would be both in $E_S$ and intersecting which will satisfy the IF part. We add the element
	$t$ to the set $T_S$. Observe that for any assignment in $\restr(T_S)$, $x(t)y(t)+z(t)=0$, that is, $x(t)y(t)=z(t)$. 
	We use this relation, to simplify the product of the two factors in $\hat{P_S}$ corresponding to $v$ and $w$ 
	to get a degree 3 factor:
	\begin{align*}
	&(x(v)y(v)+z(v))(x(w)y(w)+z(w))\\
	=&x(v)x(w)y(v)y(w)+x(v)y(v)z(w)+x(w)y(w)z(v)\\
	=& 
	\begin{cases}
		\hfill	 z(t)x(w)y(v)+x(v)y(v)z(w)+x(w)y(w)z(v)\hfill & \text{if trapped edge $e=\{x(v),y(w)\}$} \\
		\hfill	 z(t)x(v)y(w)+x(v)y(v)z(w)+x(w)y(w)z(v)\hfill & \text{if trapped edge $e=\{x(w),y(v)\}$.} \\
	\end{cases}
	\end{align*}
	\item We delete $v$ and $w$ from $\hat{S}$ and delete
		$e_1$ and $e_2$ from $E_{\hat{S}}$ and repeat.
	\end{description}

	At the end of the procedure, that is, after $\lfloor\frac {n}{2c}\rfloor$ steps,
	clearly $|T_S|\leq\lfloor\frac {n}{2c}\rfloor$, $S\cap T_S=\emptyset$,
	degree of $\hat{P_S}\leq 2n-\lfloor\frac n {2c}\rfloor$, and  $\hat{P_S}(\bar{x})=P_S(\bar{x})$ 
	for all assignments $\bar{x}\in \restr(T_S)$.
\end{proof}

To prove Lemma~\ref{gainImpliesGoodGainer} we require the following lemma, which we prove later.

\begin{lemma}\label{densityImpliesGain}
	Fix any bipartite graph $G= (A,B,E)$ with $|A|,|B|\leq s$, $|E|=m'$ and minimum degree at least $D$. A uniformly random set of edges $S'\subseteq E$ of size $2k$ gains with probability at least
	$1 - \exp(-\frac n {32\cdot 100^2 c})$ for all large $m'$.
\end{lemma}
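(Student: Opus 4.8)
The plan is to estimate the probability that a uniformly random $2k$-subset $S'\subseteq E$ \emph{fails} to gain, and the first step is a structural characterisation: $S'$ fails to gain exactly when $S'$ is an \emph{induced matching} of $G$, i.e.\ its $2k$ edges are pairwise vertex-disjoint and $G$ contains no edge between the set of their $A$-endpoints and the set of their $B$-endpoints other than the $2k$ edges themselves. Indeed, if $S'$ is a matching of size $\ge 2$ and $G$ has an edge $e=\{a,b\}$ with $a$ on one edge of $S'$ and $b$ on another, then $e\notin S'$ and $e$ is trapped, so $S'$ gains; conversely, two edges of $S'$ sharing a vertex, or trapping an external edge, is precisely such a configuration. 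Hence it suffices to show that the number of induced matchings of size $2k$ in $G$ is at most $\binom{m'}{2k}\exp(-n/(32\cdot100^2c))$.

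To bound this count, reveal the edges of $S'$ in a uniformly random order $e_1,\dots,e_{2k}$, so that conditioned on $e_1,\dots,e_{j-1}$ the edge $e_j$ is uniform on the remaining $m'-(j-1)$ edges. Condition on the prefix being an induced matching, with $A$-endpoint set $A_{j-1}$ and $B$-endpoint set $B_{j-1}$ (each of size $j-1$). An edge $\{a,b\}$ extends the prefix to an induced matching iff $a\notin A_{j-1}\cup N(B_{j-1})$ and $b\notin B_{j-1}\cup N(A_{j-1})$ (neighbourhoods taken in $G_{A,B}$); call the remaining edges \emph{bad}. Using minimum degree $\ge D$: the edges whose $A$-endpoint lies in $N(B_{j-1})\supseteq A_{j-1}$ number $\sum_{a\in N(B_{j-1})}\deg(a)\ge D\,|N(B_{j-1})|$, of which only $j-1$ are prefix edges, so the number $\mathrm{bad}_j$ of bad non-prefix edges satisfies $\mathrm{bad}_j\ge D\,|N(B_{j-1})|-(j-1)$ (and symmetrically for $A_{j-1}$). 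Therefore
\[
\Pr[S'\text{ is an induced matching}]\;\le\;\prod_{j=2}^{2k}\Big(1-\frac{\mathrm{bad}_j-(j-1)}{m'-(j-1)}\Big)\;\le\;\exp\!\Big(-\frac{1}{m'}\sum_{j=2}^{2k}\big(\mathrm{bad}_j-(j-1)\big)\Big).
\]
Everything now hinges on a per-step bound of the shape $\mathrm{bad}_j\ge (j-1)D^2/C_0$ for an absolute constant $C_0$: since $D^2=m'/(100cn)$, $\sum_{j\le 2k}(j-1)\sim 2k^2$, and $k=\lceil\frac n2(1-\frac1{c'})\rceil$ is within a $(1-o(1))$ factor of $n/2$ (because $c'$ is huge), this yields an exponent of order $n/(C_0c)$, and the constants $32\cdot100^2$ here and $(64\cdot100^2)^2$ in $c'$ are chosen exactly so that this beats $n/(32\cdot100^2 c)$ and so that the union bound over the $\binom{n}{2k}=(ec')^{O(n/c')}$ size-$2k$ subsets in Lemma~\ref{gainImpliesGoodGainer} still gives probability $\ge\frac12$ (i.e.\ $c'\ge 3.2\cdot10^5\,c\ln(ec')$).

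The hard step is establishing $\mathrm{bad}_j\gtrsim (j-1)D^2$, which is equivalent to $|N(B_{j-1})|\gtrsim (j-1)D$ (or the same for $A_{j-1}$): the neighbourhoods of the prefix's endpoints must grow roughly linearly. This fails for worst-case induced matchings — e.g.\ if the $b_i$ all share a common set of about $D$ neighbours, then incidence to the prefix's endpoints alone only gives $\mathrm{bad}_j\sim (j-1)D$, losing a factor $D$, which is not enough when $D$ is large — so the argument must use the randomness to rule such degenerate prefixes out. I expect the resolution to be: conditioned on any induced-matching prefix, a uniformly random \emph{valid} extension $e_j$ adds $\Omega(D)$ new vertices to $N(B_{j-1})$ with high conditional probability (valid extensions whose $B$-endpoint has large overlap with $N(B_{j-1})$ are relatively scarce), so with overwhelming probability $|N(B_{j-1})|\gtrsim (j-1)D$ holds for every $j\le 2k$ at once, and the rare violating prefixes are absorbed by a crude union bound — this is where the large constants are spent. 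It then remains to substitute the parameter relations recorded in the Definition ($m'\ge 2\cdot100^2cn$, $n\ge 64\cdot100^2 c$, $\min\{D,c,c',n\}\ge100$, $k\ge\frac n2(1-\frac1{c'})$) and verify the inequality for all large $m'$.
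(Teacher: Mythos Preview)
Your reduction to induced matchings is correct and clean, and the sequential-revelation framework is natural. But the gap you flag is real and your speculated fix does not close it. You need $\sum_j \mathrm{bad}_j/m' \gtrsim n/c$, whereas the worst-case lower bound $\mathrm{bad}_j \ge D^2 - O(j)$ only yields $\sum_j \mathrm{bad}_j/m' \le 2kD^2/m' = O(1/c)$, off by a factor of $n$. Your proposed rescue --- that a random valid extension adds $\Omega(D)$ new vertices to $N(B_{j-1})$ with high conditional probability --- is false in general: take $|A_0|=D$ with every $b\in B$ adjacent to all of $A_0$ plus one ``private'' vertex in $A_1$; then \emph{every} valid extension adds exactly one new vertex to $N(B_{j-1})$. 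One can try to rule such graphs out via the parameter relations, but you have not done so, and it is not clear that a purely one-sided sequential argument can be made to work without a genuinely new idea.

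The paper's proof sidesteps this difficulty by changing the sampling scheme. It realises a uniform $2k$-subset in two phases: first draw $k$ vertices $v_1,\dots,v_k$ from $A$ (with replacement, degree-proportional) and for each $v_i$ a uniformly random incident edge $e_i$; then complete to $2k$ edges uniformly from the rest. Conditioning on the multiset $\{v_i\}$, it splits on the size of $B_0:=\Gamma(\{v_1,\dots,v_k\})$. If $|B_0|\ge kD/100$, the phase-II edges must avoid $\ge k(D^2/100-1)$ edges incident to $B_0$, giving exponent $\gtrsim k^2D^2/m'\asymp n/c$ --- this is morally your Case~1. If $|B_0|<kD/100$, the neighbourhoods of the $v_i$ overlap heavily, so $\sum_i |\Gamma'(v_i)|/\deg(v_i)\ge \tfrac{49}{50}k$ where $\Gamma'(v_i)=\Gamma(v_i)\cap\Gamma(\{v_1,\dots,v_{i-1}\})$; hence each $e_i$ (uniform over edges at $v_i$) lands in the already-seen region with large probability, forcing an intersection or a trapped edge, and the exponent is $\gtrsim k\asymp n$. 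The crucial point is that separating ``choose a vertex'' from ``choose an incident edge'' gives access to a second mechanism --- random edges hitting the overlap --- that your one-phase revelation cannot exploit. That second mechanism is exactly what handles the degenerate small-neighbourhood case on which your argument stalls.
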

\begin{proof}[Proof of Lemma~\ref{gainImpliesGoodGainer}]
	For a uniformly random set $S$ of $n$ edges from $E$, the probability that $S$ is a good gainer can
	be lower bounded using the union bound and Lemma~\ref{densityImpliesGain} as follows.
	\begin{align*}
		\Pr[S\text{ is a good gainer}] 
		&= 1 - \Pr[\exists S'\subseteq S, |S'|=2k, S'\text{ does not gain}]\\
		&\geq 1 - \sum_{S'\subseteq S, |S'|=2k}\Pr[S'\text{ does not gain}]\\
		&\geq 1 - {n \choose 2k}\exp\left(-\frac n {32\cdot 100^2 c}\right)\\
		&\geq 1 - {n \choose n-2k}\exp\left(-\frac n {32\cdot 100^2 c}\right)\\
		&\geq 1 - \left(\frac {en}{n-2k}\right)^{n-2k}\exp\left(-\frac n {32\cdot 100^2 c}\right).
	\end{align*}
	Now, since $2k\geq n(1-\frac 1{c'})$ therefore $n-2k \leq \frac n{c'}$, and since $2k \leq  n(1-\frac 1{c'}) + 2$
	therefore $n-2k \geq \frac n{c'} - 2 \geq \frac n{2c'}$ for $n\geq4c'$. Therefore, using these bounds we get,
	\begin{align*}
		\Pr[S\text{ is a good gainer}] 
		&\geq 1 - \left(\frac {en}{\frac n{2c'}}\right)^{\frac n{c'}}\exp\left(-\frac n {32\cdot 100^2 c}\right)\\
		&= 1 - \exp\left(n\left(\frac{\ln2ec'}{c'} - \frac 1{32\cdot 100^2 c}\right)\right)\\
		&\geq 1 - \exp\left(-\frac n{64\cdot 100^2 c}\right)\\
		&\geq \frac 1 2,
	\end{align*}
	where the last inequality holds for $n\geq 64\cdot 100^2 c$ and the second last inequality holds for 
	$c'=2e\cdot e^2\cdot(64\cdot100^2)^2c\ln c$, 
	since 
	\begin{align*}
		2e\cdot e^2\cdot 64\cdot 100^2 &\geq \ln(2e\cdot e^2\cdot 64\cdot 100^2)^2\\
	\implies 2e\cdot e^2\cdot 64\cdot 100^2 - 2 &\geq \ln(2e \cdot 64\cdot 100^2)(2e\cdot e^2 \cdot 64\cdot 100^2),
\end{align*}
and since $\ln c \geq 1$, we have
\begin{align*}
	 (2e\cdot e^2\cdot 64\cdot 100^2 - 2)\ln c + 2\ln c 
	&\geq \ln(2e \cdot 64\cdot 100^2) (2e\cdot e^2 \cdot 64\cdot 100^2) +\ln c + \ln\ln c\\
	\implies(2e\cdot e^2\cdot 64\cdot 100^2)\ln c 
	&\geq \ln((2e\cdot  64\cdot 100^2)(2e\cdot e^2 \cdot 64\cdot 100^2)c\ln c)\\
	&=\ln 2e c'\\
	\implies \frac {c'}{\ln2ec'}&\geq 64\cdot 100^2 c\\
	\implies \frac 1{32\cdot 100^2c}&\geq 2\frac {\ln 2ec'}{c'}.
	\end{align*}
\end{proof}

\begin{proof}[Proof of Lemma~\ref{densityImpliesGain}]
To aid our calculations, we pick a uniformly random set $S'$ with $2k$ edges in two steps. In step I, we randomly pick $k$ vertices: $v_1,\cdots, v_k$ 
from $A$ \underline{with replacement} with each vertex being picked with probability proportional to its degree,
that is, for all $v\in A$, $\Pr[v_i=v]=\frac {\deg(v)}{m'}$. For each $v_i$ we pick an edge $e_i$ incident on $v_i$
uniformly at random, that is, with probability $\frac 1 {\deg(v_i)}$. Thus for each edge $e\in E$ whose endpoint in $A$ is $v$,
\[
	\Pr[e_i=e]=\Pr[v_i=v]\Pr[e_i=e|v_i=v]=\frac{\deg (v_i)}{m'} \cdot \frac 1{\deg(v_i)}=\frac 1 {m'}.
\]  This way, some $t$ distinct edges 
are picked, where $t\leq k$ (note that $t$ can be strictly less than $k$ if for some $i\neq j$, $e_i=e_j$). 
In step II, we delete the edges picked in step I, and 
pick a set of $2k-t$ edges uniformly at random \underline{without replacement} from the remaining edges.

Let BAD denote the event that $S'$ does not gain.

In step I, after sampling the vertices $\{v_1,\cdots,v_k\}$  we get some fixed multiset 
of $k$ vertices with $t$ distinct vertices. It is enough to show that conditioning on each choice for $v_1,\cdots,v_k$, the conditional probability of BAD is at most $\exp\left(-\frac n {32\cdot 100^2 c}\right)$ for large $m'$. After conditioning, the only source of randomness in the calculations below comes from sampling the $2k$ edges in steps I and II.

To upper bound the probability that 
BAD occurs we have the following two cases. Let $B_0\subseteq B$ 
be the neighbourhood of $\{v_1,\cdots,v_k\}$.

\begin{description}
	\item[Case:]$|B_0|\geq \frac{KD}{100}$. For the event BAD to occur, none of
		the edges picked in step II should be incident on any vertex in $B_0$, 
		the neighbourhood of $\{v_1,\cdots,v_k\}$. For
		if some edge $f=\{a,b\}$ picked in step II is incident on a vertex $b$ in the neighbourhood of some $v_i$,
		then either $f$ and $e_i$ trap an edge not in $S$ with endpoints $b$ and $v_i$ or $f$ intersects
		some edge $e_j$ with endpoints $b$ and $v_i$. Since each
		vertex has degree at least $D$, the number of edges incident on $B_0$ is at least $\frac{kD^2}{100}$
		and after deleting the edges $\{e_1,\cdots,e_k\}$ picked in step I,
		at least $k(\frac{D^2}{100}-1)$ edges remain. Since, at
		least $k$ edges are picked in step II out of at most $m'$ edges, 
		the probability that BAD occurs can be upper bounded as follows.
		\[
			\Pr[\text{BAD}]
		\leq  \left(1 - \frac {k (\frac{D^2}{100}-1)}{m'}\right)^k
		\leq \exp\left(-\frac {k^2(D^2-100)}{100m'}\right).
		\]
		Now, using the fact that $2k\geq n(1-\frac 1 {c'})$, $D\geq\frac 1{10}\sqrt{\frac {m'}{cn}}$, we get
		\[
			\Pr[\text{BAD}]\leq \exp\left(- \frac{n(1-\frac 1{c'})^2}{4\cdot 100^2c}(1-\frac{100^2cn}{m'})\right)
			\leq \exp\left(- \frac{n(1-\frac 1{c'})^2}{8\cdot 100^2c}\right),
		\]
		where the last inequality holds since $m'\geq2\cdot100^2cn$.

	\item[Case:]$|B_0|<\frac {kD}{100}$. For each $i\in[k]$, define $\Gamma'(v_i):=\Gamma(\{v_1,\cdots,v_{i-1}\})\cap 
		\Gamma(v_i)$ and $\deg'(v_i):=|\Gamma'(v_i)|-1$, where $\Gamma$ stands for the neighbourhood function. For the event BAD to occur, the following event BAD$(v_i)$
		must occur for each $v_i$. 
		\begin{description}
			\item[When $v_i \notin\{v_1,\cdots,v_{i-1}\}$,] BAD$(v_i)\equiv e_i$ is not incident on any vertex in 
				$\Gamma'(v_i)$.
			\item[When $v_i=v_j$ for some $j<i$,] BAD$(v_i)\equiv e_i=e_j$, where $j$ is the smallest
				number such that $v_i=v_j$.
		\end{description}
		For if some BAD$(v_i)$ does not occur, then $S'$ gains; in the case when 
		$v_i\notin\{v_1,\cdots,v_{i-1}\}$ and $e_i$ is incident on some vertex $b$ in 
		$\Gamma'(\{v_i\})$, then $b$ must be the neighbour of some vertex $v_j$ where $j<i$ 
		and then either $e_i$ and $e_j$ trap an edge not in $S$ with endpoints $b$, $v_j$
		or $e_i$ intersects $e_j$ at $b$.
		Again, in the case when $v_i=v_j$ for
		some $j<i$ and $e_i\neq e_j$ (where $j$ is the smallest number for which $v_i=v_j$,
		then clearly $e_i$ and $e_j$ intersect at the vertex $v_i$.
		Thus we bound the probability of BAD as follows. 
		\[
			\Pr[\text{BAD}]
			\leq \Pr[\forall i\in[k] \ \text{BAD}(v_i)]
			= \prod_{i=1}^{k}\Pr[\text{BAD}(v_i)],
		\]
		where the equality follows from the fact that, in the conditional space, the BAD($v_i$)'s are independent events.
		Now, if $v_i\notin\{v_1,\cdots,v_{i-1}\}$, 
		\[
			\Pr[\text{BAD}(v_i)]
			\leq 1-\frac{\deg'(v_i)+1}{\deg(v_i)}\leq 1-\frac{\deg'(v_i)}{\deg(v_i)}.
		\]
		
		Else, when $v_i=v_j$ for some $j<i$,
		\[
			\Pr[\text{Bad}(v_i)]
			\leq 1-\frac{\deg'(v_i)}{\deg(v_i)}.
		\]
		Using this upper bound, we have
		\begin{align*}
			\Pr[\text{BAD}]\leq \prod_{i=1}^{k}\Pr[\text{BAD}(v_i)]
			&\leq \prod_{i=1}^k\left(1-\frac{\deg'(v_i)}{\deg(v_i)}\right)
			\leq \exp \left(-\sum_{i=1}^k\frac{\deg'(v_i)}{\deg(v_i)}\right)\\
			&\leq \exp(-\frac{49}{50}k) \leq \exp\left(-\frac{49}{100}n(1-\frac 1 {c'})\right),
		\end{align*}
	where the second last inequality can be shown as follows. \newline
	Since $B_0=\mathop{\dot{\bigcup}}_{i=1}^k \Gamma(v_i)\setminus \Gamma'(v_i)$, we have
		\begin{align*}
			\sum_{i=1}^k \left(\deg(v_i) - (\deg'(v_i)+1)\right) & \leq \frac{kD}{100}\\
			\implies \sum_{i=1}^k \deg(v_i)\left(1-\frac{\deg'(v_i)}{\deg(v_i)}\right) & \leq \frac{kD}{100}+k\\
			\implies D\sum_{i=1}^k \left(1-\frac{\deg'(v_i)}{\deg(v_i)}\right) & \leq \frac{kD}{100} +k\\
			\implies \sum_{i=1}^k \left(1-\frac{\deg'(v_i)}{\deg(v_i)}\right) & \leq \frac{k}{100} +\frac k D\\
			\implies \sum_{i=1}^k -\frac{\deg'(v_i)}{\deg(v_i)} & \geq k - (\frac {k}{100}+\frac k D)\\
			& \geq \frac {49}{50}k,
		\end{align*}
		where the last inequality follows from the fact that $D\geq 100$ for large $m'$.
\end{description}

Thus combining the two cases, for large $m'$, the probability that a random set of $2k$ edges gains 
is at least 
\begin{align*}
1 - \max\{\exp\left(- \frac{n(1-\frac 1{c'})^2}{8\cdot 100^2c}\right), \exp\left(-\frac{49}{100}n(1-\frac 1 {c'})\right)\}\\
\geq 1-\exp\left(- \frac{n(1-\frac 1{c'})^2}{8\cdot 100^2c}\right)
\geq
1-\exp\left(-\frac n {32\cdot 100^2 c}\right),
\end{align*}
where the last two inequalities follow from the fact that $c',c\geq2$.

\end{proof}



\bibliography{non-adaptive}

\begin{thebibliography}{10}

\bibitem{AF2009}
Noga Alon and Uriel Feige.
\newblock On the power of two, three and four probes.
\newblock In {\em Proceedings of the Twentieth Annual {ACM-SIAM} Symposium on
  Discrete Algorithms, {SODA} 2009, New York, NY, USA, January 4-6, 2009},
  pages 346--354, 2009.
\newblock URL: \url{http://dl.acm.org/citation.cfm?id=1496770.1496809}.

\bibitem{AHL2002}
Noga Alon, Shlomo Hoory, and Nathan Linial.
\newblock The {Moore} bound for irregular graphs.
\newblock {\em Graphs and Combinatorics}, 18(1):53--57, 2002.
\newblock URL: \url{http://dx.doi.org/10.1007/s003730200002}, \href
  {http://dx.doi.org/10.1007/s003730200002} {\path{doi:10.1007/s003730200002}}.

\bibitem{BR2009}
Ajesh Babu and Jaikumar Radhakrishnan.
\newblock An entropy based proof of the {Moore} bound for irregular graphs.
\newblock {\em CoRR}, abs/1011.1058, 2010.
\newblock URL: \url{http://arxiv.org/abs/1011.1058}.

\bibitem{BMRV2002}
Harry Buhrman, Peter~Bro Miltersen, Jaikumar Radhakrishnan, and Srinivasan
  Venkatesh.
\newblock Are bitvectors optimal?
\newblock {\em {SIAM} J. Comput.}, 31(6):1723--1744, 2002.
\newblock URL: \url{http://dx.doi.org/10.1137/S0097539702405292}, \href
  {http://dx.doi.org/10.1137/S0097539702405292}
  {\path{doi:10.1137/S0097539702405292}}.

\bibitem{GR2015}
Mohit Garg and Jaikumar Radhakrishnan.
\newblock Set membership with a few bit probes.
\newblock In {\em Proceedings of the Twenty-Sixth Annual {ACM-SIAM} Symposium
  on Discrete Algorithms, {SODA} 2015, San Diego, CA, USA, January 4-6, 2015},
  pages 776--784, 2015.
\newblock URL: \url{http://dx.doi.org/10.1137/1.9781611973730.53}, \href
  {http://dx.doi.org/10.1137/1.9781611973730.53}
  {\path{doi:10.1137/1.9781611973730.53}}.

\bibitem{GM2011}
Michael~T. Goodrich and Michael Mitzenmacher.
\newblock Invertible {Bloom} lookup tables.
\newblock In {\em 49th Annual Allerton Conference on Communication, Control,
  and Computing, Allerton 2011, Allerton Park {\&} Retreat Center, Monticello,
  IL, USA, 28-30 September, 2011}, pages 792--799, 2011.
\newblock URL: \url{http://dx.doi.org/10.1109/Allerton.2011.6120248}, \href
  {http://dx.doi.org/10.1109/Allerton.2011.6120248}
  {\path{doi:10.1109/Allerton.2011.6120248}}.

\bibitem{LMNR2014}
Moshe Lewenstein, J.~Ian Munro, Patrick~K. Nicholson, and Venkatesh Raman.
\newblock Improved explicit data structures in the bitprobe model.
\newblock In {\em Algorithms - {ESA} 2014 - 22th Annual European Symposium,
  Wroclaw, Poland, September 8-10, 2014. Proceedings}, pages 630--641, 2014.
\newblock URL: \url{http://dx.doi.org/10.1007/978-3-662-44777-2_52}, \href
  {http://dx.doi.org/10.1007/978-3-662-44777-2_52}
  {\path{doi:10.1007/978-3-662-44777-2_52}}.

\bibitem{LMSS2001}
Michael Luby, Michael Mitzenmacher, Mohammad~Amin Shokrollahi, and Daniel~A.
  Spielman.
\newblock Efficient erasure correcting codes.
\newblock {\em {IEEE} Trans. Information Theory}, 47(2):569--584, 2001.
\newblock URL: \url{http://dx.doi.org/10.1109/18.910575}, \href
  {http://dx.doi.org/10.1109/18.910575} {\path{doi:10.1109/18.910575}}.

\bibitem{MHMP2015}
Ali Makhdoumi, Shao{-}Lun Huang, Muriel M{\'{e}}dard, and Yury Polyanskiy.
\newblock On locally decodable source coding.
\newblock In {\em 2015 {IEEE} International Conference on Communications, {ICC}
  2015, London, United Kingdom, June 8-12, 2015}, pages 4394--4399, 2015.
\newblock URL: \url{http://dx.doi.org/10.1109/ICC.2015.7249014}, \href
  {http://dx.doi.org/10.1109/ICC.2015.7249014}
  {\path{doi:10.1109/ICC.2015.7249014}}.

\bibitem{MP1969}
Marvin Minsky and Seymour Papert.
\newblock {\em {\em Perceptrons}}.
\newblock MIT press, Cambridge, MA, 1969.

\bibitem{RRR2001}
Jaikumar Radhakrishnan, Venkatesh Raman, and S.~Srinivasa Rao.
\newblock Explicit deterministic constructions for membership in the bitprobe
  model.
\newblock In {\em Algorithms - {ESA} 2001, 9th Annual European Symposium,
  Aarhus, Denmark, August 28-31, 2001, Proceedings}, pages 290--299, 2001.
\newblock URL: \url{http://dx.doi.org/10.1007/3-540-44676-1_24}, \href
  {http://dx.doi.org/10.1007/3-540-44676-1_24}
  {\path{doi:10.1007/3-540-44676-1_24}}.

\bibitem{RSV2002}
Jaikumar Radhakrishnan, Pranab Sen, and Srinivasan Venkatesh.
\newblock The quantum complexity of set membership.
\newblock {\em Algorithmica}, 34(4):462--479, 2002.
\newblock URL: \url{http://dx.doi.org/10.1007/s00453-002-0979-0}, \href
  {http://dx.doi.org/10.1007/s00453-002-0979-0}
  {\path{doi:10.1007/s00453-002-0979-0}}.

\bibitem{RSS2010}
Jaikumar Radhakrishnan, Smit Shah, and Saswata Shannigrahi.
\newblock Data structures for storing small sets in the bitprobe model.
\newblock In Mark de~Berg and Ulrich Meyer, editors, {\em Algorithms - {ESA}
  2010, 18th Annual European Symposium, Liverpool, UK, September 6-8, 2010.
  Proceedings, Part {II}}, volume 6347 of {\em Lecture Notes in Computer
  Science}, pages 159--170. Springer, 2010.
\newblock URL: \url{http://dx.doi.org/10.1007/978-3-642-15781-3_14}, \href
  {http://dx.doi.org/10.1007/978-3-642-15781-3_14}
  {\path{doi:10.1007/978-3-642-15781-3_14}}.

\bibitem{V2012}
Emanuele Viola.
\newblock Bit-probe lower bounds for succinct data structures.
\newblock {\em {SIAM} J. Comput.}, 41(6):1593--1604, 2012.
\newblock URL: \url{http://dx.doi.org/10.1137/090766619}, \href
  {http://dx.doi.org/10.1137/090766619} {\path{doi:10.1137/090766619}}.

\bibitem{threeAry}
Wikiversity.
\newblock The 22 becs, 3-ary boolean functions --- wikiversity, 2016.
\newblock [Online; accessed 7-August-2016].
\newblock URL:
  \url{https://en.wikiversity.org/w/index.php?title=3-ary_Boolean_functions&oldid=1587287}.

\end{thebibliography}

\end{document}